\documentclass[a4paper]{quantumarticle}
\pdfoutput=1
\usepackage[T1]{fontenc}

\usepackage[dvipsnames,table]{xcolor}
\usepackage{hyperref}
\hypersetup{
  breaklinks=true,
  colorlinks=true,
  allcolors=BlueViolet,
}

\usepackage[numbers,sort&compress]{natbib}
\usepackage{graphicx}
\graphicspath{{./figures/}} %

\usepackage[inline]{enumitem}
\setlist[enumerate]{noitemsep}

\usepackage{amsmath,amssymb,amsfonts,bm}
\DeclareMathOperator{\AcceptProb}{AcceptProb}
\DeclareMathOperator{\StabCode}{StabCode}
\DeclareMathOperator{\CSS}{CSS}
\DeclareMathOperator{\TrivCSS}{TrivCSS}
\DeclareMathOperator{\TrivSWEL}{TrivSWEL}
\DeclareMathOperator{\CNOT}{CNOT}

\DeclareMathOperator{\rank}{rank}
\DeclareMathOperator{\rowspan}{rowspan}
\DeclareMathOperator{\rows}{rows}
\DeclareMathOperator*{\argmin}{argmin}

\DeclareMathOperator{\GL}{GL}
\DeclareMathOperator{\Sp}{Sp}
\let\O\relax
\DeclareMathOperator{\O}{O}

\newcommand{\F}{\mathbb{F}}
\newcommand{\Z}{\mathbb{Z}}
\newcommand{\T}{\intercal}

\usepackage{braket}
\newcommand{\bk}{\braket}
\newcommand{\op}[2]{\ket{#1}\!\!\bra{#2}}

\usepackage{mathtools}
\DeclarePairedDelimiter{\norm}{\lVert}{\rVert}
\DeclarePairedDelimiter{\abs}{\lvert}{\rvert}
\let\set\relax
\DeclarePairedDelimiter{\set}{\{}{\}}

\usepackage{dsfont}
\newcommand{\1}{\mathds{1}}

\usepackage{stmaryrd}
\newcommand{\params}[1]{\llbracket #1 \rrbracket}

\usepackage{amsthm,thmtools,thm-restate}

\declaretheorem{lemma}
\declaretheorem{definition}

\usepackage[linesnumbered,ruled,vlined]{algorithm2e}
\SetAlgoCaptionSeparator{:}
\SetAlCapFnt{\normalsize}
\SetAlCapNameFnt{\normalsize}
\newcommand{\oper}[1]{\operatorname{\mathtt{#1}}}

\usepackage{tcolorbox}

\usepackage{tikz}
\usetikzlibrary{quantikz2}

\usepackage{nicefrac}

\usepackage{rotating}
\usepackage{subcaption}

\usepackage{makecell}

\usepackage[capitalize]{cleveref}

\usepackage{listings}
\definecolor{blue}{rgb}{0, 0.35, 0.577}
\definecolor{orange}{rgb}{0.858, 0.374, 0}
\definecolor{green}{rgb}{0, 0.501, 0.018}
\definecolor{red}{rgb}{0.702, 0, 0.008}
\definecolor{purple}{rgb}{0.459, 0.286, 0.61}
\newcommand{\mhl}[2]{
  {\setlength{\fboxsep}{1pt}
  \colorbox{#1}{\ensuremath{#2}}}
}

\title{Quantum codes from classical annealing}
\author{Michael A. Perlin}
\email{michael.perlin@jpmchase.com}
\author{Matthew Steinberg}
\affiliation{Global Technology Applied Research, JPMorganChase, New York, NY 10017, USA}
\author{Ben Criger}
\affiliation{Quantinuum, Terrington House, Hills Road, Cambridge CB2 1NL, UK}

\newcommand{\red}[1]{\textbf{\textcolor{red}{#1}}}
\newcommand{\green}[1]{\textbf{\textcolor{green}{#1}}}

\begin{document}

\begin{abstract}
  We introduce an adaptive simulated annealing algorithm to search for moderately-sized quantum error-correcting codes with high encoding rates and large distances.
  Our search targets two classes of stabilizer codes: (1) CSS codes, and (2) a subclass of CSS codes that we call ``self-dual with equivalent logicals'' (SWEL) codes, the latter of which which admit transversal implementations of logical Hadamard and phase gates that can be leveraged to construct fault-tolerant gate sets.
  The search is guided by an energy function that acts as a surrogate for the logical error rate in a code-capacity noise model, combining code distance with a count of minimum-weight logical operators to resolve the discrete plateaux that impede na\"ive distance optimization.
  For block lengths of up to $50$ physical qubits, our search finds state-of-the-art CSS and SWEL codes whose distances frequently meet or exceed the variants of the quantum Gilbert-Varshamov bound.
  In addition to providing favorable seed codes for fault-tolerant architectures based on code concatenation, the codes found in this work are promising candidates for high-rate code demonstrations on near-term quantum computing hardware.
\end{abstract}

\maketitle
\tableofcontents

\section{Introduction}

Quantum error correction (QEC) is necessary for large-scale fault-tolerant (FT) quantum computation ~\cite{lidar2013quantum, nielsen2010quantum}.
While the threshold theorem guarantees the scalability of FT quantum computation on theoretical grounds~\cite{steane2003overhead, lidar2013quantum}, the overheads of QEC for known code families push the resource requirements for practical FT quantum computation beyond the capabilities of current and near-term quantum computers~\cite{preskill2018quantum}.
As such, much recent work has been devoted towards the exploration of new code families amenable to near-term implementation~\cite{he2025performance,jin2025iceberg,self2024protecting}.

Current quantum code design research largely follows three tracks.
First, there are topological stabilizer codes, which include the surface and color codes~\cite{lidar2013quantum}.
Despite great theoretical success and a clear roadmap to achieving universal FT quantum computation, topological codes incur exceptionally large qubit overheads due to the Bravyi-Poulin-Terhal (BPT) bound, which severely constrains achievable encoding rates and distances~\cite{bravyi2010tradeoffs}.
Since the surface code saturates the BPT bound, further optimization at the level of abstract code construction is largely ineffectual for topological codes.

A second stream of research focuses on quantum low-density parity-check (qLDPC) codes, which come equipped with guarantees of fault tolerance and have attracted particular attention since the discovery of asymptotically ``good'' qLDPC codes~\cite{panteleev2022asymptotically, dinur2023good}.
However, qLDPC codes with favorable code parameters necessarily require long-range interactions~\cite{baspin2022quantifying, dai2025locality}, as well as complicated decoding~\cite{koutsioumpas2025colour, panteleev2021degenerate} and logical computation~\cite{williamson2026lowoverhead, swaroop2024universal, he2025extractors, baspin2025fast} techniques.
Furthermore, known asymptotically good code families cannot be used at the scale of current and near-term hardware due to large sub-leading-order overheads that dominate at small block sizes.

Finally, code concatenation has experienced a recent resurgence of interest, due mainly to the construction of architectures with constant spacetime overheads and efficient decoding algorithms~\cite{goto2024highperformance, yamasaki2024timeefficient, yoshida2025concatenate, nakai2026subsystem}.
Although such codes admit high-rate and high-distance constructions, there remains much work at the level of low-overhead FT error correction and logical gates, among other engineering challenges at finite size, as has been explored the holographic code concatenation literature~\cite{pastawski2015holographic, harris2018calderbankshorsteane, harris2020decoding, jahn2021holographic, farrelly2022parallel, steinberg2025far, steinberg2025universal}.

While qLDPC and concatenated codes are incompatible with solid-state hardware that have nearest-neighbor qubit connectivity, trapped ion and neutral atom hardware allows qubits to be routed across a device while maintaining coherence, thereby facilitating long-range interactions~\cite{bluvstein2022quantum, ransford2025helios}.
Notwithstanding recent work in designing large-scale codes with favorable code parameters~\cite{bravyi2024highthreshold, cain2026shors, zhao2026ultrahighrate}, brute-force methods still permit the discovery of moderately-sized codes with state-of-the-art code parameters.
Moderately-sized codes may also be used in larger-scale future devices with a QEC architecture based on concatenated codes.

In this work, we perform a numerical search for moderately-sized quantum stabilizer codes that encode many logical qubits.
Specifically, we consider $\params{n,k}$ codes with $n\le50$ physical qubits and $k\ge4$ logical qubits.
We first perform a search over CSS codes, which admit relatively simple FT logical state preparation circuits, QEC cycles, and decoders.
We then search over a subclass of CSS codes that we call SWEL codes, which admit transversal logical Hadamard and phase gates ($\overline{H}^{\otimes k}$ and $\overline{S}^{\otimes k}$) that can be leveraged to construct FT gate sets~\cite{tansuwannont2025clifford, sullivan2026injection}.
Despite the restriction to CSS and SWEL codes, most of the codes we discover achieve, for a given choice of physical qubits $n$ and logical qubits $k$, a code distance $d$ that meets or exceeds the distance $d_{\mathrm{QGV}}(n,k)$ guaranteed to be achievable by the quantum Gilbert--Varshamov bound for \emph{arbitrary} (non-CSS) stabilizer codes~\cite{gottesman1997stabilizer, gottesman2024surviving}.
Our search algorithm is based on simulated annealing~\cite{vanlaarhoven1987simulated}, with an energy function designed from a surrogate of the logical error rate in a code-capacity model, and an adaptive temperature schedule designed to prevent freezing in barren plateaux and suppress temperature fluctuations around deep energy basins.

The remainder of this paper is organized as follows.
We review relevant some background and establish notation in \cref{sec:background}.
We describe our numerical methods and the reasoning that underpins them in \cref{sec:methods}, showing in particular that our search covers the entire space of CSS codes in \cref{sec:search_css} and SWEL codes in \cref{sec:search_swel}.
We describe our construction of an annealing energy function in \cref{sec:energy_function}, and an adaptive temperature schedule in \cref{sec:adaptive_annealing}.
We present a greedy algorithm to reduce the stabilizer weights of numerically optimized codes in \cref{sec:reducing_weight}.
We present and discuss the best codes that we discover in \cref{sec:results}.
Finally, we provide closing remarks and discuss future directions, such as strategies for scaling up our code search, in \cref{sec:discussion}.

\section{Background}
\label{sec:background}

Here we review some relevant background and establish notation used in this work.
Unless stated otherwise, arithmetic operations in this work occur over the binary field $\F_2$, meaning that addition is taken modulo 2 and minus signs are ignored.
A notable exception to this rule is that if $v \in \F_2^n$, then the (Hamming) \emph{weight} $\norm{v} = \sum_i v_i \in \mathbb{N}_0$ of $v = (v_i)_i$ is the integer number of ones in $v$.
The occasional treatment of numbers in $\F_2$ as integers should be clear from context.

We denote the transpose of a matrix $A$ by $A^\T$, and if $A$ is invertible we let $A^{-\T} = (A^{-1})^\T = (A^\T)^{-1}$.
All one-dimensional vectors $\bm v\in\F_2^n$ are treated as column vectors unless explicitly indicated otherwise by a transpose symbol, $\bm v^\T$.
Finally, we let $\Z_N = \set{1, 2, \cdots, N}$, denote the $m \times m$ identity matrix by $\1_m$, and denote the $m\times n$ zero matrix by $\bm{0}_{m\times n}$, though to declutter notation we may at times write $\1$ or $\bm{0}$ when matrix dimensions are clear from context.

\subsection{Pauli strings}
\label{sec:pauli_strings}

Let $X = \op{1}{0} + \op{0}{1}$ and $Z = \op{0}{0} - \op{1}{1}$ be single-qubit Pauli-$X$ and Pauli-$Z$ operators.
Given $n$-bit vectors $\bm x,\bm z \in \F_2^n$, we define the Pauli-$X$ and Pauli-$Z$ strings
\begin{align}
  X(\bm x) = \bigotimes_{j=1}^n X^{x_j},
  &&
  Z(\bm z) = \bigotimes_{j=1}^n Z^{z_j},
\end{align}
and more generally the Pauli string
\begin{align}
  P(\bm x, \bm z)
  = i^{\norm{\bm x\land\bm z}} \bigotimes_{j=1}^n X^{x_j} Z^{z_j}
  \cong X(\bm x) Z(\bm z),
  \label{eq:pauli_string}
\end{align}
where $\bm x\land\bm z$ is the bitwise AND of $\bm x$ and $\bm z$, and $\cong$ denotes equality up to a phase factor.
If $\bm x,\bm z$ are collected into the $2n$-bit vector $\bm p = (\bm x, \bm z) \in \F_2^{2n}$, then we additionally define $P(\bm p) = P(\bm x,\bm z)$.
For simplicity, we mod out all Pauli strings by their phase prefactor, allowing us to uniquely identify each $n$-qubit Pauli string with an element of $\F_2^{2n}$; this simplification has no consequences for the results and discussions in this work.

The \emph{weight} of a Pauli string is the number of qubits that it addresses nontrivially:
\begin{align}
  \norm{P(\bm x,\bm z)} = \norm{\bm x\lor\bm z},
\end{align}
where $\bm x\lor\bm z$ is the bitwise OR of $\bm x$ and $\bm z$.
The symplectic inner product of Pauli strings $P(\bm p_1)$ and $P(\bm p_2)$ with $\bm p_1 = (\bm x_1, \bm z_1)$ and $\bm p_2 = (\bm x_2, \bm z_2)$ is
\begin{align}
  \bk{P(\bm p_1), P(\bm p_2)}
  = \bm p_1^\T \Omega_n \bm p_2
  = \bm x_1^\T \bm z_2 + \bm z_1^\T \bm x_2,
\end{align}
where
\begin{align}
  \Omega_n =
  \begin{pmatrix}
    \bm{0} & \1_n \\
    \1_n & \bm{0}
  \end{pmatrix}
\end{align}
is the binary symplectic form.
The symplectic inner product $\bk{P(\bm p_1), P(\bm p_2)} = 0$ if $P(\bm p_1)$ and $P(\bm p_2)$ commute, and $1$ otherwise.

\subsection{Classical linear codes}
\label{sec:classical_linear_codes}

An $[n,k]$ linear code $C$ is a subspace of $\F_2^n$ spanned by $k$ linearly independent bitstrings.
Here $n$ is called the \emph{block length} of $C$, and $\nicefrac{k}{n}$ is the encoding \emph{rate}.
Every linear code $C$ is the kernel of some parity check matrix $H\in\F_2^{r\times n}$ (for some integer $r$); that is,
\begin{align}
  C = \ker(H) = \set{\bm x \in \F_2^n: H\bm x = \bm{0}}.
\end{align}
If $C$ is an $[n,k]$ code with parity check matrix $H$, then $\rank(H) = n-k$.

\subsection{Quantum stabilizer codes}
\label{sec:quantum_stab_codes}

Let $S$ be a set of pair-wise commuting $n$-qubit Pauli strings, which is to say that if $a, b\in S$ then $\bk{a, b} = 0$.
The stabilizer code $\StabCode(S)$ is the subspace of the $n$-qubit Hilbert space $\mathcal{H}_n$ that is stabilized by all Pauli strings in $S$:
\begin{align}
  \StabCode(S) = \set{\ket\psi \in \mathcal{H}_n: s\ket\psi = \ket\psi ~\forall~ s\in S}.
\end{align}
A stabilizer code $C$ is said to be an $\params{n,k}$ code to mean that $C$ addresses $n$ \emph{physical} qubits (that is, its stabilizers are $n$-qubit strings), and $C$ encodes $k$ \emph{logical} qubits, with Hilbert space dimension $2^k$.
In this case $n$ is the block length of $C$, and $\nicefrac{k}{n}$ is its encoding rate.
A parity check matrix of $C = \StabCode(S)$ is a matrix $H\in \F_2^{r\times 2n}$ (for some integer $r$) whose rows generate the stabilizer group of $C$,
\begin{align}
  \set{P(\bm p):\bm p\in\rowspan(H)} = \bk{S},
\end{align}
where $\bk{S}$ is the commutative group generated by $S$.
If $\StabCode(S)$ is an $\params{n,k}$ code with parity check matrix $H$, then $\rank(H) = n-k$; equivalently, the stabilizer group $\bk{S}$ is generated by $n-k$ linearly independent Pauli strings.

\subsection{CSS codes}
\label{sec:css_codes_bkgd}

Let $H_X$ and $H_Z$ be parity check matrices of classical linear codes.
These parity check matrices and the respective codes are said to be \emph{CSS-compatible} iff $H_X H_Z^\T = \bm{0}$.
In this case, we can define $\CSS(H_X, H_Z)$ as the stabilizer code with parity check matrix
\begin{align}
  H =
  \begin{pmatrix}
    H_X & \bm{0} \\ \bm{0} & H_Z
  \end{pmatrix}.
\end{align}
We will refer to $(H_X,H_Z)$ as the parity check matrices of $\CSS(H_X, H_Z)$.
In words, $H_X$ and $H_Z$ define the support of $X$-type and $Z$-type stabilizers of $\CSS(H_X, H_Z)$, meaning
\begin{align}
  \CSS(H_X,H_Z) = \StabCode(X(H_X) \cup Z(H_Z)),
\end{align}
where
\begin{align}
  X(H_X) &= \set{X(\bm x): \bm x\in \rows(H_X)},
  \\
  Z(H_Z) &= \set{Z(\bm z): \bm z\in \rows(H_Z)}.
\end{align}
The condition $H_X H_Z^\T = \bm{0}$ ensures that elements of $X(H_X)$ and $Z(H_Z)$ commute.
If $\ker(H_X)$ is an $[n,k_X]$ code and $\ker(H_Z)$ is an $[n,k_Z]$ code, then $\CSS(H_X,H_Z)$ is an $\params{n,k}$ code with $k = k_X + k_Z - n$.
The code $\CSS(H_X, H_Z)$ is called \emph{self-dual} if $H_X$ and $H_Z$ correspond to the same linear code: $\ker(H_X) = \ker(H_Z)$.

\subsection{Logical operators and code distance}
\label{sec:logical_ops}

Let $C = \StabCode(S)$ be an $\params{n,k}$ stabilizer code.
Any $n$-qubit operator $O$ that preserves the code space of $C$, meaning $O\ket\psi\in C$ for all $\ket\psi\in C$, is called a \emph{logical operator} of $C$.
For simplicity, we restrict our discussion to logical operators that are elements of the Pauli group $\mathcal{P}_n = \set{P(\bm p):\bm p\in\F_2^{2n}}$.
The logical operators of a code must commute with its stabilizers; indeed, the set of all logical operators of $C$ is the normalizer of of its stabilizer group, $N(S) = \set{P\in\mathcal{P}_n:\bk{P,s}=0~\text{for all}~s\in S}$.

Elements of the stabilizer group $\bk{S}$ are trivial logical operators that act as the identity operator on $C$.
If $k>0$, then there additionally exist logical operators with nontrivial action on $C$.
The \emph{distance} of $C$ is the minimum weight of its nontrivial logical operators:
\begin{align}
  d(C) = \min\set{\norm{P}:P\in N(S), P\notin\bk{S}}.
\end{align}
An $\params{n,k}$ code with distance $d$ is called an $\params{n,k,d}$ code.

A \emph{symplectic basis of logical operators} for $C$ is a choice of logical operators that act as single-qubit Pauli $X$ and $Z$ operators on each of $k$ logical qubits encoded by $C$.
Such a basis can be represented by matrix $L\in\F_2^{2k\times 2n}$ in which rows $q\in\Z_k$ and $q+k$ represent, respectively, Pauli $X$ and $Z$ operators for logical qubit $q$ of $C$, meaning $L\Omega_n L^\T = \Omega_k$.
Every logical operator $Q$ of $C$ can be decomposed as $Q = P(\bm\ell) s$ for some vector $\bm\ell\in\rowspan(L)\subset\F_2^{2n}$ and stabilizer $s\in\bk{S}$.

If $C = \CSS(H_X, H_Z)$ is an $\params{n,k}$ CSS code, then it admits a symplectic basis of logical operators that is block-diagonal,
\begin{align}
  L =
  \begin{pmatrix}
    L_X & \bm{0} \\
    \bm{0} & L_Z
  \end{pmatrix},
\end{align}
where the matrices $L_X,L_Z\in\F_2^{k\times n}$ indicate, respectively, the support of Pauli-$X$ and Pauli-$Z$ strings that come in pair-wise anti-commuting pairs, meaning $L_X L_Z^\T = \1_k$.
In this case, we will refer to $(L_X, L_Z)$ as a symplectic basis of logical operators for $C$.

\subsection{Clifford gates}
\label{sec:clifford_gates}

Clifford gates are unitary operations that map Pauli strings to Pauli strings.
Modulo global phases and Paulis, the set of all $n$-qubit Clifford gates forms the \emph{symplectic group} $\Sp(2n,\F_2)$.
A matrix $\tau\in\F_2^{2n\times 2n}$ belongs to $\Sp(2n,\F_2)$ iff it preserves the binary symplectic form,
\begin{align}
  \tau^\T \Omega_n \tau = \Omega_n.
\end{align}
Every $n$-qubit Clifford gate $\tau$ can be expressed in a $2 \times 2$ block form called a \emph{Clifford tableau},
\begin{align}
  \tau =
  \begin{pmatrix}
    \tau_{XX} & \tau_{XZ} \\
    \tau_{ZX} & \tau_{ZZ}
  \end{pmatrix},
\end{align}
where $\tau_{XX}, \tau_{XZ}, \tau_{ZX}, \tau_{ZZ} \in \F_2^{n \times n}$.
The action of a Clifford gate $\tau$ on a Pauli string $P(\bm p) = P(\bm x, \bm z)$ is induced by matrix multiplication,
\begin{multline}
  \tau(P(\bm p))
  = P(\tau \bm p) \\
  = P(\tau_{XX} \bm x + \tau_{XZ} \bm z, \tau_{ZX} \bm x + \tau_{ZZ} \bm z).
\end{multline}
Preservation of the symplectic form implies that $\tau$ preserves the symplectic inner product (that is, commutation relations) between Pauli strings,
\begin{align}
  \bk{\tau(P(\bm p_1)), \tau(P(\bm p_2))} = \bk{P(\bm p_1), P(\bm p_2)}.
\end{align}
The action of a Clifford gate $\tau$ on a stabilizer code is induced by its action on the code's stabilizers,
\begin{align}
  \tau(\StabCode(S)) = \StabCode(\tau(S)),
\end{align}
where $\tau(S) = \set{\tau(s) : s \in S}$.
If $H$ is a matrix with $2n$ columns, such as the parity check matrix of an $n$-qubit stabilizer code, then the action of $\tau$ on $H$ is $\tau(H) = H \tau^\T$.

\subsection{Coding bounds}
\label{sec:coding_bounds}

The question of how efficiently a block of $n$ qubits can encode $k$ logical qubits at some distance $d$ is a foundational question in quantum error correction~\cite{gottesman1997stabilizer}.
Here, we briefly review some coding bounds that are relevant to this work: the quantum Hamming bound, and three variants of the quantum Gilbert-Varshamov bound.
We refer interested readers to Ref.~\cite{gottesman1997stabilizer} for further details.

\subsubsection{Quantum Hamming bound}

The quantum Hamming bound is the quantum analogue of the classical Hamming bound, also known as the sphere-packing bound, on the parameters of error-correcting codes.
Consider an $\params{n,k,d}$ stabilizer code that can correct all Pauli errors up to weight $t = \lfloor\frac{d-1}{2}\rfloor$.
The number of distinct Pauli errors of weight $w$ is $3^w \binom{n}{w}$, where $\binom{n}{w}$ reflects the choice of support for a $w$-qubit Pauli string, and there are $3$ choices of non-identity Pauli operator per addressed qubit.
The total number of correctable errors (including the trivial error with $w=0$) is then
\begin{align}
  \sum_{w=0}^t 3^w \binom{n}{w}.
\end{align}
If the code is \emph{non-degenerate}, meaning that each correctable error produces a distinct syndrome, then each error maps the $2^k$-dimensional code space to a unique orthogonal copy of the code space.
These subspaces must fit within the full $2^n$-dimensional Hilbert space without overlapping, giving rise to the constraint
\begin{align}
  2^k \sum_{w=0}^t 3^w \binom{n}{w} \le 2^n,
  &&
  \text{or}
  &&
  \sum_{w=0}^t 3^w \binom{n}{w} \le 2^{n-k}.
  \label{eq:bound_ham}
\end{align}
\cref{eq:bound_ham} provides an upper bound on the distance $d$ achievable by a non-degenerate $\params{n,k}$ stabilizer code.
Degenerate codes, in which distinct correctable errors can produce the same syndrome (due to equality modulo stabilizers), can in principle exceed the quantum Hamming bound~\cite{sarvepalli2010degenerate}.

\subsubsection{Quantum Gilbert-Varshamov bounds}

While the Hamming bound provides a constraint on code parameters from above, the quantum Gilbert-Varshamov (QGV) bound guarantees the existence of codes from below.
By a counting argument over all $\params{n,k}$ stabilizer codes~\cite[Section 2.3]{gottesman2004co}, or a greedy construction of a random stabilizer code~\cite[Section 7.1]{gottesman1997stabilizer}, it can be shown that an $\params{n,k,d}$ stabilizer code exists as long as the number of Pauli strings with weight $w<d$ does not exceed the size of the stabilizer group,
\begin{align}
  \sum_{w=0}^{d-1} 3^w \binom{n}{w} \le 2^{n-k}.
  \label{eq:bound_qgv}
\end{align}
The largest distance $d$ that satisfies \cref{eq:bound_qgv} for a fixed choice of $\params{n,k}$ is called the \emph{stabilizer QGV bound} $d_{\mathrm{QGV}}(n,k)$~\cite{gottesman1997stabilizer, lidar2013quantum}.

The QGV bound guarantees the existence of an stabilizer code with parameters $\params{n,k,d}$.
However, it does not guarantee the existence of an $\params{n,k,d}$ CSS code, which requires additional structure.
Nonetheless, a counting argument analogous to that for the QGV bound~\cite[Theorem 2]{matsumoto2017two} guarantees the existence of an $\params{n,k,d}$ CSS code $\CSS(H_X,H_Z)$ for which $\ker(H_X)$ and $\ker(H_Z)$ are, respectively, $[n,k_X]$ and $[n,k_Z]$ codes, provided that
\begin{align}
  2 \sum_{w=1}^{d-1} \binom{n}{w} < \frac{2^n-1}{2^{k_x} - 2^{n - k_x} + 2^{k_z} - 2^{n-k_z}},
  \label{eq:bound_css}
\end{align}
\cref{eq:bound_css} is obtained by setting $d_X = d_Z$ in Theorem 2 of Ref.~\cite{matsumoto2017two}.
The right-hand side of \cref{eq:bound_css} is maximized by choosing
\begin{align}
  \set{k_X, k_Z}
  = \left\{\left\lfloor \frac{n+k}{2} \right\rfloor, \left\lceil \frac{n+k}{2} \right\rceil\right\}.
  \label{eq:bound_css_k}
\end{align}
We refer to the largest distance $d$ that satisfies \cref{eq:bound_css} subject to \cref{eq:bound_css_k} for a fixed choice of $\params{n,k}$ as the \emph{CSS GV bound} $d_{\mathrm{CSS}}$.

\subsection{Simulated annealing}
\label{sec:simulated_annealing}

Our objective is nominally to find stabilizer codes with good encoding rates and large distances.
This is a challenging task: the space of stabilizer codes with block length $n$ is exponentially large in $n$, and computing the distance of a stabilizer code takes exponential time in $n$.
At a high level, we address this problem by fixing a choice of physical and logical qubit number, $\params{n,k}$, and formulating our task as a search problem targeting a code with a given distance.
To this end, we use an adaptive simulated annealing algorithm to optimize over suitably defined spaces of stabilizer codes.
Here we briefly review simulated annealing in general terms, leaving the details of our implementation to \cref{sec:methods}.

We first describe a Metropolis-Hastings algorithm~\cite{metropolis1953equation, hastings1970monte, tierney1994markov} that underpins simulated annealing.
The Metropolis-Hastings algorithm is a Markov-chain Monte Carlo algorithm for sampling from a set $\mathcal{X}$.
This algorithm requires four ingredients.
\begin{enumerate}[label=(\Alph*)]
  \item\label{input:state} An initial \emph{state} $x_0\in\mathcal{X}$.
  \item A \emph{proposal distribution} $g(y|x)$ for $x,y\in\mathcal{X}$.
  \item An \emph{energy function} $E:\mathcal{X}\to\mathbb{R}$.
  \item\label{input:temperature} A \emph{temperature} $T$.
\end{enumerate}
The proposal distribution $g(y|x)$ can be thought of as a probability distribution over \emph{neighbors} $y$ of $x$.
The algorithm works as follows.
\begin{enumerate}[label=(\arabic*)]
  \item\label{step:propose_a_move}
    For a given current state $x$, sample a neighbor $y\sim g(y|x)$, thereby proposing a \emph{move} $x\to y$.
  \item Letting $\Delta = E(y) - E(x)$ denote the energy difference of the proposed move, accept the move with probability
    \begin{align}
      \AcceptProb(\Delta,T)
      =
      \begin{cases}
        e^{-\Delta/T} & \Delta > 0 \\
        1 & \Delta \leq 0
      \end{cases}.
      \label{eq:accept_prob}
    \end{align}
  \item Return to Step \ref{step:propose_a_move}.
\end{enumerate}
The algorithm continues until some termination condition is satisfied, such as performing a maximum number of repetitions.
In the limit $T\to\infty$ this algorithm becomes a random walk, whereas in the limit $T\to 0$ it becomes a greedy algorithm to find a local minimum of $E$.
Under suitable conditions~\cite{metropolis1953equation, hastings1970monte, tierney1994markov}, the distribution of states observed at Step \ref{step:propose_a_move} of this algorithm converges to the Boltzmann distribution
\begin{align}
  \pi(x) \propto e^{-E(x)/T}.
\end{align}

Simulated annealing~\cite{kirkpatrick1983optimization, vanlaarhoven1987simulated} extends the Metropolis-Hastings algorithm by introducing a \emph{schedule} $T(t)$ for the temperature $T$ at iteration $t$.
The notion of a temperature schedule is inspired by the physical process of metallurgical annealing, wherein a metal is first heated and then gradually cooled to remove defects, resulting in a low-energy crystalline state.
Simulated annealing can thereby be thought of as an optimization algorithm to minimize the energy $E$ over $\mathcal{X}$.
Indeed, simulated annealing can be proven to find the global optimum of suitably well-behaved energy landscapes with a logarithmic cooling schedule $T(t)\propto 1/\log(t)$~\cite{geman1984stochastic}.
In practice, logarithmic cooling is too slow and global optimality is too stringent of a requirement, so simulated annealing is used as a heuristic optimization algorithm with faster cooling schedules, such as a geometric schedule $T(t)\propto e^{-t}$.

Good annealing schedules can be difficult to find because they are intrinsically problem-dependent.
Moreover, a reasonable choice of temperature at a given time throughout the algorithm may depend on the local energy landscape and history of observed states.
Rather than setting a predetermined annealing schedule, \emph{adaptive} simulated annealing adjusts the temperature throughout the algorithm according to some rule.
For example, the algorithm may consider the current time $t$ and the scale $\delta$ of recently observed energy differences, and make the temperature large or small relative to $\delta$ at early or late times $t$.
Such a rule can enforce early-time exploration and late-time refinement without detailed knowledge of the energy landscape.
Adaptive simulated annealing thereby requires specifying a rule and setting associated hyper-parameters, but may nonetheless be easier to tune than a fixed temperature schedule for simulated annealing.

\section{Methods}
\label{sec:methods}

Here we describe our numerical methods and the theory that motivates them.
We consider searching over two classes of codes: CSS codes, and the SWEL subclass of CSS codes, which were previously studied in Ref.~\cite{tansuwannont2025clifford}.
Each of our annealing runs fixes a choice of code class (CSS or SWEL), code parameters $\params{n,k}$, and the number of $X$-type and $Z$-type stabilizers of the code.
The annealing run then tries to minimize an energy function that decreases with code distance.

Altogether, ingredients \ref{input:state}--\ref{input:temperature} for the Metropolis-Hastings algorithm that underpins our adaptive simulated annealing algorithm are provided as follows:
\begin{enumerate}[label=(\Alph*)]
  \item The choice of initial state is always a random code, sampled according to a procedure described in \cref{sec:search_css} for CSS codes and \cref{sec:search_swel} for SWEL codes.
  \item The proposal distribution in our simulations is induced by a random choice of elementary gate by which to deform a code, namely a random CNOT gate for CSS codes (\cref{sec:search_css}) and a random four-qubit SWEL-preserving gate for SWEL codes (\cref{sec:search_swel}).
  \item The energy function in our simulations is motivated and defined in \cref{sec:energy_function}.
  \item The temperature throughout our simulations is set according to the adaptive annealing schedule described in \cref{sec:adaptive_annealing}.
\end{enumerate}
Finally, we describe a greedy algorithm to reduce the weights of the stabilizer generators of a code in \cref{sec:reducing_weight}.

\subsection{CSS code search}
\label{sec:search_css}

In order to define our CSS code search, it will be useful to discuss the concept of a \emph{trivial} CSS code.
\begin{tcolorbox}
  \begin{definition}
    The \emph{trivial CSS code} $\TrivCSS(k, s_X, s_Z)$ for nonnegative integers $k, s_X, s_Z$ is
    \begin{align*}
      \TrivCSS(k, s_X, s_Z) = \CSS(H_X, H_Z),
    \end{align*}
    where
    \begin{align*}
      H_X &=
      \begin{pmatrix}
        \bm{0}_{s_X\times k} & \1_{s_X} & \bm{0}_{s_X\times s_Z}
      \end{pmatrix},
      \\
      H_Z &=
      \begin{pmatrix}
        \bm{0}_{s_Z\times k} & \bm{0}_{s_Z\times s_X} & \1_{s_Z}
      \end{pmatrix}.
    \end{align*}
    Here $\bm{0}_{a\times b}$ is the $a\times b$ zero matrix and $\1_c$ is the $c\times c$ identity matrix.
    \label{def:trivial_css}
  \end{definition}
\end{tcolorbox}
Equivalently,
\begin{multline}
  \TrivCSS(k, s_X, s_Z) = \\
  \set{
    \ket\psi \otimes\ket{+}^{\otimes s_X} \otimes \ket{0}^{\otimes s_Z}
    : \ket\psi\in\mathcal{H}_k
  },
\end{multline}
where $\mathcal{H}_k$ is the Hilbert space of $k$-qubit states.
The trivial CSS code is an $\params{n,k}$ code with $n = k + s_X + s_Z$.

In principle, a na\"ive CSS code annealing run could be performed with the following ingredients:
\begin{enumerate}[label=(\Alph*)]
  \item \textbf{Initial code}:
    Choose three positive integers $(k, s_X, s_Z)$, and initialize the code $\TrivCSS(k, s_X, s_Z)$.
  \item \textbf{Random move}:
    Choose two qubits $(a,b)$ with $a\ne b$ at random, and propose the move $C\to \CNOT_{a,b}(C)$.
\end{enumerate}
Such a code search is \emph{complete} in the sense that it covers the entire space of CSS codes with fixed $k,s_X,s_Z$.
\begin{tcolorbox}
  \begin{restatable}[CSS equivalence]{theorem}{CSSEquivalence}
    Let $C = \CSS(H_X, H_Z)$ and $C' = \CSS(H_X', H_Z')$.
    If
    \begin{enumerate}[label=(\roman*)]
      \item $\rank(H_X) = \rank(H_X')$, and
      \item $\rank(H_Z) = \rank(H_Z')$,
    \end{enumerate}
    then $C' = (G_N \cdots G_2 G_1)(C)$ for some finite sequence of CNOT gates $(G_1, G_2, \cdots, G_N)$.
    \label{thm:css_equivalence}
  \end{restatable}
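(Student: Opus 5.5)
The plan is to reduce both codes to a common normal form, namely the trivial code $\TrivCSS(k, r_X, r_Z)$ of \cref{def:trivial_css} with $r_X = \rank(H_X)$, $r_Z = \rank(H_Z)$ and $k = n - r_X - r_Z$. Since CNOT gates are invertible and the inverse of $\CNOT_{a,b}$ is $\CNOT_{a,b}$ itself, it suffices to prove the following: for every CSS code $C$ with these ranks there is a CNOT circuit $U$ with $U(C) = \TrivCSS(k, r_X, r_Z)$. Then $C' = U'^{-1}U(C)$, and $U'^{-1}U$ is again a product of CNOT gates.

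First I will describe what CNOT circuits do in the tableau picture of \cref{sec:clifford_gates}. A single $\CNOT_{a,b}$ has block-diagonal tableau $\mathrm{diag}(A, A^{-\T})$ with $A = \1 + E_{ba}$, where $E_{ba}$ is the matrix unit with a single $1$ in row $b$, column $a$. By the rule $\tau(H) = H\tau^\T$, the gate therefore acts on the parity check matrices by
\begin{align*}
  H_X \mapsto H_X A^\T,
  &&
  H_Z \mapsto H_Z A^{-1}.
\end{align*}
Composing such gates, a product of CNOTs realizes the same transformation for any product $A$ of elementary transvections $\1 + E_{ba}$. Over $\F_2$ these elementary matrices generate all of $\GL(n,\F_2)$, since row reduction needs no rescaling. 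Hence every $M \in \GL(n,\F_2)$ can be realized as $H_X \mapsto H_X M$, $H_Z \mapsto H_Z M^{-\T}$ by some CNOT circuit. Note also that $\CSS(H_X,H_Z)$ depends only on the row spaces $V_X = \rowspan(H_X)$ and $V_Z = \rowspan(H_Z)$, and CSS compatibility says exactly that $V_X \subseteq V_Z^\perp$.

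The main step is the explicit construction of $M$, which I will specify through the rows $b_1,\dots,b_n$ of $M^{-1}$, indexed by the three blocks (logical, $X$, $Z$) of \cref{def:trivial_css}. Choose bases $x_1,\dots,x_{r_X}$ of $V_X$ and $z_1,\dots,z_{r_Z}$ of $V_Z$. Then make the following choices.
\begin{enumerate}[label=(\alph*)]
  \item In the $X$ block, set $b_{k+i} = x_i$.
  \item In the logical block, extend $\set{x_i}$ to a basis of $V_Z^\perp$, which has dimension $n - r_Z$, using $k$ further vectors, and place these in the logical block.
  \item In the $Z$ block, choose vectors $w_1,\dots,w_{r_Z}$ with $z_i^\T w_j = \delta_{ij}$. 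These exist because the $z_i$ are linearly independent.
\end{enumerate}
To check that the $b$'s are linearly independent, suppose some combination vanishes. Then the part of it coming from the $w$'s lies in $V_Z^\perp$. Pairing that part with each $z_i$ forces all its coefficients to be zero, and the remaining vectors form a basis of $V_Z^\perp$, so all coefficients vanish.

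Now verify the two conditions. Since $x_i = e_{k+i}^\T M^{-1}$, we get $x_i M = e_{k+i}^\T$, so $V_X M$ is spanned by the $X$-block unit vectors. For the $Z$ side, the $j$-th entry of $z_i M^{-\T}$ is $z_i^\T b_j$. This is $\delta$ on the $Z$ block and $0$ elsewhere, because all other $b_j$ lie in $V_Z^\perp$. Hence $V_Z M^{-\T}$ is spanned by the $Z$-block unit vectors, and the transformed code is exactly $\TrivCSS(k, r_X, r_Z)$.

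I expect the main obstacle to be this simultaneous normalization. A single $M$ has to straighten $V_X$ under $M$ and $V_Z$ under $M^{-\T}$ at the same time, and this is where CSS compatibility $V_X \subseteq V_Z^\perp$ is used. The duality argument above is the point that needs care. The remaining ingredients are standard: that elementary matrices generate $\GL(n,\F_2)$, and that row operations on $H_X, H_Z$ do not change the code.
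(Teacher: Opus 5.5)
Your proof is correct. Its overall structure matches the paper's: both use $\TrivCSS(k,r_X,r_Z)$ as a common normal form, compose the two normalizing maps, and turn the result into CNOTs through the correspondence with $\GL(n,\F_2)$. The paper row-reduces $\tau_X$ to a permutation and then applies SWAPs; this is the same fact you use, that elementary transvections generate $\GL(n,\F_2)$.

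The real difference is in how the normalizing gate is built. The paper proves \cref{thm:css_construction} in three steps:
\begin{enumerate}
  \item it permutes qubits and row-reduces to the standard forms $H_X = (\ell_X\ \1\ R_X)$ and $H_Z = (\ell_Z\ R_Z\ \1)$;
  \item it writes down explicit block matrices $\tau_X, \tau_Z$ and checks all nine block identities of $\tau_X^\T\tau_Z = \1$;
  \item it then needs \cref{lemma:css_clifford} to conclude that reducing $\tau_X$ to $\1$ also reduces $\tau_Z$ to $\1$.
\end{enumerate}
You instead choose a basis adapted to the flag $V_X \subseteq V_Z^\perp \subseteq \F_2^n$, together with a complement dual to a basis of $V_Z$. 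The only nontrivial check is then the linear independence of the $b_j$. Because you only ever compose CNOT tableaux, which automatically have the form $\mathrm{diag}(A, A^{-\T})$, the $Z$-block takes care of itself. You need neither \cref{lemma:css_clifford} nor the preliminary qubit permutation.

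Your route is more intrinsic, and it makes visible that CSS compatibility is used exactly once, as $V_X \subseteq V_Z^\perp$. The paper's route gives $\tau$ in closed form from the standard-form data $(\ell_X,\ell_Z,R_X,R_Z)$. It also packages the normalization as a standalone theorem, which the paper reuses to justify its random initial codes.
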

  We prove \cref{thm:css_equivalence} in \cref{sec:css_equivalence} by reducing it to the decomposition of an $n$-qubit CSS-preserving gate $\tau$ into CNOT gates.
\end{tcolorbox}
However, the na\"ive CSS code search always starts at the same distance-one code for a given choice of $(k, s_X, s_Z)$.
In order to explore a larger space, and motivated by the goodness of random classical linear codes%
~\cite{barg2002random}, we can improve the CSS code search by starting with a random CSS code.

How do we produce a random CSS code?
We can produce a random $\params{n,k}$ stabilizer code by applying a random $n$-qubit Clifford gate to any $\params{n,k}$ stabilizer code.
To produce a random CSS code, we can similarly apply a random Clifford gate to a trivial CSS code, but restrict ourselves to the subset of Clifford gates that map CSS codes to CSS codes.
The following definition captures a necessary and sufficient condition for a Clifford gate to map CSS codes to CSS codes:
\begin{tcolorbox}
  \begin{definition}
    An $n$-qubit \emph{CSS-preserving gate} is an $n$-qubit Clifford gate $\tau$ with $\tau_{XZ} = \tau_{ZX} = 0$.
    If $\tau$ is a CSS-preserving gate, then we say that $\tau \simeq (\tau_X, \tau_Z)$ with $\tau_X = \tau_{XX}$ and $\tau_Z = \tau_{ZZ}$.
  \end{definition}
\end{tcolorbox}
Sampling $\tau$ from the entire space of $n$-qubit CSS-preserving gates and applying $\tau$ to the trivial code $\TrivCSS(k, s_X, s_Z)$ samples from the entire space of CSS codes with $k$ logical qubits and $(s_X,s_Z)$ linearly independent stabilizers of type $(X,Z)$.
\begin{tcolorbox}
  \begin{restatable}[CSS construction]{theorem}{CSSConstruction}
    If $C = \CSS(H_X, H_Z)$, then
    \begin{align*}
      C = \tau(\TrivCSS(k, \rank(H_X), \rank(H_Z)))
    \end{align*}
    for some CSS-preserving gate $\tau$.
    \label{thm:css_construction}
  \end{restatable}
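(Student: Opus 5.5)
We will construct $\tau \simeq (\tau_X,\tau_Z)$ explicitly, working directly in the binary picture. First we pin down the constraint. For $\tau$ with $\tau_{XZ}=\tau_{ZX}=0$, the condition $\tau^\T\Omega_n\tau=\Omega_n$ reduces to $\tau_X^\T\tau_Z = \1_n$, i.e.\ $\tau_Z = \tau_X^{-\T}$. So CSS-preserving gates correspond exactly to invertible $\tau_X\in\GL(n,\F_2)$.

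Next we see how $\tau$ acts on a CSS code. Acting on parity check matrices by $H\mapsto H\tau^\T$ sends $\CSS(H_X,H_Z)$ to $\CSS(H_X\tau_X^\T, H_Z\tau_Z^\T)$. A stabilizer code depends only on the row spans of its checks, so it suffices to find $\tau_X$ with
\[
\rowspan(H_X) = \rowspan(H_X^{0}\tau_X^\T),\qquad \rowspan(H_Z) = \rowspan(H_Z^{0}\tau_X^{-1}).
\]
Here $(H_X^0,H_Z^0)$ are the checks of $\TrivCSS(k,s_X,s_Z)$ with $s_X=\rank(H_X)$ and $s_Z=\rank(H_Z)$. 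Equivalently, in terms of $M=\tau_X^\T$ with columns of $M^{-1}$ relevant, we need a basis $e'_1,\dots,e'_n$ of $\F_2^n$ (rows of $M$) whose dual basis $f'_1,\dots,f'_n$ (rows of $M^{-\T}$, so $e_i'\cdot f_j'=\delta_{ij}$) satisfies two conditions. First, $e'_{k+1},\dots,e'_{k+s_X}$ must span $V_X:=\rowspan(H_X)$. Second, $f'_{k+s_X+1},\dots,f'_n$ must span $V_Z:=\rowspan(H_Z)$.

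The key step, and the main obstacle, is building this dual pair of bases adapted to both subspaces; it is a linear-algebra lemma. We know $V_X\perp V_Z$ by CSS compatibility, and $k = n-s_X-s_Z\ge 0$. We build the bases as follows.
\begin{enumerate}
\item Pick a basis $z_1,\dots,z_{s_Z}$ of $V_Z$.
\item Since $V_Z\cap V_X^\perp\supseteq V_Z$ and the pairing on $\F_2^n$ is nondegenerate, choose vectors $w_1,\dots,w_{s_Z}$ with $w_i\cdot z_j=\delta_{ij}$. Adjust them so that each $w_i$ lies in a complement chosen compatibly: pick $w_i\in$ the annihilator of $V_X$ within a subspace avoiding $V_X$. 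Concretely, $V_Z\subseteq V_X^\perp$ and $V_Z$ pairs nondegenerately with $\F_2^n/V_Z^\perp$. Since $V_X\subseteq V_Z^\perp$, we can pick the $w_i$ and then subtract nothing. We only need $w_i\notin$ the span constraints below, which we check via a dimension count.
\item Set $e'_{k+s_X+i}=w_i$ and $f'_{k+s_X+i}=z_i$.
\item Set $e'_{k+1},\dots,e'_{k+s_X}$ to a basis of $V_X$.
\item Complete $e'_1,\dots,e'_k$ to a basis of $V_X\oplus\mathrm{span}(w)\oplus(\cdot)$. The requirement that $f'_j$ for $j\le k+s_X$ annihilate the $w$'s forces the duality constraints.
\end{enumerate}

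A cleaner way to organize steps 2--5 is the standard symplectic-style Gram--Schmidt. Set $W=V_X+\mathrm{span}(w)$, which has dimension $s_X+s_Z$ because $w$ pairs nondegenerately with $V_Z$ while $V_X$ pairs to zero with it. Choose $e'_1,\dots,e'_k$ spanning a complement of $W$ inside $V_Z^\perp$; this is possible since $V_Z^\perp\supseteq V_X$ has dimension $n-s_Z$ and meets $\mathrm{span}(w)$ trivially. Then $(e')$ is a basis, and its dual basis automatically has $f'_{k+s_X+i}$ determined. We verify $f'_{k+s_X+i}=z_i$ by checking $z_i\cdot e'_j=\delta$-pairing: $z_i$ kills $V_X$ and the $e'_1,\dots,e'_k$ (both lie in $V_Z^\perp$) and pairs as $\delta_{ij}$ with the $w_j$. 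Uniqueness of the dual basis then gives the claim.

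Finally, we read off $\tau_X=M^\T$, with $M$ the matrix whose rows are the $e'_j$, and set $\tau_Z=\tau_X^{-\T}$. The first step shows $\tau$ is Clifford, and the row-span identities show $\tau(\TrivCSS(k,s_X,s_Z))=C$.
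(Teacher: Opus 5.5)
Your proof is correct. It reaches the same reformulation as the paper but completes it by a different, coordinate-free route. The paper also reduces the problem to finding dual bases: it writes $\tau_X=(A_X\;H_X^\T\;B_X)$ and $\tau_Z=(A_Z\;B_Z\;H_Z^\T)$ subject to $\tau_X^\T\tau_Z=\1$. The difference is how the free blocks are filled. The paper first brings the checks to the standard forms $H_X=(\ell_X\;\1\;R_X)$ and $H_Z=(\ell_Z\;R_Z\;\1)$ by a qubit permutation and row reduction. It then writes $A_X,B_X,A_Z,B_Z$ in closed form and checks the nine block equations directly. You instead argue existence abstractly:
\begin{itemize}
\item take $w_i$ dual to a basis of $V_Z$;
\item take $e'_1,\dots,e'_k$ to be a complement of $V_X$ in $V_Z^\perp$;
\item define $\tau_Z=\tau_X^{-\T}$ by inversion, and check only the one relevant block via uniqueness of the dual basis.
\end{itemize}
The paper's matrix is in fact a special case of your construction. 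Its block $B_X=(\bm{0};\bm{0};\1)$ makes the $w_i$ standard basis vectors, which are dual to the rows of the standard-form $H_Z$ because of its identity block. Its block $A_X=(\1;\bm{0};\ell_Z)$ has columns spanning a complement of $V_X$ in $\ker H_Z=V_Z^\perp$.

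What each route buys: yours needs no standard form or accompanying WLOG permutation, requires only equality of row spans, and shows that the only input is $V_X\subseteq V_Z^\perp$, which also gives $k\ge 0$. The paper's gives an explicit, inversion-free formula for $\tau$ that can be implemented directly.

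One presentational point. Your enumerated steps 2--5 are muddled; for example, choosing $w_i$ in the annihilator of $V_X$ is neither needed nor used. Drop them in favour of the ``cleaner'' paragraph. There, ``a complement of $W$ inside $V_Z^\perp$'' is clearer as ``a complement of $V_X$ in $V_Z^\perp$,'' which is then automatically a complement of $W$ in $\F_2^n$.
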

  We prove \cref{thm:css_construction} in \cref{sec:css_construction} by constructing $\tau$ from standard-form parity check matrices of $C$.
\end{tcolorbox}
In turn, to sample the space of $n$-qubit CSS-preserving gates, we use the fact that the space of $n$-qubit CSS-preserving gates is isomorphic to the general linear group $\GL(n,\F_2)$ of invertible $n \times n$ matrices.
We use this isomorphism to sample $\tau_X \sim \GL(n,\F_2)$ and set $\tau \simeq (\tau_X, \tau_Z)$ with $\tau_Z = \tau_X^{-\T}$.
\begin{tcolorbox}
  \begin{lemma}
    If $\tau$ is a CSS-preserving gate, then $\tau_X^\T = \tau_Z^{-1}$ and $\tau_Z^\T = \tau_X^{-1}$ with $\tau_X\in\GL(n,\F_2)$.
    \label{lemma:css_clifford}
  \end{lemma}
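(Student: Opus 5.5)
The plan is to expand the symplectic condition $\tau^\T \Omega_n \tau = \Omega_n$ in block form, using that $\tau$ is block-diagonal when it is CSS-preserving, and read off the claimed identities. Since $\tau_{XZ} = \tau_{ZX} = \bm{0}$, write $\tau = \begin{pmatrix} \tau_X & \bm{0} \\ \bm{0} & \tau_Z \end{pmatrix}$. Then
\begin{align*}
  \Omega_n \tau =
  \begin{pmatrix}
    \bm{0} & \tau_Z \\ \tau_X & \bm{0}
  \end{pmatrix},
  &&
  \tau^\T \Omega_n \tau =
  \begin{pmatrix}
    \bm{0} & \tau_X^\T \tau_Z \\ \tau_Z^\T \tau_X & \bm{0}
  \end{pmatrix}.
\end{align*}
Setting this equal to $\Omega_n$ gives the two block equations $\tau_X^\T \tau_Z = \1_n$ and $\tau_Z^\T \tau_X = \1_n$. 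The second is just the transpose of the first, so a single condition carries all the content.

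From $\tau_X^\T \tau_Z = \1_n$ with square $n\times n$ matrices over the field $\F_2$, a one-sided inverse is a two-sided inverse. Hence $\tau_Z$ is invertible and $\tau_Z^{-1} = \tau_X^\T$. Likewise $\tau_Z^\T \tau_X = \1_n$ gives that $\tau_X$ is invertible with $\tau_X^{-1} = \tau_Z^\T$. In particular $\det \tau_X \neq 0$, so $\tau_X\in\GL(n,\F_2)$. Alternatively, invertibility of $\tau_X$ follows directly from $\tau$ being an invertible Clifford tableau whose diagonal blocks must then be invertible.

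There is no real obstacle here. The only point needing care is the step from a one-sided to a two-sided inverse, which I will justify by the finite-dimensionality (rank) argument: $\tau_X^\T \tau_Z = \1_n$ forces $\rank(\tau_Z) = n$. I will also remark that the converse holds: any $\tau_X\in\GL(n,\F_2)$ with $\tau_Z = \tau_X^{-\T}$ satisfies the symplectic condition by the same computation. This justifies the isomorphism with $\GL(n,\F_2)$ that the text uses for sampling.
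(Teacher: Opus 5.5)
Your proof is correct and is essentially the paper's argument: the paper applies preservation of the symplectic inner product to the pair $X(\bm x)$, $Z(\bm z)$, obtaining $\bm x^\T\bm z = \bm x^\T\tau_X^\T\tau_Z\bm z$ for all $\bm x,\bm z$, which is exactly the off-diagonal block of your matrix equation $\tau^\T\Omega_n\tau=\Omega_n$. Your rank argument for passing from a one-sided to a two-sided inverse makes explicit a step the paper leaves implicit.
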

  \begin{proof}
    Clifford gates preserve the symplectic inner product between Pauli strings, which implies that for all $\bm x, \bm z \in \F_2^n$ we must have
    \begin{align}
      \bk{X(\bm x), Z(\bm z)} &= \bk{\tau(X(\bm x)), \tau(Z(\bm z))}, \\
      \bm x^\T \bm z &= \bm x^\T \tau_X^\T \tau_Z \bm z,
    \end{align}
    so $\tau_X^\T = \tau_Z^{-1}$.
    An identical calculation with $Z(\bm z) \leftrightarrow X(\bm x)$ shows that $\tau_Z^\T = \tau_X^{-1}$.
  \end{proof}
\end{tcolorbox}
In total, our ingredients for a CSS code search are as follows:
\begin{enumerate}[label=(\Alph*)]
  \item \textbf{Initial code}:
    Choose three positive integers $(k, s_X, s_Z)$ and a random CSS-preserving gate $\tau \simeq (\tau_X, \tau_Z)$, obtained by sampling $\tau_X\sim\GL(n,\F_2)$ and setting $\tau_Z = \tau_X^{-\T}$.
    Initialize the code $\tau(\TrivCSS(k, s_X, s_Z))$.
  \item \textbf{Random move}:
    Choose two qubits $(a,b)$ with $a\ne b$ at random, and propose the move $C\to \CNOT_{a,b}(C)$.
\end{enumerate}

\subsection{SWEL code search}
\label{sec:search_swel}

Here we consider a particular subclass of CSS codes, previously studied in Ref.~\cite{tansuwannont2025clifford}.
\begin{tcolorbox}
  \begin{definition}
    A CSS code $C = \CSS(H_\star, H_\star)$ is \emph{self-dual with equivalent logicals (SWEL)} iff
    \begin{enumerate}[label=(\roman*)]
      \item\label{prop:swel_1} $C$ is self-dual, $\ker(H_X) = \ker(H_Z)$, and
      \item\label{prop:swel_2} there exists a symplectic basis of logical operators $(L_X, L_Z)$ for $C$ with $L_X = L_Z$.
    \end{enumerate}
  \end{definition}
\end{tcolorbox}
Property \ref{prop:swel_1} (self-duality) ensures that $n$-fold physical Hadamard and phase gates, $\bigotimes_{j=1}^n H$ and $\bigotimes_{j=1}^n S$, each correspond to some logical gate of an $\params{n,k}$ SWEL code.
Property \ref{prop:swel_2} further guarantees that these gates implement $k$-fold \emph{logical} Hadamard and phase gates, $\bigotimes_{j=1}^k H$ and $\bigotimes_{j=1}^k S$ (up to Pauli corrections for the phase gate), which can be leveraged to construct fault-tolerant logical gate sets~\cite{tansuwannont2025clifford, sullivan2026injection}.
Every self-dual CSS code with odd block length $n$ is a SWEL code~\cite[Corollary 1]{tansuwannont2025clifford}, and being SWEL is both sufficient and necessary for a CSS code to have the global Hadamard and phase gates in its transversal gate set~\cite[Corollary 2]{tansuwannont2025clifford}.

In order to discuss our SWEL code search, we first define a trivial SWEL code:
\begin{tcolorbox}
  \begin{definition}
    The \emph{trivial SWEL code} $\TrivSWEL(k, s)$ for nonnegative integers $k, s$ is
    \begin{align*}
      \TrivSWEL(k, s) = \CSS(H_\star, H_\star),
    \end{align*}
    where
    \begin{align*}
      H_\star =
      \begin{pmatrix}
        \bm{0}_{s\times k} & \1_s & \1_s
      \end{pmatrix}.
    \end{align*}
    Here $\bm{0}_{s\times k}$ is the $s\times k$ zero matrix and $\1_s$ is the $s\times s$ identity matrix.
    \label{def:trivial_swel}
  \end{definition}
\end{tcolorbox}
Equivalently,
\begin{multline}
  \TrivSWEL(k, s) = \set{\ket\psi \otimes \ket{\Phi_s}: \ket\psi\in\mathcal{H}_k},
\end{multline}
where $\mathcal{H}_k$ is the Hilbert space of $k$-qubit states, and
\begin{align}
  \ket{\Phi_s} = \frac1{\sqrt{2^s}} \sum_{\bm a\in\F_2^s} \ket{\bm a} \otimes\ket{\bm a},
\end{align}
is the state of $s$ Bell pairs.
The trivial SWEL code is an $\params{n,k}$ code with $n = k + 2s$.

We next seek gates that map SWEL codes to SWEL codes.
Such gates must clearly be Clifford and CSS-preserving.
We can guarantee that the CSS-preserving gate $\tau \simeq (\tau_X, \tau_Z)$ maps SWEL codes to SWEL codes by enforcing that $\tau_X = \tau_Z$, so that the $X$ and $Z$ logical operators of a SWEL code transform identically.
Indeed, the condition $\tau_X = \tau_Z$ is both sufficient and necessary for our purposes.
\begin{tcolorbox}
  \begin{restatable}[SWEL stabilization]{theorem}{SWELStabilization}
    If $\tau \simeq (\tau_X, \tau_Z)$ is a CSS-preserving gate, then $\tau_X = \tau_Z$ is a necessary and sufficient condition for $\tau$ to map every SWEL code to a SWEL code.
    \label{thrm:swel_stabilization}
  \end{restatable}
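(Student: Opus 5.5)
The plan is to prove the two directions separately. Throughout we use \cref{lemma:css_clifford}: a CSS-preserving gate $\tau\simeq(\tau_X,\tau_Z)$ satisfies $\tau_Z=\tau_X^{-\T}$. We also act on parity-check matrices by $H\mapsto H\tau_X^\T$ for $X$-type rows and $H\mapsto H\tau_Z^\T$ for $Z$-type rows. A key observation used in both directions is that self-duality forces every stabilizer vector $\bm v$ of a SWEL code to satisfy $\bm v^\T\bm v=0$, so all such vectors lie in the even-weight subspace $W=\set{\bm v:\norm{\bm v}\text{ even}}$.

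\emph{Sufficiency.} Suppose $\tau_X=\tau_Z=:M$. Then \cref{lemma:css_clifford} gives $M^\T M=\1_n$. Let $C=\CSS(H_\star,H_\star)$ be SWEL with symplectic logical basis $(L,L)$, so that $LL^\T=\1_k$. Then $\tau(C)=\CSS(H_\star M^\T,H_\star M^\T)$, which is self-dual. The transformed logicals $(LM^\T,LM^\T)$ remain logical operators of $\tau(C)$, since commutation relations are preserved. They also satisfy $(LM^\T)(LM^\T)^\T=LM^\T ML^\T=LL^\T=\1_k$, so they form a symplectic basis with $L_X=L_Z$. Hence $\tau(C)$ is SWEL. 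This direction is routine.

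\emph{Necessity.} Suppose $\tau$ maps every SWEL code to a SWEL code, and set $N=\tau_X^{-1}\tau_Z=(\tau_X^\T\tau_X)^{-1}$, which is symmetric and invertible. For qubits $a\ne b$, let $\bm v=\bm e_a+\bm e_b$. The code $\CSS(\bm v^\T,\bm v^\T)$ is a qubit permutation of $\TrivSWEL(n-2,1)$, and a qubit permutation is itself an orthogonal SWEL-preserving map by the sufficiency direction, so this code is SWEL. Its image has stabilizers $X(\tau_X\bm v)$ and $Z(\tau_Z\bm v)$, and self-duality of the image forces $\tau_X\bm v=\tau_Z\bm v$. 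Since the vectors $\bm e_a+\bm e_b$ span $W$, we get $N\bm w=\bm w$ for all $\bm w\in W$.

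Symmetry of $N$ then pins it down. For $\bm w\in W$ and any $\bm x$, we have $\bm w^\T(N-\1)\bm x=((N-\1)\bm w)^\T\bm x=0$, so $(N-\1)\bm x\in W^\perp=\rowspan(\bm 1^\T)$. Hence $N=\1+\bm 1\bm c^\T$ for some $\bm c$. The condition $(N-\1)W=0$ then forces $\bm c\in W^\perp$, so $N=\1$ or $N=\1+J$, where $J=\bm 1\bm 1^\T$.

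\emph{Main obstacle: excluding $N=\1+J$.} For odd $n$ this case is immediate, because $(\1+J)\bm 1=(n+1)\bm 1=0$, so $\1+J$ is singular and cannot equal the invertible $N$.

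For even $n$ I expect the real difficulty, and the statement may need refinement. In this case $\tau_X$ and $\tau_Z=\tau_X(\1+J)$ agree on $W$, and by the observation above every stabilizer vector of any SWEL code lies in $W$. Since both self-duality and the SWEL property depend only on the stabilizer group, such a $\tau$ would map every SWEL code to a SWEL code without satisfying $\tau_X=\tau_Z$.

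The remaining task is therefore to decide whether some $\tau_X\in\GL(n,\F_2)$ has Gram matrix $\tau_X^\T\tau_X=\1+J$. Equivalently, we need $n$ columns that each have even weight and pairwise odd inner products. I would first try to rule this out by a counting argument. Every column then lies in $W$, and for even $n$ we have $\bm 1\in W$ and $\bm 1\in W^\perp$, so every column is orthogonal to $\bm 1$. This suggests that the column span is contained in the proper subspace $\bm 1^\perp$, contradicting invertibility of $\tau_X$.

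If that argument closes, necessity holds for all $n$. If it fails, the fallback is to state necessity modulo the ambiguity $\tau_Z\in\set{\tau_X,\tau_X(\1+J)}$.
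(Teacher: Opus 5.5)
Your argument is correct, and the step you flag as the main obstacle is not one. The counting argument you sketch at the end closes immediately, and it is exactly how the paper finishes. Your condition $\tau_X\bm v=\tau_Z\bm v=\tau_X^{-\T}\bm v$ says directly that $A=\tau_X^\T\tau_X=N^{-1}$ fixes $W$ pointwise. Since $A$ is symmetric, your argument gives $A\in\set{\1,\1+J}$. If $A=\1+J$, its diagonal vanishes, so every column $\bm c_i$ of $\tau_X$ satisfies $\bm 1^\T\bm c_i=\bm c_i^\T\bm c_i=0$. Hence $\bm 1^\T\tau_X=\bm 0^\T$, and $\tau_X$ is singular. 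Nothing here depends on the parity of $n$: $W=\bm 1^\perp$ for every $n$, and the fact that $\bm 1\in W$ for even $n$ plays no role. So your separate odd-$n$ argument is redundant, the statement needs no refinement, and a gate with $\tau_Z=\tau_X(\1+J)$ does not exist. You should drop the fallback. Your intermediate claim that such a gate would preserve every SWEL code was also unsupported. $(\tau_X,\tau_X)$ would not be a Clifford gate there, and self-duality of $\tau(C)$ alone does not produce a logical basis with $L_X=L_Z$.

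The route to the dichotomy differs from the paper's in an instructive way. The paper also works with $A=\tau_X^\T\tau_X$, but it extracts from self-duality of $\tau(C)$ only two Gram conditions: $H_\star A H_\star^\T=\bm 0$ and $H_\star A L_\star^\T=\bm 0$. It tests these on a code with a stabilizer $e_i+e_j$ and a logical $e_k$. This yields a constant diagonal and a constant off-diagonal, hence four candidates, and two of them must then be removed by invertibility. You instead use self-duality at full strength on the one-generator code $\CSS(\bm v^\T,\bm v^\T)$. There, equal one-row kernels force $A\bm v=\bm v$ on all of $W$, and symmetry leaves only $\1$ and $\1+J$. Your route is slightly more economical in three ways:
\begin{enumerate}
  \item It uses no logical operators of the test code.
  \item It needs only two qubits rather than three distinct ones.
  \item The all-zeros and all-ones candidates never arise.
\end{enumerate}
Both proofs then share the same final parity step. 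Your sufficiency direction is the paper's ``obvious'' argument, with the orthogonality check $LM^\T ML^\T=LL^\T=\1_k$ written out.
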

  We prove \cref{thrm:swel_stabilization} in \cref{sec:swel_stabilization} by considering an arbitrary SWEL code $C$, the constraints on the parity check matrices and logical operators of $C$, and the requirements imposed on $\tau$ by analogous constraints for $\tau(C)$.
\end{tcolorbox}
\cref{thrm:swel_stabilization} thereby motivates the following definition:
\begin{tcolorbox}
  \begin{definition}
    An $n$-qubit \emph{SWEL-preserving gate} is an $n$-qubit CSS-preserving gate $\tau \simeq (\tau_X, \tau_Z)$ for which $\tau_X = \tau_Z$.
    For any SWEL-preserving gate $\tau$ we define $\tau_\star = \tau_X = \tau_Z$.
    \label{def:swel_preserving}
  \end{definition}
\end{tcolorbox}
Similarly to the case of CSS codes, we can produce a random SWEL code by applying a random SWEL-preserving gate to a trivial SWEL code.
But what is a random SWEL-preserving gate?
It helps to observe that the set of $n$-qubit SWEL-preserving gates is isomorphic to the orthogonal group $\O(n,\F_2)$ of $n\times n$ matrices $M$ with $M^\T = M^{-1}$.
\begin{tcolorbox}
  \begin{lemma}
    If $\tau \simeq (\tau_\star, \tau_\star)$ is a SWEL-preserving gate, then $\tau_\star^\T = \tau_\star^{-1}$.
    \label{lemma:swel_clifford}
  \end{lemma}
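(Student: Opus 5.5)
This follows directly from \cref{lemma:css_clifford}. A SWEL-preserving gate $\tau \simeq (\tau_\star, \tau_\star)$ is by \cref{def:swel_preserving} a CSS-preserving gate with $\tau_X = \tau_Z = \tau_\star$. So \cref{lemma:css_clifford} applies and gives $\tau_X^\T = \tau_Z^{-1}$ with $\tau_X \in \GL(n,\F_2)$. Substituting $\tau_X = \tau_Z = \tau_\star$ yields $\tau_\star^\T = \tau_\star^{-1}$ immediately, and invertibility of $\tau_\star$ is inherited from $\tau_X \in \GL(n,\F_2)$.

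To keep the argument self-contained, one can also rerun the symplectic computation directly. Preservation of the symplectic inner product between $X(\bm x)$ and $Z(\bm z)$ gives
\begin{align}
  \bm x^\T \bm z = \bm x^\T \tau_\star^\T \tau_\star \bm z \quad \text{for all } \bm x, \bm z \in \F_2^n.
\end{align}
Taking $\bm x, \bm z$ to range over standard basis vectors then forces $\tau_\star^\T \tau_\star = \1_n$. Over a finite-dimensional space, a one-sided inverse is a two-sided inverse, so $\tau_\star \tau_\star^\T = \1_n$ as well, and hence $\tau_\star^\T = \tau_\star^{-1}$.

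There is no real obstacle here. The only point needing care is the step from the one-sided identity $\tau_\star^\T \tau_\star = \1_n$ to the two-sided statement $\tau_\star^\T = \tau_\star^{-1}$, and that is standard linear algebra (or is already supplied by \cref{lemma:css_clifford}). For the isomorphism with $\O(n,\F_2)$ claimed in the text, the converse is also needed: any $M \in \O(n,\F_2)$ gives a valid SWEL-preserving gate $(M, M)$. This holds because $\tau = \mathrm{diag}(M, M)$ satisfies $\tau^\T \Omega_n \tau = \Omega_n$ exactly when $M^\T M = \1_n$, which is a one-line block-matrix check.
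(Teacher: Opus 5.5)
Your proof is correct and is essentially the paper's own argument, which likewise substitutes $\tau_X = \tau_Z = \tau_\star$ into \cref{lemma:css_clifford}. Your self-contained symplectic recomputation and the converse check for the $\O(n,\F_2)$ isomorphism are both valid, but they go beyond what the lemma requires.
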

  \begin{proof}
    This follows directly from \cref{lemma:css_clifford} together with the fact that $\tau_\star = \tau_X = \tau_Z$.
  \end{proof}
\end{tcolorbox}
We can therefore produce a random SWEL-preserving gate $\tau \simeq (\tau_\star, \tau_\star)$ by sampling $\tau_\star \sim \O(n,\F_2)$.
This sampling is complete in the sense that it can produce every SWEL code.
\begin{tcolorbox}
  \begin{restatable}[SWEL construction]{theorem}{SWELConstruction}
    If $C = \CSS(H_\star, H_\star)$ is a SWEL code, then
    \begin{align*}
      C = \tau(\TrivSWEL(k, \rank(H_\star)))
    \end{align*}
    for some SWEL-preserving gate $\tau$.
    \label{thm:swel_construction}
  \end{restatable}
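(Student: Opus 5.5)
The plan is to build the orthogonal matrix $\tau_\star$ column by column, reading the requirements on $\tau$ directly off its action on the trivial code. A SWEL-preserving gate $\tau\simeq(\tau_\star,\tau_\star)$ sends $X(\bm e)\mapsto X(\tau_\star\bm e)$ and $Z(\bm e)\mapsto Z(\tau_\star\bm e)$. By \cref{lemma:swel_clifford}, $\tau_\star$ is an arbitrary matrix with $\tau_\star^\T\tau_\star=\1$, i.e.\ its columns $\bm m_1,\dots,\bm m_n$ are orthonormal under the ordinary dot product over $\F_2$. Let $s=\rank(H_\star)$ and $k=n-2s$. The code $\TrivSWEL(k,s)$ has stabilizer vectors $\bm e_{k+i}+\bm e_{k+s+i}$ for $i\in\Z_s$ and SWEL logical basis $\bm e_q$ for $q\in\Z_k$. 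It therefore suffices to find orthonormal $\bm m_1,\dots,\bm m_n$ such that $\bm m_{k+i}+\bm m_{k+s+i}=\bm g_i$, where the $\bm g_i$ form some basis of $\rowspan(H_\star)$. The logical operators then map automatically, but we will also take $\bm m_q=\bm\ell_q$ for a SWEL logical basis $L_X=L_Z$ with rows $\bm\ell_q$.

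\textbf{Step 1: translate the SWEL conditions into linear algebra.} $L_XL_Z^\T=\1_k$ gives $\bm\ell_q\cdot\bm\ell_{q'}=\delta_{qq'}$. Let $S=\rowspan(H_\star)$ and $W=\mathrm{span}(\bm\ell_q)$. Then $H_\star H_\star^\T=\bm 0$ makes $S$ self-orthogonal, and commutation with the stabilizers gives $S\perp W$. Since $W$ is nondegenerate, $\F_2^n=W\oplus V$ with $V=W^\perp$ nondegenerate of dimension $2s$. The subspace $S\subset V$ is totally isotropic of dimension $s$, hence $S^\perp\cap V=S$.

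\textbf{Step 2: reduce to choosing the vectors $\bm a_i$.} Set $\bm m_{k+i}=\bm a_i$ and $\bm m_{k+s+i}=\bm a_i+\bm g_i$ with $\bm a_i\in V$. A short computation shows the full orthonormality conditions reduce to two requirements:
\begin{align}
  \bm a_i\cdot\bm g_j=\delta_{ij},
  &&
  \bm a_i\cdot\bm a_j=\delta_{ij}.
\end{align}
Nondegeneracy of $V$ gives dual vectors $\bm f_i$ with $\bm f_i\cdot\bm g_j=\delta_{ij}$. Replacing $\bm f_i\to\bm f_i+\sum_j c_{ij}\bm g_j$ changes $\bm f_i\cdot\bm f_j$ for $i\ne j$ by $c_{ij}+c_{ji}$. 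Choosing $c_{ij}$ for $i<j$ therefore fixes all off-diagonal products without disturbing duality.

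\textbf{Step 3: the diagonal, which I expect to be the main obstacle.} The diagonal products cannot be fixed this way, because $\bm x\cdot\bm x=\bm x\cdot\bm 1$ is linear and $\bm g\cdot\bm g=0$. So the obstacle is purely a parity condition: we need $\bm a_i\cdot\bm 1=1$ for all $i$. Since $\bm 1\cdot\bm g=0$ and $\bm 1\cdot\bm\ell_q=1$, we can write $\bm 1=\sum_q\bm\ell_q+\bm v$ with $\bm v\in V\cap S^\perp=S$. Expanding $\bm v=\sum_j\varepsilon_j\bm g_j$ gives $\bm a_i\cdot\bm 1=\varepsilon_i$. I would resolve this using the two freedoms the statement leaves, since it only requires equality of codes:
\begin{enumerate}[label=(\alph*)]
  \item If $\bm v=\bm 0$ and $s\ge1$, replace $\bm\ell_1$ by $\bm\ell_1+\bm g_1$. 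This is still a SWEL logical basis, and the new $\bm v$ equals $\bm g_1\ne\bm 0$. This reselection changes $W$ and hence $V$, so Steps 1 and 2 are redone with the new basis.
  \item Then rechoose the generator basis of $S$ so that $\bm v=\sum_j\bm g_j$, i.e.\ all $\varepsilon_j=1$. This is possible for any nonzero $\bm v$: for instance, extend $\bm v$ to a basis of $S$ and apply a suitable invertible change of basis.
\end{enumerate}
The case $s=0$ is immediate, since then $\tau_\star$ is any orthogonal matrix with first $k$ columns $\bm\ell_q$. After these choices, Step 2 yields an orthogonal $\tau_\star$ with $\tau(\TrivSWEL(k,s))=C$. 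The delicate points to double-check are the parity bookkeeping in Step 3 and recomputing $V$ after modifying $\bm\ell_1$.
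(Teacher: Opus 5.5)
Your proposal is correct, and it takes a genuinely different route from the paper.

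The paper works in coordinates. It brings $H_\star$ to the standard form $\begin{pmatrix}\ell & R & \1_s\end{pmatrix}$ with $R$ invertible. A block ansatz then pins down $\tau_\star$ up to a $k\times k$ block $A$, and the push-through identity shows that orthogonality collapses to the single Cholesky-type equation $AA^\T=\1+\ell^\T\ell$. SWEL enters only to solve this equation, via $A=\sigma_X^\T$ for a logical change of basis $\sigma$.

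You instead argue basis-free:
\begin{itemize}
  \item the orthonormal SWEL logicals split $\F_2^n=W\oplus V$ orthogonally;
  \item you build a frame dual to $S$ inside the nondegenerate $V$, using $S^\perp\cap V=S$;
  \item you remove off-diagonal products by a triangular shift within $S$;
  \item you isolate the single remaining obstruction as the parity condition $\sum_q\bm\ell_q+\sum_i\bm g_i=\bm 1$.
\end{itemize}
That condition is exactly $\tau_\star\bm 1=\bm 1$, which every orthogonal matrix over $\F_2$ satisfies because its rows have odd weight. So it is a necessary condition, not an artifact of your construction. Your step (a) is genuinely needed: for $\rowspan(H_\star)=\mathrm{span}\{(1,1,0)\}$ with $\bm\ell_1=(1,1,1)$, one gets $\bm v=\bm 0$.

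The two constructions are closer than they look. The $q$-th column of the paper's $\tau_\star$ is the $q$-th row of $\sigma_X L_X$. This is a SWEL logical basis in the representative that vanishes on the last $s$ coordinates. For that representative, $\bm v=\bm 1+\sum_q\bm m_q$ has all-ones last block. Expanding $\bm v$ in the rows of the standard-form $H_\star$ therefore forces every $\varepsilon_i=1$, so the paper never has to confront the parity issue.

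What each route buys:
\begin{itemize}
  \item The paper's gives an explicit closed-form $\tau_\star$ and ties existence to Cholesky factorization over $\F_2$.
  \item Yours avoids the standard form, the push-through identity, and the verification of the three recovered constraints. It also makes transparent where the SWEL hypothesis enters: in supplying orthonormal vectors $\bm\ell_q$ spanning a nondegenerate complement $W$ of $V$.
\end{itemize}
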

  We prove \cref{thm:swel_construction} in \cref{sec:swel_construction} by constructing $\tau$ from standard-form parity check matrices of $C$.
\end{tcolorbox}

In the case of CSS codes, we use the CNOT gate to move between codes for annealing.
The CNOT gate is the smallest CSS-preserving gate that is nontrivial in the sense that it is not equal to the identity gate modulo qubit permutations.
The CNOT gate is also the \emph{only} nontrivial two-qubit CSS-preserving gate.
The smallest nontrivial SWEL-preserving gate can be found by a brute-force search over orthogonal matrices of increasing size.
This gate acts on four qubits, and is unique modulo qubit permutations.
For this reason, we call it the \emph{elementary} SWEL-preserving gate.
\begin{tcolorbox}
  \begin{definition}[Elementary SWEL-preserving gate]
    Let $n\ge 4$ be an integer, $e_q\in\F_2^n$ with $q\in\Z_n$ be a standard basis vector for $\F_2^n$, and for any subset $Q\subset\Z_n$ define
    \begin{align*}
      e_Q = \sum_{q\in Q} e_q
      &&
      \text{and}
      &&
      t(Q) = \1_n + e_Q e_Q^\T.
    \end{align*}
    An \emph{elementary SWEL-preserving gate} is any SWEL-preserving gate $T_Q \simeq (t(Q), t(Q))$ for which $Q\subset\Z_n$ is a subset of $\abs{Q}=4$ targets.
  \end{definition}
\end{tcolorbox}
In the case of $n=4$, the matrix $t(\Z_4)$ that induces the unique elementary SWEL-preserving gate is
\begin{align}
  t(\Z_4) =
  \begin{pmatrix}
    0 & 1 & 1 & 1 \\
    1 & 0 & 1 & 1 \\
    1 & 1 & 0 & 1 \\
    1 & 1 & 1 & 0
  \end{pmatrix}.
\end{align}
The corresponding gate is implemented by the circuit
\begin{align}
  \begin{quantikz}[row sep={1.5em,between origins},column sep={.5em}]
    & \ctrl{1} &          & \targ{}   &[-.25em]       &[-.25em] \targ{} &          & \ctrl{1}& \\
    & \targ{}  & \ctrl{1} &           &               &                 & \ctrl{1} & \targ{} & \\
    &          & \targ{}  & \ctrl{-2} & \permute{2,1} & \ctrl{-2}       & \targ{}  &         & \\
    &          &          &           &               &                 &          &         &
  \end{quantikz}
  =
  \begin{quantikz}[row sep={1.5em,between origins},column sep={.5em}]
    & \ctrl{1} & \targ{}   &[-.75em]  & \targ{}   &[-.75em]  &[-.25em] \ctrl{1} &[-.25em] \\
    & \targ{}  &           & \ctrl{1} &           & \ctrl{2} & \targ{}          & \\
    &          &           & \targ{}  & \ctrl{-2} &          & \permute{2,1}    & \\
    &          & \ctrl{-3} &          &           & \targ{}  &                  &
  \end{quantikz}.
\end{align}
For a fixed choice of parameters $\params{n,k}$, the use of the elementary SWEL-preserving gate to search for SWEL codes is \emph{complete} in the sense that it covers the entire space of $\params{n,k}$ SWEL codes.
\begin{tcolorbox}
  \begin{restatable}[SWEL equivalence]{theorem}{SWELEquivalence}
    If $C$ and $C'$ are both $\params{n,k}$ SWEL codes, then $C' = (\Pi T_N \cdots T_2 T_1)(C)$ for some permutation gate $\Pi$ and a finite sequence of elementary SWEL-preserving gates $(T_1, T_2, \cdots, T_N)$.
    \label{thm:swel_equivalence}
  \end{restatable}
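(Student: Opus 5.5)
The plan is to reduce the statement to a generation result for the orthogonal group $\O(n,\F_2)$, using \cref{thm:swel_construction} and \cref{lemma:swel_clifford}. Write $C = \CSS(H_\star,H_\star)$ and $C'=\CSS(H_\star',H_\star')$. Both are $\params{n,k}$ codes, and each has $2\rank(H_\star) = n-k$ independent stabilizers. Hence $\rank(H_\star)=\rank(H_\star')=s$ with $s=(n-k)/2$.

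By \cref{thm:swel_construction} there are SWEL-preserving gates $\tau,\tau'$ with $C=\tau(\TrivSWEL(k,s))$ and $C'=\tau'(\TrivSWEL(k,s))$, so $C' = (\tau'\tau^{-1})(C)$. Composing SWEL-preserving gates multiplies their $\tau_\star$ matrices, and by \cref{lemma:swel_clifford} each such matrix lies in $\O(n,\F_2)$. Conversely, every $M\in\O(n,\F_2)$ defines the SWEL-preserving gate $(M,M)$, which is symplectic because $M^\T M=\1$. Permutation matrices are orthogonal, so permutation gates are SWEL-preserving. It therefore suffices to prove the following claim: every $M\in\O(n,\F_2)$ can be written as $P\, t(Q_N)\cdots t(Q_1)$ with $P$ a permutation matrix and $\abs{Q_i}=4$. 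Each $t(Q)$ is its own inverse, because $e_Q^\T e_Q = 4 = 0$. Each conjugate of $t(Q)$ by a permutation is again some $t(Q')$. So it is enough to reduce $M$ to the identity by multiplying on the left by such factors and by permutations, and then invert the product.

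I would prove the claim by induction on $n$, clearing one column at a time. Orthogonality $M^\T M=\1$ says the columns of $M$ have odd weight and pairwise even overlap. Let $v$ be the first column, with odd weight $w$.

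First, $w<n$ whenever $n\ge 2$. If $v$ were the all-ones vector, any other column would need even overlap with it, hence even weight, contradicting that columns have odd weight.

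Second, if $w\ge3$, pick $Q$ with three elements in $\operatorname{supp}(v)$ and one element outside it. This needs $n\ge4$, which holds because $w\ge3$ and $w<n$. Then $e_Q^\T v=1$, so $t(Q)v = v+e_Q$ has weight $w-2$.

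Repeating this step brings the first column to weight one, and a permutation moves it to $e_1$. The other columns now have even overlap with $e_1$, so they vanish in the first coordinate. Hence the first row of an orthogonal matrix with first column $e_1$ is $e_1^\T$, and the matrix has the block form $\mathrm{diag}(1,M')$ with $M'\in\O(n-1,\F_2)$. The gates $t(Q)$ and permutations acting on qubits $2,\dots,n$ embed into this block, so induction finishes the argument. For the base case $n\le 3$, the observation $w<n$ rules out weight-3 columns, so all columns have weight one and $M$ is already a permutation.

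The main obstacle is the column-reduction step: showing that weight-4 reflections suffice, and in particular that one never gets stuck at a column with no coordinate outside its support. The all-ones argument above handles exactly that case. I would also double-check that the gate on qubits $2,\dots,n$ in the recursion is still of the form $T_Q$ on the full $n$ qubits, which is immediate because its $Q$ avoids qubit 1. Finally, writing the product as $\Pi\, T_N\cdots T_1$ with a single permutation at the left uses the fact that $\Pi T_Q\Pi^{-1}=T_{\Pi(Q)}$, which lets all permutations be pushed to the left.
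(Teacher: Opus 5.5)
Your proposal is correct and follows essentially the same route as the paper: you reduce via the SWEL construction theorem to decomposing an element of $\O(n,\F_2)$, then clear one column at a time with a transvection $t(Q)$ whose $Q$ has three ones and one zero of the column, use the same odd-weight, not-all-ones fact to guarantee that zero exists, and push the permutations to one side by conjugation. The differences are cosmetic: you clear the first column rather than the last, and your base case $n\le 3$ (where orthogonal matrices are already permutations) replaces the paper's appeal to the structure of $\O(4,\F_2)$.
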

  We prove \cref{thm:swel_equivalence} in \cref{sec:swel_equivalence} by reducing it to the decomposition of orthogonal matrices $U\in\O(n,\F_2)$ into transvections of the form $t(Q) = \1_n + e_Q e_Q^\T$ with $\abs{Q}\in\set{2,4}$, where $\abs{Q} = 2$ corresponds to a SWAP gate, and constructing an Gaussian-elimination-like algorithm to perform this decomposition.
\end{tcolorbox}
\cref{thm:swel_equivalence} implies that elementary SWEL-preserving gates suffice, in principle, to explore the entire space of SWEL codes.
In total, our ingredients for a SWEL code search are as follows:
\begin{enumerate}[label=(\Alph*)]
  \item \textbf{Initial code}:
    Choose two positive integers $(n, s)$ and a random SWEL-preserving gate $\tau \simeq (\tau_\star, \tau_\star)$, obtained by sampling $\tau_\star\sim\O(n,\F_2)$.
    Initialize the code $\tau(\TrivSWEL(k, s))$.
  \item \textbf{Random move}:
    Choose four qubits at random, $Q\subset\Z_n$ with $\abs{Q} = 4$, and propose the move $C \to T_Q(C)$, where $T_Q$ is the elementary SWEL-preserving gate on qubits $Q$.
\end{enumerate}

\subsection{Energy function}
\label{sec:energy_function}

Once we know how to initialize a code and move between codes, we need to construct an energy function to minimize by simulated annealing.
We nominally wish to maximize the distance $d$ of a code $C$.
However, an energy function such as $E(C) = -d$ suffers from a discrete barren plateau: for almost any stabilizer code $C$, almost all Clifford gates that address a small number of qubits leave the code distance unchanged.
We therefore need an energy function that is sensitive to more granular data than code distance.

A candidate metric to consider is the logical error rate of a stabilizer code in a code-capacity model, in which each qubit is subject to an independent, identically distributed (IID) Pauli error with probability $p$.
Code-capacity logical error rates are typically estimated using Monte Carlo simulations.
However, Monte Carlo simulations would be prohibitively expensive to use at each step of a code search.
Moreover, the search process would suffer from statistical uncertainties in simulation results.

To develop a simple and stable surrogate for the logical error rate in a code-capacity model, we examine the probability that a qubitwise-IID error process results in a logical error for a code with distance $d$.
The lowest-weight error that may be decoded incorrectly by a perfect decoder is an error of weight $w_d = \left\lceil\frac{d+1}{2}\right\rceil$ with support contained \emph{within} that of a weight-$d$ logical operator\footnote{If $d$ is even, errors of weight $\nicefrac{d}{2}$ cause logical erasure.
These errors do not contribute to the error in, say, a quantity estimated from samples of a quantum circuit, because samples with weight $\nicefrac{d}{2}$ errors would be discarded.}.
Each weight-$d$ logical operator has ${d \choose w_d}$ candidate half-supports, where a half-support is a weight-$w_d$ subset of the full (weight-$d$) support.
If there are $m$ distinct minimal-weight nontrivial logical operators, making the approximation that their half-supports are distinct yields a logical error rate of
\begin{align}
  p_L \approx m {d \choose w_d} p^{w_d}.
  \label{eq:error_estimate}
\end{align}
The distance $d$ and number $m$ of minimal-weight nontrivial logical operators can be computed by brute-force enumeration of all nontrivial logical operators of a code, which takes time $2^{O(n)}$.

The logical error rate in \cref{eq:error_estimate} can be extremely small, so we take its logarithm and rescale by $\log(1/p) = -\log(p)$ to arrive at the energy function
\begin{align}
  E_L(C)
  = \frac{\log p_L}{\log(1/p)}
  \approx \frac{\log m}{\log(1/p)} + \frac{\log {d \choose w_d}}{\log(1/p)} - w_d.
\end{align}
The first term, $\frac{\log m}{\log(1/p)}$, biases the annealer towards codes with fewer minimal-weight nontrivial logical operators.
This is the essential term that resolves codes with equal distance.
The remaining terms are constant in the space of codes with fixed $d$, and serve only to bias the annealer towards higher-distance codes.
This purpose is equally well served by any monotonically decreasing function of $d$ whose differences outweigh the contribution of $\frac{\log m}{\log(1/p)}$ to the energy $E(C)$.
Our code search therefore uses the simpler energy function
\begin{align}
  E(C) = \frac{\log m}{\log(1/p)} - d,
  \label{eq:energy_function}
\end{align}
where we set $p = 10^{-3}$ to reflect realistic physical error rates.
The energy function in \cref{eq:energy_function} can alternatively be obtained by considering the probability of a spontaneous undetectable logical error, which is approximately $mp^d$.

\subsection{Adaptive annealing}
\label{sec:adaptive_annealing}

Here we describe our procedure for setting the temperature throughout an annealing run (see \cref{sec:simulated_annealing}).
The basic idea is to split an annealing run into $K$ \emph{epochs}, where each epoch consists of $N$ proposed moves, and set a \emph{target move acceptance rate} $r_t$ for epoch $t$.
We can then set the temperature $T_t$ in epoch $t$ to the value for which $r_t
= R_{\bm\Delta_{t-1}}(T_t)$, where $\bm\Delta_{t-1} = (\Delta_{t-1,1},\Delta_{t-1,2},\cdots,\Delta_{t-1,N})$ are the energy differences from the proposed moves in epoch $t-1$, and
\begin{align}
  R_{\bm\Delta}(T)
  = \mathop{\mathbb{E}}\limits_{\substack{\delta\in\bm\Delta}}[\AcceptProb(\delta,T)]
  \label{eq:temp_to_rate}
\end{align}
is the mean acceptance probability for proposed moves with energy differences $\bm\Delta$ at the temperature $T$.
As long as there is at least one positive value in $\bm\Delta$, the function $R_{\bm\Delta}(T)\in[0,1]$ is continuous and monotonically increasing with $T$, allowing $r = R_{\bm\Delta}(T)$ to be solved numerically by minimizing $(R_{\bm\Delta}(T) - r)^2$.
For the first epoch, we perform a ``dry run'' by proposing and rejecting $N$ moves to collect an initial vector of energy differences $\bm\Delta_0$.

In practice, we make two further modifications to the above procedure.
First, to prevent the annealer from freezing on plateaux or degenerate local minima, within which all moves are accepted, we set a target \emph{uphill} move acceptance rate, nominally considering the temperature
\begin{align}
  T_t^\star = \argmin_z \left[\left(R_{\bm\Delta_{t-1}^+}(z) - r_t\right)^2\right],
\end{align}
where $\bm\Delta^+ = (\delta\in\bm\Delta:\delta>0)$ are the positive values in $\bm\Delta$\footnote{If $\bm\Delta_{t-1}^+$ is empty, as a fallback we set $T_t^\star = 1$.}.
Second, to dampen temperature fluctuations that can occur, for example, when the annealer finds a deep basin, we introduce a \emph{memory factor} $\mu\in[0,1)$, and set the temperature $T_t$ in epoch $t$ according to
\begin{align}
  \log T_t = (1-\mu) \log T_t^\star + \mu \log T_{t-1}.
\end{align}
The final adaptive annealing algorithm is sketched in \cref{alg:adaptive_annealing}, where we set $\mu=0.5$.

\begin{algorithm*}
  \caption[]{%
    Adaptive annealing with target uphill acceptance rates and temperature memory.
    This algorithm uses the following definitions and conventions:
    \begin{enumerate}[label=(\alph*), nosep]
      \item $\mathcal{X}$ is the search space, such as the space of CSS or SWEL codes with fixed parameters.
      \item $\oper{get\_random\_neighbor}(x)$ samples a neighbor of $x$ from a predetermined proposal distribution.
      \item $\oper{get\_energy}(x) = E(x)$ for a predetermined energy function $E:\mathcal{X}\to\mathbb{R}$.
      \item An annealing run consists of $K$ epochs, where each epoch consists of $N$ proposed moves.
      \item Each epoch $t\in\Z_K$ has a target uphill acceptance rate $r_t$.
      \item $\oper{get\_temperature}(\bm\Delta,r) = \argmin_z (R_{\bm\Delta^+}(z) - r)^2$, where $\bm\Delta^+ = (\delta\in\bm\Delta:\delta>0)$ are the positive values in $\bm\Delta$, and $R_{\bm\Delta}(z)$ is defined in \cref{eq:temp_to_rate}.
      \item The memory factor $\mu \in [0,1)$ damps epoch-to-epoch temperature fluctuations.
      \item $\oper{rand}(a,b)$ returns a uniformly random real number from the interval $[a,b)$.
      \item $\AcceptProb(\Delta,T)$ is defined in \cref{eq:accept_prob}.
    \end{enumerate}
    In practice, we always set the number of iterations per epoch to $N = 10^4$, the memory factor to $\mu = 0.5$, and use a geometric schedule for the target acceptance rates, such as $\log_{10}\bm{r} = \oper{linspace}(-2, -4, K)$ (meaning $\log_{10}\bm{r}$ consists of $K$ values evenly spaced from $-2$ to $-4$).
    We typically choose $K\in\set{10^3,10^4}$.
  }
  \label{alg:adaptive_annealing}
  \DontPrintSemicolon
  \SetKwComment{Comment}{\# }{}
  \SetKwProg{Fn}{def}{\string:}{}
  \SetKwInOut{Input}{Input}
  \SetKwInOut{Output}{Output}
  \SetKwFor{For}{for}{:}{}
  \SetKwFor{While}{while}{:}{}
  \SetKwIF{If}{ElseIf}{Else}{if}{:}{elif}{else:}{}
  \BlankLine
  \Input{%
    (1) Initial state $x_0\in\mathcal{X}$ \\
    (2) Target uphill acceptance rates $(r_1, r_2, \ldots, r_K) \in (0, 1)^K$ \\
    (3) Number of iterations per epoch $N\in\Z_{>0}$ \\
    (4) Memory factor $\mu\in[0,1)$
  }
  \Output{Annealed state $x\in\mathcal{X}$}
  \BlankLine
  \SetKwFunction{Anneal}{anneal}
  \Fn{\Anneal{$x_0, \bm{r}, N, \mu$}}{
    \texttt{set} $x \leftarrow x_0$\;
    \texttt{set} $\bm\Delta \leftarrow \bm{0}_N$
    \Comment{all-zero vector}
    \Comment{dry run to set initial temperature}
    \For{$s \in \Z_N$}{
      \texttt{set} $y \leftarrow \oper{get\_random\_neighbor}(x)$\;
      \texttt{update} $\Delta_s \leftarrow \oper{get\_energy}(y) - \oper{get\_energy}(x)$\;
    }
    \texttt{set} $T \leftarrow \oper{get\_temperature}(\bm\Delta,r_1)$\;
    \Comment{main annealing loop}
    \For{$t \in \Z_K$}{
      \texttt{set} $T_\star \leftarrow \oper{get\_temperature}(\bm\Delta,r_t)$\;
      \texttt{update} $T \leftarrow \exp((1-\mu) \log T_\star + \mu \log T)$\;
      \For{$s \in \Z_N$}{
        \texttt{set} $y \leftarrow \oper{get\_random\_neighbor}(x)$\;
        \texttt{update} $\Delta_s \leftarrow \oper{get\_energy}(y) - \oper{get\_energy}(x)$\;
        \If{$\oper{rand}(0, 1) < \AcceptProb(\Delta_s,T)$}{
          \texttt{update} $x \leftarrow y$\;
        }
      }
    }
    \Return{$x$}\;
  }
\end{algorithm*}

\subsection{Reducing stabilizer weights}
\label{sec:reducing_weight}

Annealing with the energy function constructed in \cref{sec:energy_function} optimizes distance and number of minimum-weight logical operators of a code, but does not consider the weights of the stabilizers used to represent the code.
Finding low-weight stabilizer generators is desirable to reduce the complexity of syndrome measurements and decoding.
Here, we describe a simple greedy algorithm for minimizing weight of the stabilizers that represent a code.

To this end, we observe that the $\StabCode(S)$ remains unchanged if a particular element $a\in S$ is replaced by the product $a \prod_{b\in B} b$, where $B\subset S\setminus\set{a}$ may be any choice of other stabilizers.
A greedy algorithm to reduce the weights of the stabilizers in $S$ is to identify, for each $a\in S$, the subset
\begin{align}
  B_a = \argmin_{B\subset S\setminus\set{a}} \norm*{a \prod_{b\in B} b},
\end{align}
and replace $a$ by $a \prod_{b\in B_a} b$.
For a CSS code $\CSS(H_X, H_Z)$, weight reduction can be performed independently for each of the $X$-type and $Z$-type stabilizer subgroups.

We remark that the complexity of this weight reduction algorithm is exponential in the number of stabilizer generators of a stabilizer code.
However, this complexity is less than that of computing code distance, which is required for each evaluation of our chosen energy function.
Greedy stabilizer weight reduction is thereby a relatively inexpensive post-processing step for the codes found by our annealing procedure.

\section{Results}
\label{sec:results}

The CSS code search described in \cref{sec:search_css} takes as input a choice of logical qubits number $k$ and numbers $(s_X,s_Z)$ of $X$-type and $Z$-type stabilizer generators.
We set $s_X = s_Z = s$ for simplicity, parameterizing each CSS code search by a choice of $\params{n,k}$ for which $n-k = 2s$ is even.
Each SWEL code search can similarly be parameterized by a choice of $\params{n,k}$ with even $n-k = 2s$.
Rather than initializing each code search by sampling a single random code, we ``warm start'' both CSS and SWEL annealing runs by choosing the best of many randomly sampled SWEL codes.
We then perform many (up to 50) independent annealing runs per choice of initial code.

Altogether, the best codes that we found by annealing are provided in \cref{sec:best_codes}, and summarized in \cref{tab:codes}.
In addition to the code parameters $\params{n,k,d}$, \cref{tab:codes} provides the following:
\begin{itemize}
  \item The number of minimum-weight nontrivial logical operators.
  \item The mean and max weight of the stabilizer generators after minimization according to the procedure discussed in \cref{sec:reducing_weight}.
  \item The excess $\delta_{(\mathrm{QGV},\mathrm{CSS})}(n,k) = d - d_{(\mathrm{QGV,CSS})}(n,k)$ of the distance $d$ beyond the stabilizer ($d_{\mathrm{QGV}}$) or CSS ($d_{\mathrm{CSS}}$) Gilbert-Varshamov bound for parameters $\params{n,k,d}$.
\end{itemize}
The distance bounds provide a theoretical yardstick by which to measure ``how good'' of a distance $d$ was obtained for any fixed choice $\params{n,k}$.
We remark that the stabilizer Gilbert-Varshamov bound that we compare against applies to \emph{arbitrary} Pauli stabilizer codes, including non-CSS codes, and the existence of an $\params{n,k,d}$ CSS code with $d \ge d_{\mathrm{QGV}}$ ($\delta_{\mathrm{QGV}}>0$) is generally not guaranteed.

Finally, \cref{tab:codes} compares the codes found in this work with the best code parameters indexed on \texttt{qecdb.org}.
Specifically, for each $\params{n,k,d}$ code in \cref{tab:codes_css}, we provide the smallest values of $n_\mathrm{best}^{\mathrm{prior}}$ and largest values of $k_\mathrm{best}^{\mathrm{prior}}$ and $d_\mathrm{best}^{\mathrm{prior}}$ for which \texttt{qecdb.org} indexed an $\params{n_\mathrm{best}^{\mathrm{prior}}, \ge k, \ge d}$, $\params{\le n,k_\mathrm{best}^{\mathrm{prior}},\ge d}$, or $\params{\le n,\ge k,d_\mathrm{best}^{\mathrm{prior}}}$ CSS code prior to this work.
We similarly compare the SWEL codes found in this work to the best self-dual code parameters on \texttt{qecdb.org} in \cref{tab:codes_swel}, though we did not verify whether the corresponding codes on \texttt{qecdb.org} are SWEL.

We remark that some of the codes in \cref{tab:codes} have only a few minimum-weight nontrivial logical operators (indicated by bold text).
Reducing this number to zero increases the distance of a code.
These codes are therefore enticing targets for further improvement.

\begin{sidewaystable*}
  \small
  \caption{%
    CSS and SWEL codes found with adaptive annealing.
    $\delta_{(\mathrm{QGV},\mathrm{CSS})} = d - d_{(\mathrm{QGV},\mathrm{CSS})}$ is the amount by which the distance $d$ exceeds a stabilizer ($d_{\mathrm{QGV}}$) or CSS ($d_{\mathrm{CSS}}$) Gilbert-Varshamov bound; we leave dots ($\cdot$) in place of zeros for legibility.
    Minimum-weight nontrivial logical operator counts are split into $X$ / $Z$ type.
    Mean and maximum stabilizer generator weights are obtained by the minimization procedure discussed in \cref{sec:reducing_weight}.
    $n_\mathrm{best}^{\mathrm{prior}}$, $k_\mathrm{best}^{\mathrm{prior}}$, and $d_\mathrm{best}^{\mathrm{prior}}$ are the best parameters for which an $\params{n_\mathrm{best}^{\mathrm{prior}},\ge k,\ge d}$, $\params{\le n,k_\mathrm{best}^{\mathrm{prior}},\ge d}$, or $\params{\le n,\ge k,d_\mathrm{best}^{\mathrm{prior}}}$ code was indexed on \texttt{qecdb.org} prior to this work (which, we note, does not imply the existence of an $\params{n_\mathrm{best}^{\mathrm{prior}},k_\mathrm{best}^{\mathrm{prior}},d_\mathrm{best}^{\mathrm{prior}}}$ code, as the best value of each parameter may be achieved by a different code).
    SWEL codes found in this work are compared to self-dual codes on \texttt{qecdb.org}, which is a broader class of codes that contains non-SWEL codes.
    Improvements over the codes indexed on \texttt{qecdb.org} are indicated by colored bold text.
  }
  \setlength{\tabcolsep}{4pt}
  ~\hfill
  \begin{subtable}[t]{0.48\textwidth}
    \centering
    \caption{CSS}
    \begin{tabular}{c|c|c||c|c||c|c|c||c|c|c}
      $n$ & $k$ & $d$ & $\delta_{\mathrm{QGV}}$ & $\delta_{\mathrm{CSS}}$ & \thead{min-weight\\nontrivial\\logicals} & \thead{mean\\stab.\\weight} & \thead{max\\stab.\\weight} & $n_\mathrm{best}^{\mathrm{prior}}$ & $k_\mathrm{best}^{\mathrm{prior}}$ & $d_\mathrm{best}^{\mathrm{prior}}$ \\ \hline\hline
      20         & 4          & 4         & $\cdot$ & 2 & \textbf{5 / 5} &  7    &  8 & 15       & 6        & 4       \\ \hline
      20         & 6          & 4         & 1       & 2 & 36 / 40        &  7.79 &  9 & 16       & 6        & 4       \\ \hline
      \green{22} & \green{4}  & \green{5} & 1       & 3 & 63 / 63        &  6.78 &  7 & \red{24} & \red{3}  & \red{4} \\ \hline
      22         & 6          & 4         & $\cdot$ & 2 & 15 / 14        &  8    &  9 & 16       & 8        & 4       \\ \hline
      24         & 4          & 5         & $\cdot$ & 3 & 19 / 41        &  6.85 &  8 & 24       & 4        & 5       \\ \hline
      24         & 6          & 4         & $\cdot$ & 2 & \textbf{2 / 4} &  7.94 &  9 & 16       & 10       & 4       \\ \hline
      \green{25} & \green{5}  & \green{5} & 1       & 3 & 35 / 57        &  7.50 &  8 & \red{30} & \red{4}  & \red{4} \\ \hline
      \green{26} & \green{4}  & \green{6} & 1       & 3 & 258 / 258      &  7.64 &  8 & \red{28} & \red{2}  & \red{5} \\ \hline
      \green{26} & \green{6}  & \green{5} & 1       & 3 & 72 / 60        &  7.75 &  8 & \red{30} & \red{4}  & \red{4} \\ \hline
      30         & 4          & 6         & $\cdot$ & 3 & 86 / 82        &  7.42 &  8 & 28       & 4        & 6       \\ \hline
      \green{30} & \green{6}  & \green{6} & 1       & 3 & 313 / 309      &  8    &  8 & \red{32} & \red{4}  & \red{5} \\ \hline
      32         & 6          & 6         & $\cdot$ & 3 & 203 / 181      &  7.77 &  8 & 32       & 6        & 6       \\ \hline
      \green{32} & \green{8}  & \green{6} & 1       & 3 & 486 / 488      &  9.08 & 10 & \red{36} & \red{6}  & \red{5} \\ \hline
      36         & 6          & 6         & $\cdot$ & 2 & 28 / 25        &  8.60 &  9 & 32       & 8        & 6       \\ \hline
      36         & 8          & 6         & $\cdot$ & 3 & 175 / 128      &  9.04 & 10 & 36       & 8        & 6       \\ \hline
      38         & 8          & 6         & $\cdot$ & 3 & 45 / 54        &  9.73 & 10 & 36       & 8        & 6       \\ \hline
      \green{38} & \green{10} & \green{6} & $\cdot$ & 3 & 351 / 249      &  9.68 & 10 & \red{42} & \red{8}  & \red{5} \\ \hline
      \green{38} & \green{12} & \green{6} & 1       & 3 & 713 / 715      & 10.23 & 12 & \red{43} & \red{8}  & \red{4} \\ \hline
      \green{40} & \green{10} & \green{6} & $\cdot$ & 3 & 97 / 77        & 10.23 & 11 & \red{42} & \red{8}  & \red{5} \\ \hline
      \green{40} & \green{12} & \green{6} & $\cdot$ & 3 & 456 / 475      & 10.14 & 12 & \red{43} & \red{8}  & \red{4} \\ \hline
      42         & 8          & 6         & $-1$    & 2 & \textbf{4 / 3} &  9.82 & 10 & 36       & 10       & 6       \\ \hline
      42         & 10         & 6         & $-1$    & 2 & 36 / 31        & 10.62 & 11 & 42       & 10       & 6       \\ \hline
      \green{42} & \green{12} & \green{6} & $\cdot$ & 3 & 153 / 167      & 10.83 & 12 & \red{43} & \red{10} & \red{5} \\ \hline
      \green{44} & \green{8}  & \green{7} & $\cdot$ & 3 & 133 / 139      &  9.75 & 11 & \red{51} & \red{6}  & \red{6} \\ \hline
      48         & 6          & 8         & $-1$    & 3 & 352 / 341      &  9.95 & 10 & 48       & 6        & 8       \\ \hline
      50         & 6          & 8         & $-1$    & 3 & 83 / 90        & 10.16 & 11 & 48       & 6        & 8
    \end{tabular}
    \label{tab:codes_css}
  \end{subtable}
  \hfill
  \begin{subtable}[t]{0.48\textwidth}
    \centering
    \caption{SWEL}
    \begin{tabular}{c|c|c||c|c||c|c|c||c|c|c}
      $n$ & $k$ & $d$ & $\delta_{\mathrm{QGV}}$ & $\delta_{\mathrm{CSS}}$ & \thead{min-weight\\nontrivial\\logicals} & \thead{mean\\stab.\\weight} & \thead{max\\stab.\\weight} & $n_\mathrm{best}^{\mathrm{prior}}$ & $k_\mathrm{best}^{\mathrm{prior}}$ & $d_\mathrm{best}^{\mathrm{prior}}$ \\ \hline\hline
      20         & 4          & 4         & $\cdot$ & 2 & \textbf{5 / 5} &  8    &  8 & 16       & 6        & 4       \\ \hline
      20         & 6          & 4         & 1       & 2 & 53 / 53        &  7.14 &  8 & 16       & 6        & 4       \\ \hline
      22         & 4          & 4         & $\cdot$ & 2 & \textbf{5 / 5} &  7.33 &  8 & 16       & 8        & 4       \\ \hline
      22         & 6          & 4         & $\cdot$ & 2 & 24 / 24        &  7.0  &  8 & 16       & 8        & 4       \\ \hline
      24         & 4          & 4         & $-1$    & 2 & \textbf{1 / 1} &  7.0  &  8 & 16       & 8        & 4       \\ \hline
      24         & 6          & 4         & $\cdot$ & 2 & 10 / 10        &  8.22 & 10 & 16       & 8        & 4       \\ \hline
      25         & 5          & 4         & $\cdot$ & 2 & \textbf{3 / 3} &  7.60 &  8 & 16       & 8        & 4       \\ \hline
      \green{26} & \green{4}  & \green{5} & $\cdot$ & 2 & 32 / 32        &  7.27 &  8 & \red{28} & \red{3}  & \red{4} \\ \hline
      26         & 6          & 4         & $\cdot$ & 2 & \textbf{3 / 3} &  8.20 & 10 & 16       & 8        & 4       \\ \hline
      30         & 4          & 5         & $-1$    & 2 & \textbf{4 / 4} &  8.15 & 10 & 28       & 4        & 5       \\ \hline
      \green{30} & \green{6}  & \green{5} & $\cdot$ & 2 & 28 / 28        &  8.17 & 10 & \red{31} & \red{4}  & \red{4} \\ \hline
      32         & 6          & 5         & $-1$    & 2 & 8 / 8          &  8.15 & 10 & 31       & 11       & 5       \\ \hline
      32         & 8          & 5         & $\cdot$ & 2 & 32 / 32        &  9.17 & 10 & 31       & 11       & 5       \\ \hline
      \green{36} & \green{6}  & \green{6} & $\cdot$ & 2 & 42 / 42        &  8.80 & 10 & \red{56} & \red{5}  & \red{5} \\ \hline
      \green{36} & \green{8}  & \green{6} & $\cdot$ & 3 & 130 / 130      &  9.14 & 10 & \red{56} & \red{5}  & \red{5} \\ \hline
      \green{38} & \green{8}  & \green{6} & $\cdot$ & 3 & 69 / 69        &  9.47 & 10 & \red{56} & \red{5}  & \red{5} \\ \hline
      38         & 10         & 5         & $-1$    & 2 & \textbf{2 / 2} &  9.86 & 10 & 31       & 11       & 5       \\ \hline
      \green{38} & \green{12} & \green{5} & $\cdot$ & 2 & 26 / 26        & 10.46 & 12 & \red{45} & \red{11} & \red{4} \\ \hline
      \green{40} & \green{10} & \green{6} & $\cdot$ & 3 & 108 / 108      & 10    & 10 & \red{56} & \red{5}  & \red{5} \\ \hline
      \green{40} & \green{12} & \green{5} & $-1$    & 2 & \textbf{4 / 4} & 10.43 & 12 & \red{45} & \red{11} & \red{4} \\ \hline
      \green{42} & \green{8}  & \green{6} & $-1$    & 2 & 10 / 10        &  9.76 & 10 & \red{56} & \red{5}  & \red{5} \\ \hline
      \green{42} & \green{10} & \green{6} & $-1$    & 2 & 52 / 52        & 10    & 10 & \red{56} & \red{5}  & \red{5} \\ \hline
      \green{42} & \green{12} & \green{6} & $\cdot$ & 3 & 172 / 172      & 11.20 & 12 & \red{56} & \red{5}  & \red{4} \\ \hline
      \green{44} & \green{8}  & \green{7} & $\cdot$ & 3 & 168 / 168      &  9.67 & 10 & \red{56} & \red{2}  & \red{5} \\ \hline
      \green{48} & \green{6}  & \green{7} & $-2$    & 2 & \textbf{4 / 4} & 10    & 10 & \red{56} & \red{2}  & \red{5} \\ \hline
      \green{50} & \green{6}  & \green{8} & $-1$    & 3 & 119 / 119      &  9.64 & 10 & \red{56} & \red{2}  & \red{5}
    \end{tabular}
    \label{tab:codes_swel}
  \end{subtable}
  \hfill~
  \label{tab:codes}
\end{sidewaystable*}

\begin{figure*}[p]
  \centering
  \begin{subfigure}{\textwidth}
    \centering
    \includegraphics{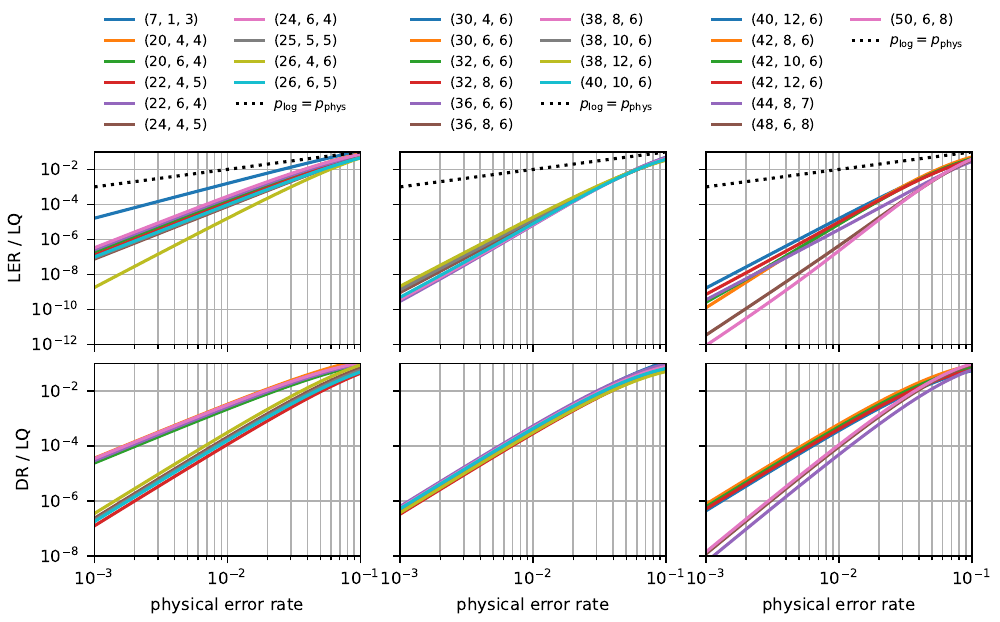}
    \caption{CSS}
  \end{subfigure}
  \\[.5em]
  \begin{subfigure}{\textwidth}
    \centering
    \includegraphics{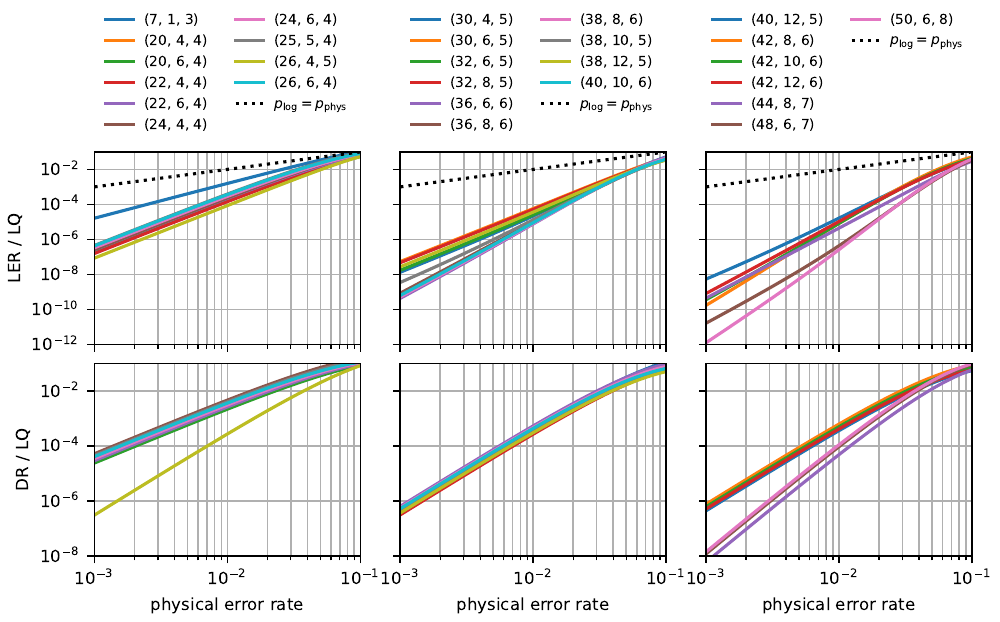}
    \caption{SWEL}
  \end{subfigure}%
  \caption{%
    Logical error rates (LER) and discard rates (DR) per logical qubit (LQ) for the codes in \cref{tab:codes} in a code-capacity noise model with a lookup table decoder.
    The decoder declares failure (triggering a discard) if it observes a syndrome that cannot be caused by any error of weight $w<\nicefrac{d}{2}$.
    The physical error rate corresponds to the IID probability that each qubit is subject to a randomly chosen $X$, $Y$, or $Z$ error.
  }
  \label{fig:error_rates}
\end{figure*}

\begin{figure*}
  \centering
  \includegraphics{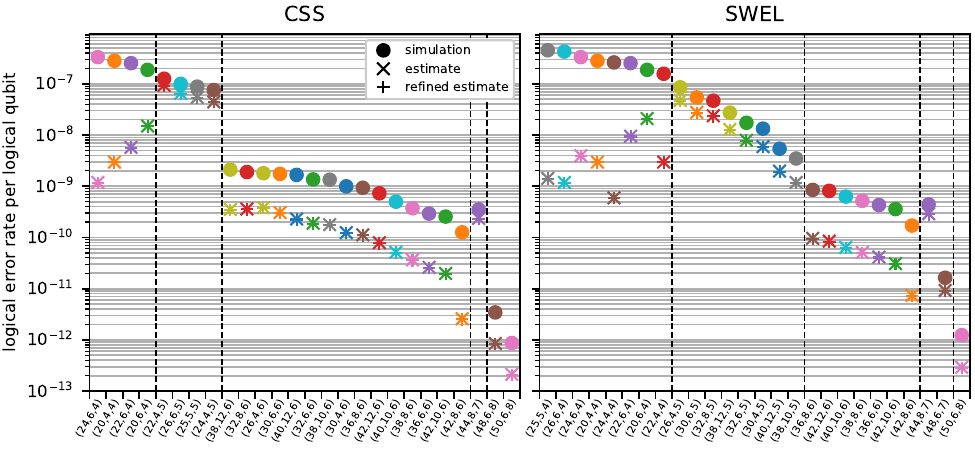}
  \caption{%
    Logical error rates from \cref{fig:error_rates} and corresponding estimates computed with \cref{eq:error_estimate} and \cref{eq:error_estimate_refined} at a physical error rate of $p_{\text{dep}}=10^{-3}$.
    Codes are sorted first by distance, and then by logical error rate.
    Colors match \cref{fig:error_rates}.
  }
  \label{fig:error_estimates}
\end{figure*}

\cref{fig:error_rates} shows the logical error rates of all codes in \cref{tab:codes} in a code-capacity model with qubitwise-IID depolarizing errors.
We compute code-capacity logical error rates with the importance sampling Monte Carlo method implemented in the Python library \texttt{qLDPC}~\cite{perlin2023qldpc}, sampling $10^7$ errors per code.
Errors are decoded with a lookup table that is built by considering all errors of integer weight $w < \nicefrac{d}{2}$, and identifying each observed syndrome with the lowest-weight error that induces it.
If the decoder encounters a syndrome that is not in the lookup table, it declares failure, which contributes to the discard rates in the bottom panels of \cref{fig:error_rates} rather than the logical error rates on top.
To meaningfully compare codes with different numbers of logical qubits, logical error rates and discard rates are divided by the number of logical qubits in a code block.
We remark that the discard rates in \cref{fig:error_rates} generally exceed the logical error rates.
For odd-distance codes, it may be possible to mitigate discard rates at the cost of increased logical error rates by using a decoder that is less eager to declare failure.
For even-distance codes, however, this discard rate reflects the unavoidable $O(p^{d/2})$ probability of uncorrectable errors that otherwise become logical errors with probability $\nicefrac{1}{2}$.

To validate our choice of annealing energy function, derived from the estimate of logical error rates in \cref{eq:error_estimate}, \cref{fig:error_estimates} compares the logical error rates in \cref{fig:error_rates} to their corresponding estimates at a physical error rate of $p_{\text{dep}}=10^{-3}$.
We set $p = \frac23 p_{\text{dep}}$ in the estimate to reflect the qubitwise-IID probability of an $X$ or $Z$ error.
To ensure that this estimate cannot be significantly improved upon by more careful counting arguments, we also plot a refined estimate that is computed as follows.
Let $\mu_X$ be the number of distinct half-supports (with weight $w_d = \left\lceil\frac{d+1}{2}\right\rceil$) of minimum-weight nontrivial logical $X$ operators.
That is, $\mu_X$ is obtained by enumerating all ${d \choose w_d}$ half-supports of all $2^{O(n)}$ nontrivial weight-$d$ logical $X$ operators, and removing duplicate half-supports from this list.
The probability of a logical $X$ error is then
\begin{align}
  p_L^X = \mu_X p^{w_d} (1-p)^{n-w_d} + O(p^{w_d+1}).
  \label{eq:error_estimate_x}
\end{align}
Defining $\mu_Z$ and $p_L^Z$ identically, the total probability of a logical error is then
\begin{align}
  p_L \approx 1 - (1 - p_L^X) (1 - p_L^Z),
  \label{eq:error_estimate_refined}
\end{align}
where the only approximation made in \cref{eq:error_estimate_refined} is to neglect correlations between $X$ and $Z$ errors.
Dropping the $O(p^{w_d+1})$ terms in \cref{eq:error_estimate_x} and substituting into \cref{eq:error_estimate_refined} yields the refined estimate in \cref{fig:error_estimates}, which agrees almost exactly with the much simpler estimate in \cref{eq:error_estimate}.

Altogether, \cref{fig:error_estimates} shows that the estimate in \cref{eq:error_estimate} reliably captures relative logical error rates of different codes at a fixed distance $d$ for $d>4$.
This behavior validates using the estimate in \cref{eq:error_estimate} to disambiguate codes with equal distance $d>4$.
The anomalous behavior at $d=4$ suggests possible room for improvement to optimize $d=4$ codes, though we remark that higher-distance codes are more industrially relevant.
Finally, we note that the estimate performs significantly better for $d\in\set{5,7}$ than $d\in\set{4,6,8}$, which may warrant further investigation.

\section{Discussion}
\label{sec:discussion}

In this work, we introduced an adaptive simulated annealing algorithm for finding moderate-sized QEC codes with favorable code parameters, with an eye towards near-term implementation on quantum hardware.
While the codes found by the search algorithm have high encoding rates and distances, further work is required to find codes suitable for use at commercial scale.
At face value, our search algorithm relies on subroutines whose runtime is exponential in the block length $n$, which presents challenges for scaling up (typical runtimes for the results in this work ranged from minutes to a couple of days, though the same results may be obtainable with more aggressive annealing schedules).
Nonetheless, it may be fruitful to push the search to larger codes (we did not attempt block lengths greater than $n=50$), and there are several avenues for optimization to scale up the search algorithm.
Potential avenues for improvement include: parallelizing the evaluation of the energy function (which is the computational bottleneck), better annealing techniques (such as parallel tempering, or an improved adaptive schedule), and more sophisticated search techniques based on machine learning or artificial intelligence.
There may also be room for improvement by borrowing techniques from from analogous efforts to find good classical error-correcting codes with simulated annealing~\cite{greenfield1993simulated, kanchi2026tunnelingaugmented}.

Any effort to apply adaptive annealing to larger codes will require faster evaluation of the energy function.
While there exist algorithms for evaluating code distance that are faster than brute force~\cite{webster2026distancefinding}, our energy function also uses the number of minimum-weight nontrivial logical operators to avoid barren plateaux.
Scaling up the search therefore also requires an optimized enumeration of nontrivial logical operators (e.g., with parallelization), or a cheaper (possibly randomized or approximate) energy function to resolve codes of equal distance, as in Ref.~\cite{freire2025optimizing}.
It may also be possible to extend the applicability of adaptive annealing by imposing symmetries on the initial codes and move sets.
Imposing symmetries should, in turn, make the evaluation of the energy function cheaper by roughly the size of the corresponding symmetry group.
We remark that there exist many high-rate and high-distance code families with large symmetry groups, such as cyclic and Reed-Muller codes.

The final, and perhaps simplest, option to produce commercially applicable codes from the codes found here is through code concatenation.
A two-level scheme with the bottom level consisting of $\params{8,\, 6,\, 2}$ Iceberg codes~\cite{self2024protecting}, with $\params{38,\, 12,\, 6}$ codes on the top level would result in code blocks with parameters $\params{304,\, 72,\, 12}$, permitting hundreds of logical qubits to be stored on a device with thousands of physical qubits, at a distance likely sufficient to suppress errors at a commercially relevant scale.

Going beyond the use of error-correcting codes for quantum memory requires finding low-overhead logical gate sets.
Transversal logical Clifford gates of a stabilizer code can be derived from its automorphisms~\cite{sayginel2025faulttolerant}.
Though it may seem unlikely for numerically optimized codes to possess a large automorphism group, a non-exhaustive AI-assisted search for matrix automorphisms of small codes in \cref{tab:codes} found, for example, that the $\params{20,6,4}$ SWEL code has at least 2304 SWAP-transversal logical Clifford gates (after modding out by logical Pauli gates; see \cref{sec:automorphisms}).
It may be fruitful to fold the consideration of automorphisms into a code search by, for example, enforcing suitable symmetries on the search space, similarly to the case of the SWEL code search.
In the particular case of SWEL codes, there are additional avenues to construct fault-tolerant logical gate sets~\cite{tansuwannont2025clifford}.

For a fixed code, targeted logical gates may additionally be constructed with flag circuits, as in Ref.~\cite{dasu2026flagging}.
Logical gates can also be incorporated into Knill error correction~\cite{sullivan2026injection}, or performed on ancilla blocks as part of a von-Neumann-like architecture~\cite{brun2015teleportationbased}.
Both of these schemes require the fault-tolerant preparation of CSS states, a task for which automated general-purpose methods are currently under active development~\cite{forlivesi2025flag, peham2025automated, criger2026automated}.
However, it is likely that the lowest-overhead architecture will be code-dependent.

\section{Acknowledgments}
We thank Matt DeCross and Selwyn Simsek for their comments on early drafts of this manuscript, as well as Ruslan Shaydulin and Rob Otter for executive support of this work.

\bibliography{main.bib}

\begin{thebibliography}{63}%
\makeatletter
\providecommand \@ifxundefined [1]{%
 \@ifx{#1\undefined}
}%
\providecommand \@ifnum [1]{%
 \ifnum #1\expandafter \@firstoftwo
 \else \expandafter \@secondoftwo
 \fi
}%
\providecommand \@ifx [1]{%
 \ifx #1\expandafter \@firstoftwo
 \else \expandafter \@secondoftwo
 \fi
}%
\providecommand \natexlab [1]{#1}%
\providecommand \enquote  [1]{``#1''}%
\providecommand \bibnamefont  [1]{#1}%
\providecommand \bibfnamefont [1]{#1}%
\providecommand \citenamefont [1]{#1}%
\providecommand \href@noop [0]{\@secondoftwo}%
\providecommand \href [0]{\begingroup \@sanitize@url \@href}%
\providecommand \@href[1]{\@@startlink{#1}\@@href}%
\providecommand \@@href[1]{\endgroup#1\@@endlink}%
\providecommand \@sanitize@url [0]{\catcode `\\12\catcode `\$12\catcode
  `\&12\catcode `\#12\catcode `\^12\catcode `\_12\catcode `\%12\relax}%
\providecommand \@@startlink[1]{}%
\providecommand \@@endlink[0]{}%
\providecommand \url  [0]{\begingroup\@sanitize@url \@url }%
\providecommand \@url [1]{\endgroup\@href {#1}{\urlprefix }}%
\providecommand \urlprefix  [0]{URL }%
\providecommand \Eprint [0]{\href }%
\providecommand \doibase [0]{https://doi.org/}%
\providecommand \selectlanguage [0]{\@gobble}%
\providecommand \bibinfo  [0]{\@secondoftwo}%
\providecommand \bibfield  [0]{\@secondoftwo}%
\providecommand \translation [1]{[#1]}%
\providecommand \BibitemOpen [0]{}%
\providecommand \bibitemStop [0]{}%
\providecommand \bibitemNoStop [0]{.\EOS\space}%
\providecommand \EOS [0]{\spacefactor3000\relax}%
\providecommand \BibitemShut  [1]{\csname bibitem#1\endcsname}%
\let\auto@bib@innerbib\@empty
\bibitem [{\citenamefont {Lidar}\ and\ \citenamefont
  {Brun}(2013)}]{lidar2013quantum}%
  \BibitemOpen
  \bibfield  {author} {\bibinfo {author} {\bibfnamefont {D.~A.}\ \bibnamefont
  {Lidar}}\ and\ \bibinfo {author} {\bibfnamefont {T.~A.}\ \bibnamefont
  {Brun}},\ }\href@noop {} {\emph {\bibinfo {title} {Quantum {{Error
  Correction}}}}}\ (\bibinfo  {publisher} {Cambridge University Press},\
  \bibinfo {year} {2013})\BibitemShut {NoStop}%
\bibitem [{\citenamefont {Nielsen}\ and\ \citenamefont
  {Chuang}(2010)}]{nielsen2010quantum}%
  \BibitemOpen
  \bibfield  {author} {\bibinfo {author} {\bibfnamefont {M.~A.}\ \bibnamefont
  {Nielsen}}\ and\ \bibinfo {author} {\bibfnamefont {I.~L.}\ \bibnamefont
  {Chuang}},\ }\href@noop {} {\emph {\bibinfo {title} {Quantum {{Computation}}
  and {{Quantum Information}}: 10th {{Anniversary Edition}}}}}\ (\bibinfo
  {publisher} {Cambridge University Press},\ \bibinfo {year}
  {2010})\BibitemShut {NoStop}%
\bibitem [{\citenamefont {Steane}(2003)}]{steane2003overhead}%
  \BibitemOpen
  \bibfield  {author} {\bibinfo {author} {\bibfnamefont {A.~M.}\ \bibnamefont
  {Steane}},\ }\href {https://doi.org/10.1103/PhysRevA.68.042322} {\bibfield
  {journal} {\bibinfo  {journal} {Physical Review A}\ }\textbf {\bibinfo
  {volume} {68}},\ \bibinfo {pages} {042322} (\bibinfo {year}
  {2003})}\BibitemShut {NoStop}%
\bibitem [{\citenamefont {Preskill}(2018)}]{preskill2018quantum}%
  \BibitemOpen
  \bibfield  {author} {\bibinfo {author} {\bibfnamefont {J.}~\bibnamefont
  {Preskill}},\ }\href {https://doi.org/10.22331/q-2018-08-06-79} {\bibfield
  {journal} {\bibinfo  {journal} {Quantum}\ }\textbf {\bibinfo {volume} {2}},\
  \bibinfo {pages} {79} (\bibinfo {year} {2018})}\BibitemShut {NoStop}%
\bibitem [{\citenamefont {He}\ \emph {et~al.}(2025{\natexlab{a}})\citenamefont
  {He}, \citenamefont {Amaro}, \citenamefont {Shaydulin},\ and\ \citenamefont
  {Pistoia}}]{he2025performance}%
  \BibitemOpen
  \bibfield  {author} {\bibinfo {author} {\bibfnamefont {Z.}~\bibnamefont
  {He}}, \bibinfo {author} {\bibfnamefont {D.}~\bibnamefont {Amaro}}, \bibinfo
  {author} {\bibfnamefont {R.}~\bibnamefont {Shaydulin}},\ and\ \bibinfo
  {author} {\bibfnamefont {M.}~\bibnamefont {Pistoia}},\ }\href
  {https://doi.org/10.1038/s42005-025-02136-8} {\bibfield  {journal} {\bibinfo
  {journal} {Communications Physics}\ }\textbf {\bibinfo {volume} {8}},\
  \bibinfo {pages} {217} (\bibinfo {year} {2025}{\natexlab{a}})}\BibitemShut
  {NoStop}%
\bibitem [{\citenamefont {Jin}\ \emph {et~al.}(2025)\citenamefont {Jin},
  \citenamefont {He}, \citenamefont {Hao}, \citenamefont {Amaro}, \citenamefont
  {Tannu}, \citenamefont {Shaydulin},\ and\ \citenamefont
  {Pistoia}}]{jin2025iceberg}%
  \BibitemOpen
  \bibfield  {author} {\bibinfo {author} {\bibfnamefont {Y.}~\bibnamefont
  {Jin}}, \bibinfo {author} {\bibfnamefont {Z.}~\bibnamefont {He}}, \bibinfo
  {author} {\bibfnamefont {T.}~\bibnamefont {Hao}}, \bibinfo {author}
  {\bibfnamefont {D.}~\bibnamefont {Amaro}}, \bibinfo {author} {\bibfnamefont
  {S.}~\bibnamefont {Tannu}}, \bibinfo {author} {\bibfnamefont
  {R.}~\bibnamefont {Shaydulin}},\ and\ \bibinfo {author} {\bibfnamefont
  {M.}~\bibnamefont {Pistoia}},\ }\href
  {https://doi.org/10.48550/arXiv.2504.21172} {\bibinfo {title} {Iceberg
  {{Beyond}} the {{Tip}}: {{Co-Compilation}} of a {{Quantum Error Detection
  Code}} and a {{Quantum Algorithm}}}} (\bibinfo {year} {2025}),\ \Eprint
  {https://arxiv.org/abs/2504.21172} {arXiv:2504.21172 [quant-ph]} \BibitemShut
  {NoStop}%
\bibitem [{\citenamefont {Self}\ \emph {et~al.}(2024)\citenamefont {Self},
  \citenamefont {Benedetti},\ and\ \citenamefont {Amaro}}]{self2024protecting}%
  \BibitemOpen
  \bibfield  {author} {\bibinfo {author} {\bibfnamefont {C.~N.}\ \bibnamefont
  {Self}}, \bibinfo {author} {\bibfnamefont {M.}~\bibnamefont {Benedetti}},\
  and\ \bibinfo {author} {\bibfnamefont {D.}~\bibnamefont {Amaro}},\ }\href
  {https://doi.org/10.1038/s41567-023-02282-2} {\bibfield  {journal} {\bibinfo
  {journal} {Nature Physics}\ }\textbf {\bibinfo {volume} {20}},\ \bibinfo
  {pages} {219} (\bibinfo {year} {2024})}\BibitemShut {NoStop}%
\bibitem [{\citenamefont {Bravyi}\ \emph {et~al.}(2010)\citenamefont {Bravyi},
  \citenamefont {Poulin},\ and\ \citenamefont {Terhal}}]{bravyi2010tradeoffs}%
  \BibitemOpen
  \bibfield  {author} {\bibinfo {author} {\bibfnamefont {S.}~\bibnamefont
  {Bravyi}}, \bibinfo {author} {\bibfnamefont {D.}~\bibnamefont {Poulin}},\
  and\ \bibinfo {author} {\bibfnamefont {B.}~\bibnamefont {Terhal}},\ }\href
  {https://doi.org/10.1103/PhysRevLett.104.050503} {\bibfield  {journal}
  {\bibinfo  {journal} {Physical Review Letters}\ }\textbf {\bibinfo {volume}
  {104}},\ \bibinfo {pages} {050503} (\bibinfo {year} {2010})}\BibitemShut
  {NoStop}%
\bibitem [{\citenamefont {Panteleev}\ and\ \citenamefont
  {Kalachev}(2022)}]{panteleev2022asymptotically}%
  \BibitemOpen
  \bibfield  {author} {\bibinfo {author} {\bibfnamefont {P.}~\bibnamefont
  {Panteleev}}\ and\ \bibinfo {author} {\bibfnamefont {G.}~\bibnamefont
  {Kalachev}},\ }in\ \href {https://doi.org/10.1145/3519935.3520017} {\emph
  {\bibinfo {booktitle} {Proceedings of the 54th {{Annual ACM SIGACT
  Symposium}} on {{Theory}} of {{Computing}}}}},\ \bibinfo {series and number}
  {{{STOC}} 2022}\ (\bibinfo  {publisher} {Association for Computing
  Machinery},\ \bibinfo {address} {New York, NY, USA},\ \bibinfo {year}
  {2022})\ pp.\ \bibinfo {pages} {375--388}\BibitemShut {NoStop}%
\bibitem [{\citenamefont {Dinur}\ \emph {et~al.}(2023)\citenamefont {Dinur},
  \citenamefont {Hsieh}, \citenamefont {Lin},\ and\ \citenamefont
  {Vidick}}]{dinur2023good}%
  \BibitemOpen
  \bibfield  {author} {\bibinfo {author} {\bibfnamefont {I.}~\bibnamefont
  {Dinur}}, \bibinfo {author} {\bibfnamefont {M.-H.}\ \bibnamefont {Hsieh}},
  \bibinfo {author} {\bibfnamefont {T.-C.}\ \bibnamefont {Lin}},\ and\ \bibinfo
  {author} {\bibfnamefont {T.}~\bibnamefont {Vidick}},\ }in\ \href
  {https://doi.org/10.1145/3564246.3585101} {\emph {\bibinfo {booktitle}
  {Proceedings of the 55th {{Annual ACM Symposium}} on {{Theory}} of
  {{Computing}}}}},\ \bibinfo {series and number} {{{STOC}} 2023}\ (\bibinfo
  {publisher} {Association for Computing Machinery},\ \bibinfo {address} {New
  York, NY, USA},\ \bibinfo {year} {2023})\ pp.\ \bibinfo {pages}
  {905--918}\BibitemShut {NoStop}%
\bibitem [{\citenamefont {Baspin}\ and\ \citenamefont
  {Krishna}(2022)}]{baspin2022quantifying}%
  \BibitemOpen
  \bibfield  {author} {\bibinfo {author} {\bibfnamefont {N.}~\bibnamefont
  {Baspin}}\ and\ \bibinfo {author} {\bibfnamefont {A.}~\bibnamefont
  {Krishna}},\ }\href {https://doi.org/10.1103/PhysRevLett.129.050505}
  {\bibfield  {journal} {\bibinfo  {journal} {Physical Review Letters}\
  }\textbf {\bibinfo {volume} {129}},\ \bibinfo {pages} {050505} (\bibinfo
  {year} {2022})}\BibitemShut {NoStop}%
\bibitem [{\citenamefont {Dai}\ and\ \citenamefont
  {Li}(2025)}]{dai2025locality}%
  \BibitemOpen
  \bibfield  {author} {\bibinfo {author} {\bibfnamefont {S.}~\bibnamefont
  {Dai}}\ and\ \bibinfo {author} {\bibfnamefont {R.}~\bibnamefont {Li}},\ }in\
  \href {https://doi.org/10.1145/3717823.3718113} {\emph {\bibinfo {booktitle}
  {Proceedings of the 57th {{Annual ACM Symposium}} on {{Theory}} of
  {{Computing}}}}},\ \bibinfo {series and number} {{{STOC}} '25}\ (\bibinfo
  {publisher} {Association for Computing Machinery},\ \bibinfo {address} {New
  York, NY, USA},\ \bibinfo {year} {2025})\ pp.\ \bibinfo {pages}
  {677--688}\BibitemShut {NoStop}%
\bibitem [{\citenamefont {Koutsioumpas}\ \emph {et~al.}(2025)\citenamefont
  {Koutsioumpas}, \citenamefont {Noszko}, \citenamefont {Sayginel},
  \citenamefont {Webster},\ and\ \citenamefont
  {Roffe}}]{koutsioumpas2025colour}%
  \BibitemOpen
  \bibfield  {author} {\bibinfo {author} {\bibfnamefont {S.}~\bibnamefont
  {Koutsioumpas}}, \bibinfo {author} {\bibfnamefont {T.}~\bibnamefont
  {Noszko}}, \bibinfo {author} {\bibfnamefont {H.}~\bibnamefont {Sayginel}},
  \bibinfo {author} {\bibfnamefont {M.}~\bibnamefont {Webster}},\ and\ \bibinfo
  {author} {\bibfnamefont {J.}~\bibnamefont {Roffe}},\ }\href
  {https://doi.org/10.48550/arXiv.2508.15743} {\bibinfo {title} {Colour {{Codes
  Reach Surface Code Performance}} using {{Vibe Decoding}}}} (\bibinfo {year}
  {2025}),\ \Eprint {https://arxiv.org/abs/2508.15743} {arXiv:2508.15743
  [quant-ph]} \BibitemShut {NoStop}%
\bibitem [{\citenamefont {Panteleev}\ and\ \citenamefont
  {Kalachev}(2021)}]{panteleev2021degenerate}%
  \BibitemOpen
  \bibfield  {author} {\bibinfo {author} {\bibfnamefont {P.}~\bibnamefont
  {Panteleev}}\ and\ \bibinfo {author} {\bibfnamefont {G.}~\bibnamefont
  {Kalachev}},\ }\href {https://doi.org/10.22331/q-2021-11-22-585} {\bibfield
  {journal} {\bibinfo  {journal} {Quantum}\ }\textbf {\bibinfo {volume} {5}},\
  \bibinfo {pages} {585} (\bibinfo {year} {2021})}\BibitemShut {NoStop}%
\bibitem [{\citenamefont {Williamson}\ and\ \citenamefont
  {Yoder}(2026)}]{williamson2026lowoverhead}%
  \BibitemOpen
  \bibfield  {author} {\bibinfo {author} {\bibfnamefont {D.~J.}\ \bibnamefont
  {Williamson}}\ and\ \bibinfo {author} {\bibfnamefont {T.~J.}\ \bibnamefont
  {Yoder}},\ }\href {https://doi.org/10.1038/s41567-026-03220-8} {\bibfield
  {journal} {\bibinfo  {journal} {Nature Physics}\ }\textbf {\bibinfo {volume}
  {22}},\ \bibinfo {pages} {598} (\bibinfo {year} {2026})}\BibitemShut
  {NoStop}%
\bibitem [{\citenamefont {Swaroop}\ \emph {et~al.}(2024)\citenamefont
  {Swaroop}, \citenamefont {{Jochym-O'Connor}},\ and\ \citenamefont
  {Yoder}}]{swaroop2024universal}%
  \BibitemOpen
  \bibfield  {author} {\bibinfo {author} {\bibfnamefont {E.}~\bibnamefont
  {Swaroop}}, \bibinfo {author} {\bibfnamefont {T.}~\bibnamefont
  {{Jochym-O'Connor}}},\ and\ \bibinfo {author} {\bibfnamefont {T.~J.}\
  \bibnamefont {Yoder}},\ }\href {https://doi.org/10.48550/arXiv.2410.03628}
  {\bibinfo {title} {Universal adapters between quantum {{LDPC}} codes}}
  (\bibinfo {year} {2024}),\ \Eprint {https://arxiv.org/abs/2410.03628}
  {arXiv:2410.03628} \BibitemShut {NoStop}%
\bibitem [{\citenamefont {He}\ \emph {et~al.}(2025{\natexlab{b}})\citenamefont
  {He}, \citenamefont {Cowtan}, \citenamefont {Williamson},\ and\ \citenamefont
  {Yoder}}]{he2025extractors}%
  \BibitemOpen
  \bibfield  {author} {\bibinfo {author} {\bibfnamefont {Z.}~\bibnamefont
  {He}}, \bibinfo {author} {\bibfnamefont {A.}~\bibnamefont {Cowtan}}, \bibinfo
  {author} {\bibfnamefont {D.~J.}\ \bibnamefont {Williamson}},\ and\ \bibinfo
  {author} {\bibfnamefont {T.~J.}\ \bibnamefont {Yoder}},\ }\href
  {https://doi.org/10.48550/arXiv.2503.10390} {\bibinfo {title} {Extractors:
  {{QLDPC Architectures}} for {{Efficient Pauli-Based Computation}}}} (\bibinfo
  {year} {2025}{\natexlab{b}}),\ \Eprint {https://arxiv.org/abs/2503.10390}
  {arXiv:2503.10390} \BibitemShut {NoStop}%
\bibitem [{\citenamefont {Baspin}\ \emph {et~al.}(2025)\citenamefont {Baspin},
  \citenamefont {Berent},\ and\ \citenamefont {Cohen}}]{baspin2025fast}%
  \BibitemOpen
  \bibfield  {author} {\bibinfo {author} {\bibfnamefont {N.}~\bibnamefont
  {Baspin}}, \bibinfo {author} {\bibfnamefont {L.}~\bibnamefont {Berent}},\
  and\ \bibinfo {author} {\bibfnamefont {L.~Z.}\ \bibnamefont {Cohen}},\ }\href
  {https://doi.org/10.48550/arXiv.2510.04521} {\bibinfo {title} {Fast surgery
  for quantum {{LDPC}} codes}} (\bibinfo {year} {2025}),\ \Eprint
  {https://arxiv.org/abs/2510.04521} {arXiv:2510.04521 [quant-ph]} \BibitemShut
  {NoStop}%
\bibitem [{\citenamefont {Goto}(2024)}]{goto2024highperformance}%
  \BibitemOpen
  \bibfield  {author} {\bibinfo {author} {\bibfnamefont {H.}~\bibnamefont
  {Goto}},\ }\href {https://doi.org/10.1126/sciadv.adp6388} {\bibfield
  {journal} {\bibinfo  {journal} {Science Advances}\ }\textbf {\bibinfo
  {volume} {10}},\ \bibinfo {pages} {eadp6388} (\bibinfo {year}
  {2024})}\BibitemShut {NoStop}%
\bibitem [{\citenamefont {Yamasaki}\ and\ \citenamefont
  {Koashi}(2024)}]{yamasaki2024timeefficient}%
  \BibitemOpen
  \bibfield  {author} {\bibinfo {author} {\bibfnamefont {H.}~\bibnamefont
  {Yamasaki}}\ and\ \bibinfo {author} {\bibfnamefont {M.}~\bibnamefont
  {Koashi}},\ }\href {https://doi.org/10.1038/s41567-023-02325-8} {\bibfield
  {journal} {\bibinfo  {journal} {Nature Physics}\ }\textbf {\bibinfo {volume}
  {20}},\ \bibinfo {pages} {247} (\bibinfo {year} {2024})}\BibitemShut
  {NoStop}%
\bibitem [{\citenamefont {Yoshida}\ \emph {et~al.}(2025)\citenamefont
  {Yoshida}, \citenamefont {Tamiya},\ and\ \citenamefont
  {Yamasaki}}]{yoshida2025concatenate}%
  \BibitemOpen
  \bibfield  {author} {\bibinfo {author} {\bibfnamefont {S.}~\bibnamefont
  {Yoshida}}, \bibinfo {author} {\bibfnamefont {S.}~\bibnamefont {Tamiya}},\
  and\ \bibinfo {author} {\bibfnamefont {H.}~\bibnamefont {Yamasaki}},\ }\href
  {https://doi.org/10.1038/s41534-025-01035-8} {\bibfield  {journal} {\bibinfo
  {journal} {npj Quantum Information}\ }\textbf {\bibinfo {volume} {11}},\
  \bibinfo {pages} {88} (\bibinfo {year} {2025})}\BibitemShut {NoStop}%
\bibitem [{\citenamefont {Nakai}\ and\ \citenamefont
  {Goto}(2026)}]{nakai2026subsystem}%
  \BibitemOpen
  \bibfield  {author} {\bibinfo {author} {\bibfnamefont {R.}~\bibnamefont
  {Nakai}}\ and\ \bibinfo {author} {\bibfnamefont {H.}~\bibnamefont {Goto}},\
  }\href {https://doi.org/10.1103/xbzn-vn37} {\bibfield  {journal} {\bibinfo
  {journal} {Physical Review Applied}\ }\textbf {\bibinfo {volume} {25}},\
  \bibinfo {pages} {014032} (\bibinfo {year} {2026})}\BibitemShut {NoStop}%
\bibitem [{\citenamefont {Pastawski}\ \emph {et~al.}(2015)\citenamefont
  {Pastawski}, \citenamefont {Yoshida}, \citenamefont {Harlow},\ and\
  \citenamefont {Preskill}}]{pastawski2015holographic}%
  \BibitemOpen
  \bibfield  {author} {\bibinfo {author} {\bibfnamefont {F.}~\bibnamefont
  {Pastawski}}, \bibinfo {author} {\bibfnamefont {B.}~\bibnamefont {Yoshida}},
  \bibinfo {author} {\bibfnamefont {D.}~\bibnamefont {Harlow}},\ and\ \bibinfo
  {author} {\bibfnamefont {J.}~\bibnamefont {Preskill}},\ }\href
  {https://doi.org/10.1007/JHEP06(2015)149} {\bibfield  {journal} {\bibinfo
  {journal} {Journal of High Energy Physics}\ }\textbf {\bibinfo {volume}
  {2015}},\ \bibinfo {pages} {149} (\bibinfo {year} {2015})}\BibitemShut
  {NoStop}%
\bibitem [{\citenamefont {Harris}\ \emph {et~al.}(2018)\citenamefont {Harris},
  \citenamefont {McMahon}, \citenamefont {Brennen},\ and\ \citenamefont
  {Stace}}]{harris2018calderbankshorsteane}%
  \BibitemOpen
  \bibfield  {author} {\bibinfo {author} {\bibfnamefont {R.~J.}\ \bibnamefont
  {Harris}}, \bibinfo {author} {\bibfnamefont {N.~A.}\ \bibnamefont {McMahon}},
  \bibinfo {author} {\bibfnamefont {G.~K.}\ \bibnamefont {Brennen}},\ and\
  \bibinfo {author} {\bibfnamefont {T.~M.}\ \bibnamefont {Stace}},\ }\href
  {https://doi.org/10.1103/PhysRevA.98.052301} {\bibfield  {journal} {\bibinfo
  {journal} {Physical Review A}\ }\textbf {\bibinfo {volume} {98}},\ \bibinfo
  {pages} {052301} (\bibinfo {year} {2018})}\BibitemShut {NoStop}%
\bibitem [{\citenamefont {Harris}\ \emph {et~al.}(2020)\citenamefont {Harris},
  \citenamefont {Coupe}, \citenamefont {McMahon}, \citenamefont {Brennen},\
  and\ \citenamefont {Stace}}]{harris2020decoding}%
  \BibitemOpen
  \bibfield  {author} {\bibinfo {author} {\bibfnamefont {R.~J.}\ \bibnamefont
  {Harris}}, \bibinfo {author} {\bibfnamefont {E.}~\bibnamefont {Coupe}},
  \bibinfo {author} {\bibfnamefont {N.~A.}\ \bibnamefont {McMahon}}, \bibinfo
  {author} {\bibfnamefont {G.~K.}\ \bibnamefont {Brennen}},\ and\ \bibinfo
  {author} {\bibfnamefont {T.~M.}\ \bibnamefont {Stace}},\ }\href
  {https://doi.org/10.1103/PhysRevA.102.062417} {\bibfield  {journal} {\bibinfo
   {journal} {Physical Review A}\ }\textbf {\bibinfo {volume} {102}},\ \bibinfo
  {pages} {062417} (\bibinfo {year} {2020})}\BibitemShut {NoStop}%
\bibitem [{\citenamefont {Jahn}\ and\ \citenamefont
  {Eisert}(2021)}]{jahn2021holographic}%
  \BibitemOpen
  \bibfield  {author} {\bibinfo {author} {\bibfnamefont {A.}~\bibnamefont
  {Jahn}}\ and\ \bibinfo {author} {\bibfnamefont {J.}~\bibnamefont {Eisert}},\
  }\href {https://doi.org/10.1088/2058-9565/ac0293} {\bibfield  {journal}
  {\bibinfo  {journal} {Quantum Science and Technology}\ }\textbf {\bibinfo
  {volume} {6}},\ \bibinfo {pages} {033002} (\bibinfo {year}
  {2021})}\BibitemShut {NoStop}%
\bibitem [{\citenamefont {Farrelly}\ \emph {et~al.}(2022)\citenamefont
  {Farrelly}, \citenamefont {Milicevic}, \citenamefont {Harris}, \citenamefont
  {McMahon},\ and\ \citenamefont {Stace}}]{farrelly2022parallel}%
  \BibitemOpen
  \bibfield  {author} {\bibinfo {author} {\bibfnamefont {T.}~\bibnamefont
  {Farrelly}}, \bibinfo {author} {\bibfnamefont {N.}~\bibnamefont {Milicevic}},
  \bibinfo {author} {\bibfnamefont {R.~J.}\ \bibnamefont {Harris}}, \bibinfo
  {author} {\bibfnamefont {N.~A.}\ \bibnamefont {McMahon}},\ and\ \bibinfo
  {author} {\bibfnamefont {T.~M.}\ \bibnamefont {Stace}},\ }\href
  {https://doi.org/10.1103/PhysRevA.105.052446} {\bibfield  {journal} {\bibinfo
   {journal} {Physical Review A}\ }\textbf {\bibinfo {volume} {105}},\ \bibinfo
  {pages} {052446} (\bibinfo {year} {2022})}\BibitemShut {NoStop}%
\bibitem [{\citenamefont {Steinberg}\ \emph
  {et~al.}(2025{\natexlab{a}})\citenamefont {Steinberg}, \citenamefont {Fan},
  \citenamefont {Harris}, \citenamefont {Elkouss}, \citenamefont {Feld},\ and\
  \citenamefont {Jahn}}]{steinberg2025far}%
  \BibitemOpen
  \bibfield  {author} {\bibinfo {author} {\bibfnamefont {M.}~\bibnamefont
  {Steinberg}}, \bibinfo {author} {\bibfnamefont {J.}~\bibnamefont {Fan}},
  \bibinfo {author} {\bibfnamefont {R.~J.}\ \bibnamefont {Harris}}, \bibinfo
  {author} {\bibfnamefont {D.}~\bibnamefont {Elkouss}}, \bibinfo {author}
  {\bibfnamefont {S.}~\bibnamefont {Feld}},\ and\ \bibinfo {author}
  {\bibfnamefont {A.}~\bibnamefont {Jahn}},\ }\href
  {https://doi.org/10.22331/q-2025-08-08-1826} {\bibfield  {journal} {\bibinfo
  {journal} {Quantum}\ }\textbf {\bibinfo {volume} {9}},\ \bibinfo {pages}
  {1826} (\bibinfo {year} {2025}{\natexlab{a}})}\BibitemShut {NoStop}%
\bibitem [{\citenamefont {Steinberg}\ \emph
  {et~al.}(2025{\natexlab{b}})\citenamefont {Steinberg}, \citenamefont {Fan},
  \citenamefont {Eisert}, \citenamefont {Feld}, \citenamefont {Jahn},\ and\
  \citenamefont {Cao}}]{steinberg2025universal}%
  \BibitemOpen
  \bibfield  {author} {\bibinfo {author} {\bibfnamefont {M.}~\bibnamefont
  {Steinberg}}, \bibinfo {author} {\bibfnamefont {J.}~\bibnamefont {Fan}},
  \bibinfo {author} {\bibfnamefont {J.}~\bibnamefont {Eisert}}, \bibinfo
  {author} {\bibfnamefont {S.}~\bibnamefont {Feld}}, \bibinfo {author}
  {\bibfnamefont {A.}~\bibnamefont {Jahn}},\ and\ \bibinfo {author}
  {\bibfnamefont {C.}~\bibnamefont {Cao}},\ }\href
  {https://arxiv.org/abs/2504.10386} {\bibinfo {title} {Universal
  fault-tolerant logic with heterogeneous holographic codes}} (\bibinfo {year}
  {2025}{\natexlab{b}}),\ \Eprint {https://arxiv.org/abs/2504.10386}
  {arXiv:2504.10386 [quant-ph]} \BibitemShut {NoStop}%
\bibitem [{\citenamefont {Bluvstein}\ \emph {et~al.}(2022)\citenamefont
  {Bluvstein}, \citenamefont {Levine}, \citenamefont {Semeghini}, \citenamefont
  {Wang}, \citenamefont {Ebadi}, \citenamefont {Kalinowski}, \citenamefont
  {Keesling}, \citenamefont {Maskara}, \citenamefont {Pichler}, \citenamefont
  {Greiner}, \citenamefont {Vuleti{\'c}},\ and\ \citenamefont
  {Lukin}}]{bluvstein2022quantum}%
  \BibitemOpen
  \bibfield  {author} {\bibinfo {author} {\bibfnamefont {D.}~\bibnamefont
  {Bluvstein}}, \bibinfo {author} {\bibfnamefont {H.}~\bibnamefont {Levine}},
  \bibinfo {author} {\bibfnamefont {G.}~\bibnamefont {Semeghini}}, \bibinfo
  {author} {\bibfnamefont {T.~T.}\ \bibnamefont {Wang}}, \bibinfo {author}
  {\bibfnamefont {S.}~\bibnamefont {Ebadi}}, \bibinfo {author} {\bibfnamefont
  {M.}~\bibnamefont {Kalinowski}}, \bibinfo {author} {\bibfnamefont
  {A.}~\bibnamefont {Keesling}}, \bibinfo {author} {\bibfnamefont
  {N.}~\bibnamefont {Maskara}}, \bibinfo {author} {\bibfnamefont
  {H.}~\bibnamefont {Pichler}}, \bibinfo {author} {\bibfnamefont
  {M.}~\bibnamefont {Greiner}}, \bibinfo {author} {\bibfnamefont
  {V.}~\bibnamefont {Vuleti{\'c}}},\ and\ \bibinfo {author} {\bibfnamefont
  {M.~D.}\ \bibnamefont {Lukin}},\ }\href
  {https://doi.org/10.1038/s41586-022-04592-6} {\bibfield  {journal} {\bibinfo
  {journal} {Nature}\ }\textbf {\bibinfo {volume} {604}},\ \bibinfo {pages}
  {451} (\bibinfo {year} {2022})}\BibitemShut {NoStop}%
\bibitem [{\citenamefont {Ransford}\ \emph {et~al.}(2025)\citenamefont
  {Ransford}, \citenamefont {Allman}, \citenamefont {Arkinstall}, \citenamefont
  {Campora}, \citenamefont {Cooper}, \citenamefont {Delaney}, \citenamefont
  {Dreiling}, \citenamefont {Estey}, \citenamefont {Figgatt}, \citenamefont
  {Hall}, \citenamefont {Husain}, \citenamefont {Isanaka}, \citenamefont
  {Kennedy}, \citenamefont {Kotibhaskar}, \citenamefont {Madjarov},
  \citenamefont {Mayer}, \citenamefont {Milne}, \citenamefont {Park},
  \citenamefont {Reed}, \citenamefont {Ancona}, \citenamefont {Andersen},
  \citenamefont {{Andres-Martinez}}, \citenamefont {Angenent}, \citenamefont
  {Argueta}, \citenamefont {Arkin}, \citenamefont {Ascarrunz}, \citenamefont
  {Baker}, \citenamefont {Barnes}, \citenamefont {Bartolotta}, \citenamefont
  {Berg}, \citenamefont {Besand}, \citenamefont {Bjork}, \citenamefont {Blain},
  \citenamefont {Blanchard}, \citenamefont {{Blume-Kohout}}, \citenamefont
  {Bohn}, \citenamefont {Borgna}, \citenamefont {Botamanenko}, \citenamefont
  {Boutelle}, \citenamefont {Brown}, \citenamefont {Buckingham}, \citenamefont
  {Burdick}, \citenamefont {Burton}, \citenamefont {Carey}, \citenamefont
  {Carron}, \citenamefont {Chambers}, \citenamefont {Children}, \citenamefont
  {Colussi}, \citenamefont {Crepinsek}, \citenamefont {Cureton}, \citenamefont
  {Davies}, \citenamefont {Davis}, \citenamefont {DeCross}, \citenamefont
  {Deen}, \citenamefont {Delaney}, \citenamefont {DelVento}, \citenamefont
  {DeSalvo}, \citenamefont {Dominy}, \citenamefont {Duncan}, \citenamefont
  {Eccles}, \citenamefont {Edgington}, \citenamefont {Erickson}, \citenamefont
  {Erickson}, \citenamefont {Ertsgaard}, \citenamefont {Evans}, \citenamefont
  {Evans}, \citenamefont {Fabrikant}, \citenamefont {Fischer}, \citenamefont
  {Foltz}, \citenamefont {{Foss-Feig}}, \citenamefont {Francois}, \citenamefont
  {Freyberg}, \citenamefont {Gao}, \citenamefont {Garay}, \citenamefont
  {Garvin}, \citenamefont {Gaudiosi}, \citenamefont {Gilbreth}, \citenamefont
  {Giles}, \citenamefont {Glynn}, \citenamefont {Graves}, \citenamefont
  {Hansen}, \citenamefont {Hayes}, \citenamefont {Heidemann}, \citenamefont
  {Higashi}, \citenamefont {Hilbun}, \citenamefont {Hines}, \citenamefont
  {Hlavaty}, \citenamefont {Hoffman}, \citenamefont {Hoffman}, \citenamefont
  {Holliman}, \citenamefont {Hooper}, \citenamefont {Horning}, \citenamefont
  {Hostetter}, \citenamefont {Hothem}, \citenamefont {Houlton}, \citenamefont
  {Hout}, \citenamefont {Hutson}, \citenamefont {Jacobs}, \citenamefont
  {Jacobs}, \citenamefont {Johannsen}, \citenamefont {Johansen}, \citenamefont
  {Jones}, \citenamefont {Julian}, \citenamefont {Jung}, \citenamefont {Keay},
  \citenamefont {Klein}, \citenamefont {Koch}, \citenamefont {Kondo},
  \citenamefont {Kong}, \citenamefont {Kosto}, \citenamefont {Lawrence},
  \citenamefont {Liefer}, \citenamefont {Lollie}, \citenamefont {Lucchetti},
  \citenamefont {Lysne}, \citenamefont {Lytle}, \citenamefont {MacPherson},
  \citenamefont {Malm}, \citenamefont {Mather}, \citenamefont {Mathewson},
  \citenamefont {Maxwell}, \citenamefont {McCaffrey}, \citenamefont
  {McDougall}, \citenamefont {Mendoza}, \citenamefont {Mills}, \citenamefont
  {Morrison}, \citenamefont {Narmour}, \citenamefont {Nguyen}, \citenamefont
  {Nugent}, \citenamefont {Olson}, \citenamefont {Ouellette}, \citenamefont
  {Parks}, \citenamefont {Peters}, \citenamefont {Petricka}, \citenamefont
  {Pino}, \citenamefont {Polito}, \citenamefont {Preidl}, \citenamefont
  {Price}, \citenamefont {Proctor}, \citenamefont {Pugh}, \citenamefont
  {Ratcliff}, \citenamefont {Raymondson}, \citenamefont {Rhodes}, \citenamefont
  {Roman}, \citenamefont {Roy}, \citenamefont {{Ryan-Anderson}}, \citenamefont
  {Sanchez}, \citenamefont {Sangiolo}, \citenamefont {Sawadski}, \citenamefont
  {Schaffer}, \citenamefont {Schow}, \citenamefont {Sedlacek}, \citenamefont
  {Semenenko}, \citenamefont {Shevchuk}, \citenamefont {Shore}, \citenamefont
  {Siegfried}, \citenamefont {Singhal}, \citenamefont {Sivarajah},
  \citenamefont {Skripka}, \citenamefont {Sletten}, \citenamefont {Spaun},
  \citenamefont {Sprenkle}, \citenamefont {Stoufer}, \citenamefont {Tader},
  \citenamefont {Taylor}, \citenamefont {Thompson}, \citenamefont {Tobey},
  \citenamefont {Tran}, \citenamefont {Tran}, \citenamefont {Vittorini},
  \citenamefont {Volin}, \citenamefont {Walker}, \citenamefont {White},
  \citenamefont {Wilson}, \citenamefont {Wolf}, \citenamefont {Wringe},
  \citenamefont {Young}, \citenamefont {Zheng}, \citenamefont {Zuraski},
  \citenamefont {Baldwin}, \citenamefont {Chernoguzov}, \citenamefont
  {Gaebler}, \citenamefont {Sanders}, \citenamefont {Neyenhuis}, \citenamefont
  {Stutz},\ and\ \citenamefont {Bohnet}}]{ransford2025helios}%
  \BibitemOpen
  \bibfield  {author} {\bibinfo {author} {\bibfnamefont {A.}~\bibnamefont
  {Ransford}}, \bibinfo {author} {\bibfnamefont {M.~S.}\ \bibnamefont
  {Allman}}, \bibinfo {author} {\bibfnamefont {J.}~\bibnamefont {Arkinstall}},
  \bibinfo {author} {\bibfnamefont {J.~P.}\ \bibnamefont {Campora}}, \bibinfo
  {author} {\bibfnamefont {S.~F.}\ \bibnamefont {Cooper}}, \bibinfo {author}
  {\bibfnamefont {R.~D.}\ \bibnamefont {Delaney}}, \bibinfo {author}
  {\bibfnamefont {J.~M.}\ \bibnamefont {Dreiling}}, \bibinfo {author}
  {\bibfnamefont {B.}~\bibnamefont {Estey}}, \bibinfo {author} {\bibfnamefont
  {C.}~\bibnamefont {Figgatt}}, \bibinfo {author} {\bibfnamefont
  {A.}~\bibnamefont {Hall}}, \bibinfo {author} {\bibfnamefont {A.~A.}\
  \bibnamefont {Husain}}, \bibinfo {author} {\bibfnamefont {A.}~\bibnamefont
  {Isanaka}}, \bibinfo {author} {\bibfnamefont {C.~J.}\ \bibnamefont
  {Kennedy}}, \bibinfo {author} {\bibfnamefont {N.}~\bibnamefont
  {Kotibhaskar}}, \bibinfo {author} {\bibfnamefont {I.~S.}\ \bibnamefont
  {Madjarov}}, \bibinfo {author} {\bibfnamefont {K.}~\bibnamefont {Mayer}},
  \bibinfo {author} {\bibfnamefont {A.~R.}\ \bibnamefont {Milne}}, \bibinfo
  {author} {\bibfnamefont {A.~J.}\ \bibnamefont {Park}}, \bibinfo {author}
  {\bibfnamefont {A.~P.}\ \bibnamefont {Reed}}, \bibinfo {author}
  {\bibfnamefont {R.}~\bibnamefont {Ancona}}, \bibinfo {author} {\bibfnamefont
  {M.~P.}\ \bibnamefont {Andersen}}, \bibinfo {author} {\bibfnamefont
  {P.}~\bibnamefont {{Andres-Martinez}}}, \bibinfo {author} {\bibfnamefont
  {W.}~\bibnamefont {Angenent}}, \bibinfo {author} {\bibfnamefont
  {L.}~\bibnamefont {Argueta}}, \bibinfo {author} {\bibfnamefont
  {B.}~\bibnamefont {Arkin}}, \bibinfo {author} {\bibfnamefont
  {L.}~\bibnamefont {Ascarrunz}}, \bibinfo {author} {\bibfnamefont
  {W.}~\bibnamefont {Baker}}, \bibinfo {author} {\bibfnamefont
  {C.}~\bibnamefont {Barnes}}, \bibinfo {author} {\bibfnamefont
  {J.}~\bibnamefont {Bartolotta}}, \bibinfo {author} {\bibfnamefont
  {J.}~\bibnamefont {Berg}}, \bibinfo {author} {\bibfnamefont {R.}~\bibnamefont
  {Besand}}, \bibinfo {author} {\bibfnamefont {B.}~\bibnamefont {Bjork}},
  \bibinfo {author} {\bibfnamefont {M.}~\bibnamefont {Blain}}, \bibinfo
  {author} {\bibfnamefont {P.}~\bibnamefont {Blanchard}}, \bibinfo {author}
  {\bibfnamefont {R.}~\bibnamefont {{Blume-Kohout}}}, \bibinfo {author}
  {\bibfnamefont {M.}~\bibnamefont {Bohn}}, \bibinfo {author} {\bibfnamefont
  {A.}~\bibnamefont {Borgna}}, \bibinfo {author} {\bibfnamefont {D.~Y.}\
  \bibnamefont {Botamanenko}}, \bibinfo {author} {\bibfnamefont
  {R.}~\bibnamefont {Boutelle}}, \bibinfo {author} {\bibfnamefont
  {N.}~\bibnamefont {Brown}}, \bibinfo {author} {\bibfnamefont {G.~T.}\
  \bibnamefont {Buckingham}}, \bibinfo {author} {\bibfnamefont {N.~Q.}\
  \bibnamefont {Burdick}}, \bibinfo {author} {\bibfnamefont {W.~C.}\
  \bibnamefont {Burton}}, \bibinfo {author} {\bibfnamefont {V.}~\bibnamefont
  {Carey}}, \bibinfo {author} {\bibfnamefont {C.~J.}\ \bibnamefont {Carron}},
  \bibinfo {author} {\bibfnamefont {J.}~\bibnamefont {Chambers}}, \bibinfo
  {author} {\bibfnamefont {J.}~\bibnamefont {Children}}, \bibinfo {author}
  {\bibfnamefont {V.~E.}\ \bibnamefont {Colussi}}, \bibinfo {author}
  {\bibfnamefont {S.}~\bibnamefont {Crepinsek}}, \bibinfo {author}
  {\bibfnamefont {A.}~\bibnamefont {Cureton}}, \bibinfo {author} {\bibfnamefont
  {J.}~\bibnamefont {Davies}}, \bibinfo {author} {\bibfnamefont
  {D.}~\bibnamefont {Davis}}, \bibinfo {author} {\bibfnamefont
  {M.}~\bibnamefont {DeCross}}, \bibinfo {author} {\bibfnamefont
  {D.}~\bibnamefont {Deen}}, \bibinfo {author} {\bibfnamefont {C.}~\bibnamefont
  {Delaney}}, \bibinfo {author} {\bibfnamefont {D.}~\bibnamefont {DelVento}},
  \bibinfo {author} {\bibfnamefont {B.~J.}\ \bibnamefont {DeSalvo}}, \bibinfo
  {author} {\bibfnamefont {J.}~\bibnamefont {Dominy}}, \bibinfo {author}
  {\bibfnamefont {R.}~\bibnamefont {Duncan}}, \bibinfo {author} {\bibfnamefont
  {V.}~\bibnamefont {Eccles}}, \bibinfo {author} {\bibfnamefont
  {A.}~\bibnamefont {Edgington}}, \bibinfo {author} {\bibfnamefont
  {N.}~\bibnamefont {Erickson}}, \bibinfo {author} {\bibfnamefont
  {S.}~\bibnamefont {Erickson}}, \bibinfo {author} {\bibfnamefont {C.~T.}\
  \bibnamefont {Ertsgaard}}, \bibinfo {author} {\bibfnamefont {B.}~\bibnamefont
  {Evans}}, \bibinfo {author} {\bibfnamefont {T.}~\bibnamefont {Evans}},
  \bibinfo {author} {\bibfnamefont {M.~I.}\ \bibnamefont {Fabrikant}}, \bibinfo
  {author} {\bibfnamefont {A.}~\bibnamefont {Fischer}}, \bibinfo {author}
  {\bibfnamefont {C.}~\bibnamefont {Foltz}}, \bibinfo {author} {\bibfnamefont
  {M.}~\bibnamefont {{Foss-Feig}}}, \bibinfo {author} {\bibfnamefont
  {D.}~\bibnamefont {Francois}}, \bibinfo {author} {\bibfnamefont
  {B.}~\bibnamefont {Freyberg}}, \bibinfo {author} {\bibfnamefont
  {C.}~\bibnamefont {Gao}}, \bibinfo {author} {\bibfnamefont {R.}~\bibnamefont
  {Garay}}, \bibinfo {author} {\bibfnamefont {J.}~\bibnamefont {Garvin}},
  \bibinfo {author} {\bibfnamefont {D.~M.}\ \bibnamefont {Gaudiosi}}, \bibinfo
  {author} {\bibfnamefont {C.~N.}\ \bibnamefont {Gilbreth}}, \bibinfo {author}
  {\bibfnamefont {J.}~\bibnamefont {Giles}}, \bibinfo {author} {\bibfnamefont
  {E.}~\bibnamefont {Glynn}}, \bibinfo {author} {\bibfnamefont
  {J.}~\bibnamefont {Graves}}, \bibinfo {author} {\bibfnamefont
  {A.}~\bibnamefont {Hansen}}, \bibinfo {author} {\bibfnamefont
  {D.}~\bibnamefont {Hayes}}, \bibinfo {author} {\bibfnamefont
  {L.}~\bibnamefont {Heidemann}}, \bibinfo {author} {\bibfnamefont
  {B.}~\bibnamefont {Higashi}}, \bibinfo {author} {\bibfnamefont
  {T.}~\bibnamefont {Hilbun}}, \bibinfo {author} {\bibfnamefont
  {J.}~\bibnamefont {Hines}}, \bibinfo {author} {\bibfnamefont
  {A.}~\bibnamefont {Hlavaty}}, \bibinfo {author} {\bibfnamefont
  {K.}~\bibnamefont {Hoffman}}, \bibinfo {author} {\bibfnamefont {I.~M.}\
  \bibnamefont {Hoffman}}, \bibinfo {author} {\bibfnamefont {C.}~\bibnamefont
  {Holliman}}, \bibinfo {author} {\bibfnamefont {I.}~\bibnamefont {Hooper}},
  \bibinfo {author} {\bibfnamefont {B.}~\bibnamefont {Horning}}, \bibinfo
  {author} {\bibfnamefont {J.}~\bibnamefont {Hostetter}}, \bibinfo {author}
  {\bibfnamefont {D.}~\bibnamefont {Hothem}}, \bibinfo {author} {\bibfnamefont
  {J.}~\bibnamefont {Houlton}}, \bibinfo {author} {\bibfnamefont
  {J.}~\bibnamefont {Hout}}, \bibinfo {author} {\bibfnamefont {R.}~\bibnamefont
  {Hutson}}, \bibinfo {author} {\bibfnamefont {R.~T.}\ \bibnamefont {Jacobs}},
  \bibinfo {author} {\bibfnamefont {T.}~\bibnamefont {Jacobs}}, \bibinfo
  {author} {\bibfnamefont {M.}~\bibnamefont {Johannsen}}, \bibinfo {author}
  {\bibfnamefont {J.}~\bibnamefont {Johansen}}, \bibinfo {author}
  {\bibfnamefont {L.}~\bibnamefont {Jones}}, \bibinfo {author} {\bibfnamefont
  {S.}~\bibnamefont {Julian}}, \bibinfo {author} {\bibfnamefont
  {R.}~\bibnamefont {Jung}}, \bibinfo {author} {\bibfnamefont {A.}~\bibnamefont
  {Keay}}, \bibinfo {author} {\bibfnamefont {T.}~\bibnamefont {Klein}},
  \bibinfo {author} {\bibfnamefont {M.}~\bibnamefont {Koch}}, \bibinfo {author}
  {\bibfnamefont {R.}~\bibnamefont {Kondo}}, \bibinfo {author} {\bibfnamefont
  {C.}~\bibnamefont {Kong}}, \bibinfo {author} {\bibfnamefont {A.}~\bibnamefont
  {Kosto}}, \bibinfo {author} {\bibfnamefont {A.}~\bibnamefont {Lawrence}},
  \bibinfo {author} {\bibfnamefont {D.}~\bibnamefont {Liefer}}, \bibinfo
  {author} {\bibfnamefont {M.}~\bibnamefont {Lollie}}, \bibinfo {author}
  {\bibfnamefont {D.}~\bibnamefont {Lucchetti}}, \bibinfo {author}
  {\bibfnamefont {N.~K.}\ \bibnamefont {Lysne}}, \bibinfo {author}
  {\bibfnamefont {C.}~\bibnamefont {Lytle}}, \bibinfo {author} {\bibfnamefont
  {C.}~\bibnamefont {MacPherson}}, \bibinfo {author} {\bibfnamefont
  {A.}~\bibnamefont {Malm}}, \bibinfo {author} {\bibfnamefont {S.}~\bibnamefont
  {Mather}}, \bibinfo {author} {\bibfnamefont {B.}~\bibnamefont {Mathewson}},
  \bibinfo {author} {\bibfnamefont {D.}~\bibnamefont {Maxwell}}, \bibinfo
  {author} {\bibfnamefont {L.}~\bibnamefont {McCaffrey}}, \bibinfo {author}
  {\bibfnamefont {H.}~\bibnamefont {McDougall}}, \bibinfo {author}
  {\bibfnamefont {R.}~\bibnamefont {Mendoza}}, \bibinfo {author} {\bibfnamefont
  {M.}~\bibnamefont {Mills}}, \bibinfo {author} {\bibfnamefont
  {R.}~\bibnamefont {Morrison}}, \bibinfo {author} {\bibfnamefont
  {L.}~\bibnamefont {Narmour}}, \bibinfo {author} {\bibfnamefont
  {N.}~\bibnamefont {Nguyen}}, \bibinfo {author} {\bibfnamefont
  {L.}~\bibnamefont {Nugent}}, \bibinfo {author} {\bibfnamefont
  {S.}~\bibnamefont {Olson}}, \bibinfo {author} {\bibfnamefont
  {D.}~\bibnamefont {Ouellette}}, \bibinfo {author} {\bibfnamefont
  {J.}~\bibnamefont {Parks}}, \bibinfo {author} {\bibfnamefont
  {Z.}~\bibnamefont {Peters}}, \bibinfo {author} {\bibfnamefont
  {J.}~\bibnamefont {Petricka}}, \bibinfo {author} {\bibfnamefont {J.~M.}\
  \bibnamefont {Pino}}, \bibinfo {author} {\bibfnamefont {F.}~\bibnamefont
  {Polito}}, \bibinfo {author} {\bibfnamefont {M.}~\bibnamefont {Preidl}},
  \bibinfo {author} {\bibfnamefont {G.}~\bibnamefont {Price}}, \bibinfo
  {author} {\bibfnamefont {T.}~\bibnamefont {Proctor}}, \bibinfo {author}
  {\bibfnamefont {M.}~\bibnamefont {Pugh}}, \bibinfo {author} {\bibfnamefont
  {N.}~\bibnamefont {Ratcliff}}, \bibinfo {author} {\bibfnamefont
  {D.}~\bibnamefont {Raymondson}}, \bibinfo {author} {\bibfnamefont
  {P.}~\bibnamefont {Rhodes}}, \bibinfo {author} {\bibfnamefont
  {C.}~\bibnamefont {Roman}}, \bibinfo {author} {\bibfnamefont
  {C.}~\bibnamefont {Roy}}, \bibinfo {author} {\bibfnamefont {C.}~\bibnamefont
  {{Ryan-Anderson}}}, \bibinfo {author} {\bibfnamefont {F.~B.}\ \bibnamefont
  {Sanchez}}, \bibinfo {author} {\bibfnamefont {G.}~\bibnamefont {Sangiolo}},
  \bibinfo {author} {\bibfnamefont {T.}~\bibnamefont {Sawadski}}, \bibinfo
  {author} {\bibfnamefont {A.}~\bibnamefont {Schaffer}}, \bibinfo {author}
  {\bibfnamefont {P.}~\bibnamefont {Schow}}, \bibinfo {author} {\bibfnamefont
  {J.}~\bibnamefont {Sedlacek}}, \bibinfo {author} {\bibfnamefont
  {H.}~\bibnamefont {Semenenko}}, \bibinfo {author} {\bibfnamefont
  {P.}~\bibnamefont {Shevchuk}}, \bibinfo {author} {\bibfnamefont
  {S.}~\bibnamefont {Shore}}, \bibinfo {author} {\bibfnamefont
  {P.}~\bibnamefont {Siegfried}}, \bibinfo {author} {\bibfnamefont
  {K.}~\bibnamefont {Singhal}}, \bibinfo {author} {\bibfnamefont
  {S.}~\bibnamefont {Sivarajah}}, \bibinfo {author} {\bibfnamefont
  {T.}~\bibnamefont {Skripka}}, \bibinfo {author} {\bibfnamefont
  {L.}~\bibnamefont {Sletten}}, \bibinfo {author} {\bibfnamefont
  {B.}~\bibnamefont {Spaun}}, \bibinfo {author} {\bibfnamefont {R.~T.}\
  \bibnamefont {Sprenkle}}, \bibinfo {author} {\bibfnamefont {P.}~\bibnamefont
  {Stoufer}}, \bibinfo {author} {\bibfnamefont {M.}~\bibnamefont {Tader}},
  \bibinfo {author} {\bibfnamefont {S.~F.}\ \bibnamefont {Taylor}}, \bibinfo
  {author} {\bibfnamefont {T.~H.}\ \bibnamefont {Thompson}}, \bibinfo {author}
  {\bibfnamefont {R.}~\bibnamefont {Tobey}}, \bibinfo {author} {\bibfnamefont
  {A.}~\bibnamefont {Tran}}, \bibinfo {author} {\bibfnamefont {T.}~\bibnamefont
  {Tran}}, \bibinfo {author} {\bibfnamefont {G.}~\bibnamefont {Vittorini}},
  \bibinfo {author} {\bibfnamefont {C.}~\bibnamefont {Volin}}, \bibinfo
  {author} {\bibfnamefont {J.}~\bibnamefont {Walker}}, \bibinfo {author}
  {\bibfnamefont {S.}~\bibnamefont {White}}, \bibinfo {author} {\bibfnamefont
  {D.}~\bibnamefont {Wilson}}, \bibinfo {author} {\bibfnamefont
  {Q.}~\bibnamefont {Wolf}}, \bibinfo {author} {\bibfnamefont {C.}~\bibnamefont
  {Wringe}}, \bibinfo {author} {\bibfnamefont {K.}~\bibnamefont {Young}},
  \bibinfo {author} {\bibfnamefont {J.}~\bibnamefont {Zheng}}, \bibinfo
  {author} {\bibfnamefont {K.}~\bibnamefont {Zuraski}}, \bibinfo {author}
  {\bibfnamefont {C.~H.}\ \bibnamefont {Baldwin}}, \bibinfo {author}
  {\bibfnamefont {A.}~\bibnamefont {Chernoguzov}}, \bibinfo {author}
  {\bibfnamefont {J.~P.}\ \bibnamefont {Gaebler}}, \bibinfo {author}
  {\bibfnamefont {S.~J.}\ \bibnamefont {Sanders}}, \bibinfo {author}
  {\bibfnamefont {B.}~\bibnamefont {Neyenhuis}}, \bibinfo {author}
  {\bibfnamefont {R.}~\bibnamefont {Stutz}},\ and\ \bibinfo {author}
  {\bibfnamefont {J.~G.}\ \bibnamefont {Bohnet}},\ }\href
  {https://doi.org/10.48550/arXiv.2511.05465} {\bibinfo {title} {Helios: {{A}}
  98-qubit trapped-ion quantum computer}} (\bibinfo {year} {2025}),\ \Eprint
  {https://arxiv.org/abs/2511.05465} {arXiv:2511.05465 [quant-ph]} \BibitemShut
  {NoStop}%
\bibitem [{\citenamefont {Bravyi}\ \emph {et~al.}(2024)\citenamefont {Bravyi},
  \citenamefont {Cross}, \citenamefont {Gambetta}, \citenamefont {Maslov},
  \citenamefont {Rall},\ and\ \citenamefont {Yoder}}]{bravyi2024highthreshold}%
  \BibitemOpen
  \bibfield  {author} {\bibinfo {author} {\bibfnamefont {S.}~\bibnamefont
  {Bravyi}}, \bibinfo {author} {\bibfnamefont {A.~W.}\ \bibnamefont {Cross}},
  \bibinfo {author} {\bibfnamefont {J.~M.}\ \bibnamefont {Gambetta}}, \bibinfo
  {author} {\bibfnamefont {D.}~\bibnamefont {Maslov}}, \bibinfo {author}
  {\bibfnamefont {P.}~\bibnamefont {Rall}},\ and\ \bibinfo {author}
  {\bibfnamefont {T.~J.}\ \bibnamefont {Yoder}},\ }\href
  {https://doi.org/10.1038/s41586-024-07107-7} {\bibfield  {journal} {\bibinfo
  {journal} {Nature}\ }\textbf {\bibinfo {volume} {627}},\ \bibinfo {pages}
  {778} (\bibinfo {year} {2024})}\BibitemShut {NoStop}%
\bibitem [{\citenamefont {Cain}\ \emph {et~al.}(2026)\citenamefont {Cain},
  \citenamefont {Xu}, \citenamefont {King}, \citenamefont {Picard},
  \citenamefont {Levine}, \citenamefont {Endres}, \citenamefont {Preskill},
  \citenamefont {Huang},\ and\ \citenamefont {Bluvstein}}]{cain2026shors}%
  \BibitemOpen
  \bibfield  {author} {\bibinfo {author} {\bibfnamefont {M.}~\bibnamefont
  {Cain}}, \bibinfo {author} {\bibfnamefont {Q.}~\bibnamefont {Xu}}, \bibinfo
  {author} {\bibfnamefont {R.}~\bibnamefont {King}}, \bibinfo {author}
  {\bibfnamefont {L.~R.~B.}\ \bibnamefont {Picard}}, \bibinfo {author}
  {\bibfnamefont {H.}~\bibnamefont {Levine}}, \bibinfo {author} {\bibfnamefont
  {M.}~\bibnamefont {Endres}}, \bibinfo {author} {\bibfnamefont
  {J.}~\bibnamefont {Preskill}}, \bibinfo {author} {\bibfnamefont {H.-Y.}\
  \bibnamefont {Huang}},\ and\ \bibinfo {author} {\bibfnamefont
  {D.}~\bibnamefont {Bluvstein}},\ }\href
  {https://doi.org/10.48550/arXiv.2603.28627} {\bibinfo {title} {Shor's
  algorithm is possible with as few as 10,000 reconfigurable atomic qubits}}
  (\bibinfo {year} {2026}),\ \Eprint {https://arxiv.org/abs/2603.28627}
  {arXiv:2603.28627 [quant-ph]} \BibitemShut {NoStop}%
\bibitem [{\citenamefont {Zhao}\ \emph {et~al.}(2026)\citenamefont {Zhao},
  \citenamefont {Duckering}, \citenamefont {Gu}, \citenamefont {Maskara},\ and\
  \citenamefont {Zhou}}]{zhao2026ultrahighrate}%
  \BibitemOpen
  \bibfield  {author} {\bibinfo {author} {\bibfnamefont {C.}~\bibnamefont
  {Zhao}}, \bibinfo {author} {\bibfnamefont {C.}~\bibnamefont {Duckering}},
  \bibinfo {author} {\bibfnamefont {A.}~\bibnamefont {Gu}}, \bibinfo {author}
  {\bibfnamefont {N.}~\bibnamefont {Maskara}},\ and\ \bibinfo {author}
  {\bibfnamefont {H.}~\bibnamefont {Zhou}},\ }\href
  {https://arxiv.org/abs/2604.16209} {\bibinfo {title} {Towards ultra-high-rate
  quantum error correction with reconfigurable atom arrays}} (\bibinfo {year}
  {2026}),\ \Eprint {https://arxiv.org/abs/2604.16209} {arXiv:2604.16209
  [quant-ph]} \BibitemShut {NoStop}%
\bibitem [{\citenamefont {Tansuwannont}\ \emph {et~al.}(2025)\citenamefont
  {Tansuwannont}, \citenamefont {Takada},\ and\ \citenamefont
  {Fujii}}]{tansuwannont2025clifford}%
  \BibitemOpen
  \bibfield  {author} {\bibinfo {author} {\bibfnamefont {T.}~\bibnamefont
  {Tansuwannont}}, \bibinfo {author} {\bibfnamefont {Y.}~\bibnamefont
  {Takada}},\ and\ \bibinfo {author} {\bibfnamefont {K.}~\bibnamefont
  {Fujii}},\ }\href {https://doi.org/10.48550/arXiv.2503.19790} {\bibinfo
  {title} {Clifford gates with logical transversality for self-dual {{CSS}}
  codes}} (\bibinfo {year} {2025}),\ \Eprint {https://arxiv.org/abs/2503.19790}
  {arXiv:2503.19790 [quant-ph]} \BibitemShut {NoStop}%
\bibitem [{\citenamefont {Sullivan}\ \emph {et~al.}(2026)\citenamefont
  {Sullivan}, \citenamefont {Amaro},\ and\ \citenamefont
  {Perlin}}]{sullivan2026injection}%
  \BibitemOpen
  \bibfield  {author} {\bibinfo {author} {\bibfnamefont {J.}~\bibnamefont
  {Sullivan}}, \bibinfo {author} {\bibfnamefont {D.}~\bibnamefont {Amaro}},\
  and\ \bibinfo {author} {\bibfnamefont {M.~A.}\ \bibnamefont {Perlin}},\
  }\href@noop {} {\bibfield  {journal} {\bibinfo  {journal} {In preparation}\ }
  (\bibinfo {year} {2026})}\BibitemShut {NoStop}%
\bibitem [{\citenamefont {Gottesman}(1997)}]{gottesman1997stabilizer}%
  \BibitemOpen
  \bibfield  {author} {\bibinfo {author} {\bibfnamefont {D.}~\bibnamefont
  {Gottesman}},\ }\href {https://doi.org/10.48550/arXiv.quant-ph/9705052}
  {\bibinfo {title} {Stabilizer {{Codes}} and {{Quantum Error Correction}}}}
  (\bibinfo {year} {1997}),\ \Eprint {https://arxiv.org/abs/quant-ph/9705052}
  {arXiv:quant-ph/9705052} \BibitemShut {NoStop}%
\bibitem [{\citenamefont {Gottesman}(2024)}]{gottesman2024surviving}%
  \BibitemOpen
  \bibfield  {author} {\bibinfo {author} {\bibfnamefont {D.}~\bibnamefont
  {Gottesman}},\ }\href@noop {} {\bibinfo {title} {Surviving as a {{Quantum
  Computer}} in a {{Classical World}}}} (\bibinfo {year} {2024})\BibitemShut
  {NoStop}%
\bibitem [{\citenamefont {Van~Laarhoven}\ and\ \citenamefont
  {Aarts}(1987)}]{vanlaarhoven1987simulated}%
  \BibitemOpen
  \bibfield  {author} {\bibinfo {author} {\bibfnamefont {P.~J.~M.}\
  \bibnamefont {Van~Laarhoven}}\ and\ \bibinfo {author} {\bibfnamefont
  {E.~H.~L.}\ \bibnamefont {Aarts}},\ }\href
  {https://doi.org/10.1007/978-94-015-7744-1} {\emph {\bibinfo {title}
  {Simulated {{Annealing}}: {{Theory}} and {{Applications}}}}}\ (\bibinfo
  {publisher} {Springer Netherlands},\ \bibinfo {address} {Dordrecht},\
  \bibinfo {year} {1987})\BibitemShut {NoStop}%
\bibitem [{\citenamefont {Sarvepalli}\ and\ \citenamefont
  {Klappenecker}(2010)}]{sarvepalli2010degenerate}%
  \BibitemOpen
  \bibfield  {author} {\bibinfo {author} {\bibfnamefont {P.}~\bibnamefont
  {Sarvepalli}}\ and\ \bibinfo {author} {\bibfnamefont {A.}~\bibnamefont
  {Klappenecker}},\ }\href {https://doi.org/10.1103/PhysRevA.81.032318}
  {\bibfield  {journal} {\bibinfo  {journal} {Physical Review A}\ }\textbf
  {\bibinfo {volume} {81}},\ \bibinfo {pages} {032318} (\bibinfo {year}
  {2010})}\BibitemShut {NoStop}%
\bibitem [{\citenamefont {Gottesman}(2004)}]{gottesman2004co}%
  \BibitemOpen
  \bibfield  {author} {\bibinfo {author} {\bibfnamefont {D.}~\bibnamefont
  {Gottesman}},\ }\href
  {https://www.cs.umd.edu/~dgottesm/CO639-2004/Lecture8.pdf} {\bibinfo {title}
  {{{CO}} 639: {{Quantum Error Correction}}, {{Lecture}} 8}} (\bibinfo {year}
  {2004})\BibitemShut {NoStop}%
\bibitem [{\citenamefont {Matsumoto}(2017)}]{matsumoto2017two}%
  \BibitemOpen
  \bibfield  {author} {\bibinfo {author} {\bibfnamefont {R.}~\bibnamefont
  {Matsumoto}},\ }\href {https://doi.org/10.1007/s11128-017-1748-y} {\bibfield
  {journal} {\bibinfo  {journal} {Quantum Information Processing}\ }\textbf
  {\bibinfo {volume} {16}},\ \bibinfo {pages} {285} (\bibinfo {year}
  {2017})}\BibitemShut {NoStop}%
\bibitem [{\citenamefont {Metropolis}\ \emph {et~al.}(1953)\citenamefont
  {Metropolis}, \citenamefont {Rosenbluth}, \citenamefont {Rosenbluth},
  \citenamefont {Teller},\ and\ \citenamefont
  {Teller}}]{metropolis1953equation}%
  \BibitemOpen
  \bibfield  {author} {\bibinfo {author} {\bibfnamefont {N.}~\bibnamefont
  {Metropolis}}, \bibinfo {author} {\bibfnamefont {A.~W.}\ \bibnamefont
  {Rosenbluth}}, \bibinfo {author} {\bibfnamefont {M.~N.}\ \bibnamefont
  {Rosenbluth}}, \bibinfo {author} {\bibfnamefont {A.~H.}\ \bibnamefont
  {Teller}},\ and\ \bibinfo {author} {\bibfnamefont {E.}~\bibnamefont
  {Teller}},\ }\href {https://doi.org/10.1063/1.1699114} {\bibfield  {journal}
  {\bibinfo  {journal} {The Journal of Chemical Physics}\ }\textbf {\bibinfo
  {volume} {21}},\ \bibinfo {pages} {1087} (\bibinfo {year}
  {1953})}\BibitemShut {NoStop}%
\bibitem [{\citenamefont {Hastings}(1970)}]{hastings1970monte}%
  \BibitemOpen
  \bibfield  {author} {\bibinfo {author} {\bibfnamefont {W.~K.}\ \bibnamefont
  {Hastings}},\ }\href {https://doi.org/10.1093/biomet/57.1.97} {\bibfield
  {journal} {\bibinfo  {journal} {Biometrika}\ }\textbf {\bibinfo {volume}
  {57}},\ \bibinfo {pages} {97} (\bibinfo {year} {1970})}\BibitemShut {NoStop}%
\bibitem [{\citenamefont {Tierney}(1994)}]{tierney1994markov}%
  \BibitemOpen
  \bibfield  {author} {\bibinfo {author} {\bibfnamefont {L.}~\bibnamefont
  {Tierney}},\ }\href@noop {} {\bibfield  {journal} {\bibinfo  {journal} {The
  Annals of Statistics}\ }\textbf {\bibinfo {volume} {22}},\ \bibinfo {pages}
  {1701} (\bibinfo {year} {1994})},\ \Eprint {https://arxiv.org/abs/2242477}
  {2242477} \BibitemShut {NoStop}%
\bibitem [{\citenamefont {Kirkpatrick}\ \emph {et~al.}(1983)\citenamefont
  {Kirkpatrick}, \citenamefont {Gelatt},\ and\ \citenamefont
  {Vecchi}}]{kirkpatrick1983optimization}%
  \BibitemOpen
  \bibfield  {author} {\bibinfo {author} {\bibfnamefont {S.}~\bibnamefont
  {Kirkpatrick}}, \bibinfo {author} {\bibfnamefont {C.~D.}\ \bibnamefont
  {Gelatt}},\ and\ \bibinfo {author} {\bibfnamefont {M.~P.}\ \bibnamefont
  {Vecchi}},\ }\href {https://doi.org/10.1126/science.220.4598.671} {\bibfield
  {journal} {\bibinfo  {journal} {Science}\ }\textbf {\bibinfo {volume}
  {220}},\ \bibinfo {pages} {671} (\bibinfo {year} {1983})}\BibitemShut
  {NoStop}%
\bibitem [{\citenamefont {Geman}\ and\ \citenamefont
  {Geman}(1984)}]{geman1984stochastic}%
  \BibitemOpen
  \bibfield  {author} {\bibinfo {author} {\bibfnamefont {S.}~\bibnamefont
  {Geman}}\ and\ \bibinfo {author} {\bibfnamefont {D.}~\bibnamefont {Geman}},\
  }\href {https://doi.org/10.1109/TPAMI.1984.4767596} {\bibfield  {journal}
  {\bibinfo  {journal} {IEEE Transactions on Pattern Analysis and Machine
  Intelligence}\ }\textbf {\bibinfo {volume} {PAMI-6}},\ \bibinfo {pages} {721}
  (\bibinfo {year} {1984})}\BibitemShut {NoStop}%
\bibitem [{\citenamefont {Barg}\ and\ \citenamefont
  {Forney}(2002)}]{barg2002random}%
  \BibitemOpen
  \bibfield  {author} {\bibinfo {author} {\bibfnamefont {A.}~\bibnamefont
  {Barg}}\ and\ \bibinfo {author} {\bibfnamefont {G.}~\bibnamefont {Forney}},\
  }\href {https://doi.org/10.1109/TIT.2002.800480} {\bibfield  {journal}
  {\bibinfo  {journal} {IEEE Transactions on Information Theory}\ }\textbf
  {\bibinfo {volume} {48}},\ \bibinfo {pages} {2568} (\bibinfo {year}
  {2002})}\BibitemShut {NoStop}%
\bibitem [{\citenamefont {Perlin}(2023)}]{perlin2023qldpc}%
  \BibitemOpen
  \bibfield  {author} {\bibinfo {author} {\bibfnamefont {M.~A.}\ \bibnamefont
  {Perlin}},\ }\href@noop {} {\bibinfo {title} {qldpc}},\ \bibinfo
  {howpublished} {\url{https://github.com/qLDPCOrg/qLDPC}} (\bibinfo {year}
  {2023})\BibitemShut {NoStop}%
\bibitem [{\citenamefont {Greenfield}(1993)}]{greenfield1993simulated}%
  \BibitemOpen
  \bibfield  {author} {\bibinfo {author} {\bibfnamefont {G.}~\bibnamefont
  {Greenfield}},\ }\href@noop {} {\bibfield  {journal} {\bibinfo  {journal}
  {Department of Math \& Statistics Technical Report Series}\ } (\bibinfo
  {year} {1993})}\BibitemShut {NoStop}%
\bibitem [{\citenamefont {Kanchi}(2026)}]{kanchi2026tunnelingaugmented}%
  \BibitemOpen
  \bibfield  {author} {\bibinfo {author} {\bibfnamefont {A.}~\bibnamefont
  {Kanchi}},\ }\href {https://doi.org/10.48550/arXiv.2604.07365} {\bibinfo
  {title} {Tunneling-{{Augmented Simulated Annealing}} for {{Short-Block LDPC
  Code Construction}}}} (\bibinfo {year} {2026}),\ \Eprint
  {https://arxiv.org/abs/2604.07365} {arXiv:2604.07365} \BibitemShut {NoStop}%
\bibitem [{\citenamefont {Webster}\ \emph {et~al.}(2026)\citenamefont
  {Webster}, \citenamefont {Jacob},\ and\ \citenamefont
  {Higgott}}]{webster2026distancefinding}%
  \BibitemOpen
  \bibfield  {author} {\bibinfo {author} {\bibfnamefont {M.}~\bibnamefont
  {Webster}}, \bibinfo {author} {\bibfnamefont {A.}~\bibnamefont {Jacob}},\
  and\ \bibinfo {author} {\bibfnamefont {O.}~\bibnamefont {Higgott}},\ }\href
  {https://doi.org/10.48550/arXiv.2603.22532} {\bibinfo {title}
  {Distance-{{Finding Algorithms}} for {{Quantum Codes}} and {{Circuits}}}}
  (\bibinfo {year} {2026}),\ \Eprint {https://arxiv.org/abs/2603.22532}
  {arXiv:2603.22532} \BibitemShut {NoStop}%
\bibitem [{\citenamefont {Freire}\ \emph {et~al.}(2025)\citenamefont {Freire},
  \citenamefont {Delfosse},\ and\ \citenamefont
  {Leverrier}}]{freire2025optimizing}%
  \BibitemOpen
  \bibfield  {author} {\bibinfo {author} {\bibfnamefont {B.~C.~A.}\
  \bibnamefont {Freire}}, \bibinfo {author} {\bibfnamefont {N.}~\bibnamefont
  {Delfosse}},\ and\ \bibinfo {author} {\bibfnamefont {A.}~\bibnamefont
  {Leverrier}},\ }in\ \href {https://doi.org/10.1109/ISIT63088.2025.11195424}
  {\emph {\bibinfo {booktitle} {2025 {{IEEE International Symposium}} on
  {{Information Theory}} ({{ISIT}})}}}\ (\bibinfo {year} {2025})\ pp.\ \bibinfo
  {pages} {1--6}\BibitemShut {NoStop}%
\bibitem [{\citenamefont {Sayginel}\ \emph {et~al.}(2025)\citenamefont
  {Sayginel}, \citenamefont {Koutsioumpas}, \citenamefont {Webster},
  \citenamefont {Rajput},\ and\ \citenamefont
  {Browne}}]{sayginel2025faulttolerant}%
  \BibitemOpen
  \bibfield  {author} {\bibinfo {author} {\bibfnamefont {H.}~\bibnamefont
  {Sayginel}}, \bibinfo {author} {\bibfnamefont {S.}~\bibnamefont
  {Koutsioumpas}}, \bibinfo {author} {\bibfnamefont {M.}~\bibnamefont
  {Webster}}, \bibinfo {author} {\bibfnamefont {A.}~\bibnamefont {Rajput}},\
  and\ \bibinfo {author} {\bibfnamefont {D.~E.}\ \bibnamefont {Browne}},\
  }\href {https://doi.org/10.1103/vf7v-cpq9} {\bibfield  {journal} {\bibinfo
  {journal} {PRX Quantum}\ }\textbf {\bibinfo {volume} {6}},\ \bibinfo {pages}
  {030343} (\bibinfo {year} {2025})}\BibitemShut {NoStop}%
\bibitem [{\citenamefont {Dasu}\ and\ \citenamefont
  {Criger}(2026)}]{dasu2026flagging}%
  \BibitemOpen
  \bibfield  {author} {\bibinfo {author} {\bibfnamefont {S.}~\bibnamefont
  {Dasu}}\ and\ \bibinfo {author} {\bibfnamefont {B.}~\bibnamefont {Criger}},\
  }\href {https://arxiv.org/abs/2603.24573} {\bibinfo {title} {Flagging the
  clifford hierarchy: Fault-tolerant logical $\frac{\pi} {2^l}$ rotations via
  measuring circuit gauge operators of non-cliffords}} (\bibinfo {year}
  {2026}),\ \Eprint {https://arxiv.org/abs/2603.24573} {arXiv:2603.24573
  [quant-ph]} \BibitemShut {NoStop}%
\bibitem [{\citenamefont {Brun}\ \emph {et~al.}(2015)\citenamefont {Brun},
  \citenamefont {Zheng}, \citenamefont {Hsu}, \citenamefont {Job},\ and\
  \citenamefont {Lai}}]{brun2015teleportationbased}%
  \BibitemOpen
  \bibfield  {author} {\bibinfo {author} {\bibfnamefont {T.~A.}\ \bibnamefont
  {Brun}}, \bibinfo {author} {\bibfnamefont {Y.-C.}\ \bibnamefont {Zheng}},
  \bibinfo {author} {\bibfnamefont {K.-C.}\ \bibnamefont {Hsu}}, \bibinfo
  {author} {\bibfnamefont {J.}~\bibnamefont {Job}},\ and\ \bibinfo {author}
  {\bibfnamefont {C.-Y.}\ \bibnamefont {Lai}},\ }\href
  {https://doi.org/10.48550/arXiv.1504.03913} {\bibinfo {title}
  {Teleportation-based {{Fault-tolerant Quantum Computation}} in {{Multi-qubit
  Large Block Codes}}}} (\bibinfo {year} {2015}),\ \Eprint
  {https://arxiv.org/abs/1504.03913} {arXiv:1504.03913 [quant-ph]} \BibitemShut
  {NoStop}%
\bibitem [{\citenamefont {Forlivesi}\ and\ \citenamefont
  {Amaro}(2025)}]{forlivesi2025flag}%
  \BibitemOpen
  \bibfield  {author} {\bibinfo {author} {\bibfnamefont {D.}~\bibnamefont
  {Forlivesi}}\ and\ \bibinfo {author} {\bibfnamefont {D.}~\bibnamefont
  {Amaro}},\ }\href {https://doi.org/10.48550/arXiv.2508.14200} {\bibinfo
  {title} {Flag at origin: A modular fault-tolerant preparation for {{CSS}}
  codes}} (\bibinfo {year} {2025}),\ \Eprint {https://arxiv.org/abs/2508.14200}
  {arXiv:2508.14200 [quant-ph]} \BibitemShut {NoStop}%
\bibitem [{\citenamefont {Peham}\ \emph {et~al.}(2025)\citenamefont {Peham},
  \citenamefont {Schmid}, \citenamefont {Berent}, \citenamefont {M{\"u}ller},\
  and\ \citenamefont {Wille}}]{peham2025automated}%
  \BibitemOpen
  \bibfield  {author} {\bibinfo {author} {\bibfnamefont {T.}~\bibnamefont
  {Peham}}, \bibinfo {author} {\bibfnamefont {L.}~\bibnamefont {Schmid}},
  \bibinfo {author} {\bibfnamefont {L.}~\bibnamefont {Berent}}, \bibinfo
  {author} {\bibfnamefont {M.}~\bibnamefont {M{\"u}ller}},\ and\ \bibinfo
  {author} {\bibfnamefont {R.}~\bibnamefont {Wille}},\ }\href
  {https://doi.org/10.1103/PRXQuantum.6.020330} {\bibfield  {journal} {\bibinfo
   {journal} {PRX Quantum}\ }\textbf {\bibinfo {volume} {6}},\ \bibinfo {pages}
  {020330} (\bibinfo {year} {2025})}\BibitemShut {NoStop}%
\bibitem [{\citenamefont {Criger}\ \emph {et~al.}(2026)\citenamefont {Criger}
  \emph {et~al.}}]{criger2026automated}%
  \BibitemOpen
  \bibfield  {author} {\bibinfo {author} {\bibfnamefont {B.}~\bibnamefont
  {Criger}} \emph {et~al.},\ }\href@noop {} {\bibinfo {title} {Automated
  flag-based fault-tolerant state preparation using integer linear
  programming}} (\bibinfo {year} {2026})\BibitemShut {NoStop}%
\bibitem [{\citenamefont {Henderson}\ and\ \citenamefont
  {Searle}(1981)}]{henderson1981deriving}%
  \BibitemOpen
  \bibfield  {author} {\bibinfo {author} {\bibfnamefont {H.~V.}\ \bibnamefont
  {Henderson}}\ and\ \bibinfo {author} {\bibfnamefont {S.~R.}\ \bibnamefont
  {Searle}},\ }\href@noop {} {\bibfield  {journal} {\bibinfo  {journal} {SIAM
  review}\ }\textbf {\bibinfo {volume} {23}},\ \bibinfo {pages} {53} (\bibinfo
  {year} {1981})}\BibitemShut {NoStop}%
\bibitem [{\citenamefont {Vishwakarma}(2025)}]{vishwakarma2025cholesky}%
  \BibitemOpen
  \bibfield  {author} {\bibinfo {author} {\bibfnamefont {P.~K.}\ \bibnamefont
  {Vishwakarma}},\ }\href {https://doi.org/10.48550/arXiv.2508.04657} {\bibinfo
  {title} {Cholesky decomposition for symmetric matrices over finite fields}}
  (\bibinfo {year} {2025}),\ \Eprint {https://arxiv.org/abs/2508.04657}
  {arXiv:2508.04657 [math]} \BibitemShut {NoStop}%
\bibitem [{\citenamefont {Golub}\ and\ \citenamefont
  {Loan}(2013)}]{golub2013matrix}%
  \BibitemOpen
  \bibfield  {author} {\bibinfo {author} {\bibfnamefont {G.~H.}\ \bibnamefont
  {Golub}}\ and\ \bibinfo {author} {\bibfnamefont {C.~F.~V.}\ \bibnamefont
  {Loan}},\ }\href@noop {} {\emph {\bibinfo {title} {Matrix
  {{Computations}}}}}\ (\bibinfo  {publisher} {JHU Press},\ \bibinfo {year}
  {2013})\BibitemShut {NoStop}%
\bibitem [{\citenamefont {Tansuwannont}\ and\ \citenamefont
  {Nemec}(2024)}]{tansuwannont2024synchronizable}%
  \BibitemOpen
  \bibfield  {author} {\bibinfo {author} {\bibfnamefont {T.}~\bibnamefont
  {Tansuwannont}}\ and\ \bibinfo {author} {\bibfnamefont {A.}~\bibnamefont
  {Nemec}},\ }\href {https://doi.org/10.48550/arXiv.2409.11312} {\bibinfo
  {title} {Synchronizable hybrid subsystem codes}} (\bibinfo {year} {2024}),\
  \Eprint {https://arxiv.org/abs/2409.11312} {arXiv:2409.11312 [quant-ph]}
  \BibitemShut {NoStop}%
\end{thebibliography}%

\appendix
\crefalias{section}{appendix}

\section{CSS equivalence theorem}
\label{sec:css_equivalence}

Here we prove \cref{thm:css_equivalence} of the main text, restated here for reference:
\begin{tcolorbox}
  \CSSEquivalence*
\end{tcolorbox}

To prove this theorem, we first reduce it to that of decomposing an $n$-qubit CSS-preserving gate $\tau$ into CNOT gates.

\begin{tcolorbox}
  \begin{lemma}
    Let $C = \CSS(H_X, H_Z)$ and $C' = \CSS(H_X', H_Z')$.
    If
    \begin{enumerate}[label=(\roman*)]
      \item $\rank(H_X) = \rank(H_X')$, and
      \item $\rank(H_Z) = \rank(H_Z')$,
    \end{enumerate}
    then $C' = \tau(C)$ for some CSS-preserving gate $\tau$.
    \label{lemma:css_equivalence}
  \end{lemma}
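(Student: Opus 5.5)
The plan is to route both codes through a common trivial code using \cref{thm:css_construction}, which we may assume. Set $s_X = \rank(H_X) = \rank(H_X')$ and $s_Z = \rank(H_Z) = \rank(H_Z')$. Both $C$ and $C'$ have the same number of logical qubits, since $k = n - s_X - s_Z$; this uses $k_X = n - \rank(H_X)$ and the analogous identities for $H_Z$, $H_X'$ and $H_Z'$.

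\Cref{thm:css_construction} then provides CSS-preserving gates $\sigma$ and $\sigma'$ with
\begin{align*}
  C = \sigma(\TrivCSS(k, s_X, s_Z)),
  &&
  C' = \sigma'(\TrivCSS(k, s_X, s_Z)).
\end{align*}
We therefore set $\tau = \sigma' \sigma^{-1}$. Then $\tau(C) = \sigma'(\sigma^{-1}(C)) = \sigma'(\TrivCSS(k,s_X,s_Z)) = C'$, using that the action of Clifford gates on stabilizer codes is compatible with composition: $\tau(\StabCode(S)) = \StabCode(\tau(S))$, and the action on Paulis is multiplication by $\tau$.

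It remains to check that $\tau$ is CSS-preserving. The tableaux of $\sigma$ and $\sigma'$ are block-diagonal, and block-diagonal symplectic matrices form a subgroup: inverses and products of block-diagonal matrices are block-diagonal. So $\tau \simeq (\sigma'_X \sigma_X^{-1}, \sigma'_Z \sigma_Z^{-1})$. This is symplectic, since it is a product of symplectic matrices, and by \cref{lemma:css_clifford} its $Z$ block equals the inverse transpose of its $X$ block.

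No step is a serious obstacle; the real content sits in \cref{thm:css_construction}. The only care needed is that the trivial code's parameters $(k, s_X, s_Z)$ are determined by the ranks alone, which is exactly what hypotheses (i) and (ii) guarantee.
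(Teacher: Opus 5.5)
Your proposal is correct and is essentially the paper's proof: both take the gates supplied by \cref{thm:css_construction} for $C$ and $C'$ from a common trivial code and set $\tau = \tau_{C'}\tau_C^{-1}$. You also spell out two points the paper leaves implicit. First, the ranks (together with the shared block length $n$) fix $(k, s_X, s_Z)$, so both codes come from the same $\TrivCSS(k, s_X, s_Z)$. Second, CSS-preserving gates are closed under products and inverses.
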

  \begin{proof}
    Let $\tau_C$ and $\tau_{C'}$ be CSS-preserving gates that construct $C$ and $C'$ from the trivial CSS code, in the sense of \cref{thm:css_construction}.
    Then $\tau = \tau_{C'} \tau_C^{-1}$ is a CSS-preserving gate for which $C' = \tau(C)$.
  \end{proof}
\end{tcolorbox}
A decomposition of $\tau$ in \cref{lemma:css_equivalence} into CNOT gates recovers the gate sequence for \cref{thm:css_equivalence}.
The Clifford tableau of the CNOT gate on two qubits is
\begin{align}
  \CNOT =
  \left(
    \begin{array}{cc|cc}
      1 & & & \\
      1 & 1 & & \\ \hline
      & & 1 & 1 \\
      & & & 1
    \end{array}
  \right),
\end{align}
where the empty entries are zero.

Letting $\tau\simeq(\tau_X, \tau_Z)$, we now consider the matrix $\tau_X\in\GL(n,\F_2)$.
Left-multiplying $\tau$ by $\CNOT_{i,j}$ adds the $i$-th row of $\tau_X$ to the $j$-th row of $\tau$.
The CNOT gate can thereby be used to row-reduce $\tau_X$ to a permutation matrix via Gaussian elimination.
In turn, this permutation matrix can be reduced to the identity matrix $\1_n$ via SWAP gates, which can be decomposed into CNOT gates.
By construction, the resulting net sequence of CNOT gates $(G_1, G_2, \cdots, G_N)$ has the property that the CSS-preserving gate
\begin{align}
  \sigma = G_N \cdots G_2 G_1 \tau \simeq (\sigma_X, \sigma_Z),
\end{align}
has $\sigma_X = \1_n$, and in turn $\sigma_Z = \sigma_X^{-\T} = \1_n$ by \cref{lemma:css_clifford}.
We can therefore decompose
\begin{align}
  \tau = (G_N \cdots G_2 G_1)^{-1} = G_1 G_2 \cdots G_N,
\end{align}
which proves \cref{thm:css_equivalence}.

\section{CSS construction theorem}
\label{sec:css_construction}

Here we prove \cref{thm:css_construction} of the main text, restated here for reference:
\begin{tcolorbox}
  \CSSConstruction*
\end{tcolorbox}

We prove this theorem constructively.

Without loss of generality, we can permute qubits and perform Gaussian elimination to put $H_X$ and $H_Z$ in the standard forms~\cite[Section 4.1]{gottesman1997stabilizer}
\begin{align}
  H_X &=
  \begin{pmatrix}
    \ell_X & \1_{s_X} & R_X
  \end{pmatrix},
  \\
  H_Z &=
  \begin{pmatrix}
    \ell_Z & R_Z & \1_{s_Z}
  \end{pmatrix},
\end{align}
with matrices $(\ell_X,\ell_Z,R_X,R_Z)$ of appropriate dimensions.
Here $\rank(H_X) = s_X$ and $\rank(H_Z) = s_Z$.
To map the trivial code $\TrivCSS(k, s_X, s_Z)$ (\cref{def:trivial_css}) to $C$, we need to construct a CSS-preserving gate $\tau$ that maps the parity check matrices
\begin{align}
  H_X^{\text{Triv}} &=
  \begin{pmatrix}
    \bm{0} & \1_{s_X} & \bm{0}
  \end{pmatrix},
  \\
  H_Z^{\text{Triv}} &=
  \begin{pmatrix}
    \bm{0} & \bm{0} & \1_{s_Z}
  \end{pmatrix},
\end{align}
to $H_X$ and $H_Z$, meaning
\begin{align}
  H_X &= H_X^{\text{Triv}} \tau_X^\T,
  \\
  H_Z &= H_Z^{\text{Triv}} \tau_Z^\T.
\end{align}
These conditions imply that
\begin{align}
  \tau_X &=
  \begin{pmatrix}
    A_X & H_X^\T & B_X
  \end{pmatrix},
  \\
  \tau_Z &=
  \begin{pmatrix}
    A_Z & B_Z & H_Z^\T
  \end{pmatrix},
\end{align}
for some matrices $A_X,A_Z,B_X,B_Z$, which must in turn satisfy the constraint
\begin{align}
  \tau_X^\T \tau_Z = \tau_Z^{-1} \tau_Z = \1,
\end{align}
or equivalently
\begin{align}
  \begin{pmatrix}
    \cellcolor{red!40} A_X^\T A_Z   & \cellcolor{orange!40} A_X^\T B_Z & \cellcolor{orange!40} A_X^\T H_Z^\T \\
    \cellcolor{green!40} H_X A_Z    & \cellcolor{blue!40} H_X B_Z      & \cellcolor{purple!40} H_X H_Z^\T \\
    \cellcolor{green!40} B_X^\T A_Z & \cellcolor{red!40} B_X^\T B_Z    & \cellcolor{blue!40} B_X^\T H_Z^\T
  \end{pmatrix}
  =
  \begin{pmatrix}
    \1     & \bm{0} & \bm{0} \\
    \bm{0} & \1     & \bm{0} \\
    \bm{0} & \bm{0} & \1
  \end{pmatrix},
\end{align}
where we use colors as a visual aid to identify different sectors of this constraint.
We can satisfy $\mhl{blue!40}{H_X B_Z} = \1$ and $\mhl{blue!40}{H_Z B_X} = \1$ by choosing
\begin{align}
  B_Z =
  \begin{pmatrix}
    \bm{0} \\ \1 \\ \bm{0}
  \end{pmatrix}
  &&
  \text{and}
  &&
  B_X =
  \begin{pmatrix}
    \bm{0} \\ \bm{0} \\ \1
  \end{pmatrix},
\end{align}
and in turn satisfy
\begin{align}
  \mhl{orange!40}{B_Z^\T A_X} &= \bm{0} & \mhl{green!40}{B_X^\T A_Z} &= \bm{0} \\
  \mhl{orange!40}{H_Z A_X}    &= \bm{0} & \mhl{green!40}{H_X A_Z}    &= \bm{0}
\end{align}
by choosing
\begin{align}
  A_X =
  \begin{pmatrix}
    \1 \\ \bm{0} \\ \ell_Z
  \end{pmatrix}
  &&
  \text{and}
  &&
  A_Z =
  \begin{pmatrix}
    \1 \\ \ell_X \\ \bm{0}
  \end{pmatrix}.
\end{align}
As a sanity check, we note that these choices satisfy $\mhl{red!40}{A_X^\T A_Z} = \1$ and $\mhl{red!40}{B_X^\T B_Z} = \bm{0}$, as required, and that $\mhl{purple!40}{H_X H_Z^\T} = \bm{0}$.

Altogether, \cref{thm:css_construction} follows from the fact that $C = \tau(\TrivCSS(k, s_X, s_Z))$ for the CSS-preserving gate $\tau \simeq (\tau_X, \tau_Z)$ with
\begin{align}
  \tau_X =
  \begin{pmatrix}
    \1     & \ell_X^\T & \bm{0} \\
    \bm{0} & \1        & \bm{0} \\
    \ell_Z & R_X^\T    & \1
  \end{pmatrix},
  &&
  \tau_Z =
  \begin{pmatrix}
    \1     & \bm{0} & \ell_Z^\T \\
    \ell_X & \1     & R_Z^\T    \\
    \bm{0} & \bm{0} & \1
  \end{pmatrix}.
\end{align}

\section{SWEL stabilization theorem}
\label{sec:swel_stabilization}

Here we prove \cref{thrm:swel_stabilization} of the main text, restated here for reference:
\begin{tcolorbox}
  \SWELStabilization*
\end{tcolorbox}
The sufficiency of $\tau_X = \tau_Z = \tau_\star$ is obvious: if $C = \CSS(H_\star, H_\star)$ is a SWEL code and $(L_\star, L_\star)$ symplectic basis of logical operators for $C$, then the parity check matrices $H_\star$ and logicals $L_\star$ transform identically under $\tau$, making $\tau(C)$ a SWEL code.

We now need to prove the claim of \emph{necessity} in \cref{thrm:swel_stabilization}: that if
\begin{enumerate}[label=(\roman*)]
  \item\label{assumption:1} $C$ is an arbitrary SWEL code,
  \item\label{assumption:2} $\tau\simeq(\tau_X, \tau_Z)$ is a CSS-preserving gate, and
  \item\label{assumption:3} $\tau(C)$ is a SWEL code,
\end{enumerate}
then $\tau_X = \tau_Z$.

\subsection{Requirements}

To this end, let $C = \CSS(H_\star, H_\star)$ be a SWEL code for some parity check matrix $H_\star$, and let $(L_\star, L_\star)$ be a symplectic basis of logical operators for $C$.
The matrices $H_\star$ and $L_\star$ must satisfy
\begin{align}
  H_\star H_\star^\T = \bm{0},
  &&
  H_\star L_\star^\T = \bm{0},
  &&
  L_\star L_\star^\T = \1.
  \label{eq:swel_conditions}
\end{align}
Let $\tau$ be a CSS-preserving gate for which $\tau(C)$ is a SWEL code, as in Assumptions \ref{assumption:2} and \ref{assumption:2}.
The requirement that $X$-type stabilizers and logical operators of $\tau(C)$ commute with its $Z$-type stabilizers translates, on the level of binary matrices, to the requirement that rows of $H_\star \tau_X^\T$ and $L_\star \tau_X^\T$ live in the kernel of $H_\star\tau_Z^\T$, meaning
\begin{align}
  (H_\star \tau_Z^\T) (H_\star \tau_X^\T)^\T = \bm{0},
  &&
  (H_\star \tau_Z^\T) (L_\star \tau_X^\T)^\T = \bm{0}.
\end{align}
By Assumption \ref{assumption:3}, $\tau(C)$ is a self-dual code, meaning
\begin{align}
  \ker(H_\star\tau_X^\T) = \ker(H_\star\tau_Z^\T).
\end{align}
The rows of $H_\star \tau_X^\T$ and $L_\star \tau_X^\T$ must therefore live in the kernel of $H_\star\tau_X^\T$, so
\begin{align}
  (H_\star \tau_X^\T) (H_\star \tau_X^\T)^\T
  = H_\star \tau_X^\T \tau_X H_\star^\T
  &= \bm{0},
  \label{eq:tau_C_HH} \\
  (H_\star \tau_X^\T) (L_\star \tau_X^\T)^\T
  = H_\star \tau_X^\T \tau_X L_\star^\T
  &= \bm{0}.
  \label{eq:tau_C_HL}
\end{align}
The requirement that $\tau(C)$ be a SWEL code for all SWEL codes $C$ is therefore equivalent to the condition that $\tau \simeq (\tau_X, \tau_Z)$ satisfy \cref{eq:tau_C_HH} and \cref{eq:tau_C_HL} for all matrices $(H_\star, L_\star)$ that satisfy \cref{eq:swel_conditions}.

\subsection{Four possibilities}

Consider the symmetric, invertible matrix $A = \tau_X^\T \tau_X$.
Letting $e_q\in\F_2^n$ with $q\in\Z_n$ denote a standard basis vector for $\F_2^n$, we can pick matrices $H_\star$ and $L_\star$ for which $e_i + e_j \in \rows(H_\star)$ and $e_k \in \rows(L_\star)$ with distinct integers $i,j,k\in\Z_n$, as for example in the trivial SWEL code (\cref{def:trivial_swel})\footnote{
  This choice only fails when $n < 3$, and the only SWEL code with $n < 3$ is the trivial one-qubit code for which the only CSS-preserving gate is the identity gate $\tau$ for which $\tau_X = \tau_Z = 1$.
}.
The condition in \cref{eq:tau_C_HH} then implies that
\begin{align}
  (e_i + e_j)^\T A (e_i + e_j) = A_{ii} + A_{jj} = 0,
\end{align}
where the cross term $A_{ij} + A_{ji}$ vanishes over $\F_2$ because $A_{ij} = A_{ji}$.
The freedom in choosing $i,j\in\Z_n$ thereby implies that all diagonal entries of $A$ are equal.
Similarly, the condition in \cref{eq:tau_C_HL} implies that
\begin{align}
  (e_i + e_j)^\T A e_k = A_{ik} + A_{jk} = 0.
\end{align}
The freedom of choosing $i,k,k\in\Z_n$ together with the fact that $A = A^\T$ thereby implies that all off-diagonal entries of $A$ are equal.
There are therefore only four possibilities:
\begin{enumerate}[label=(\alph*)]
  \item\label{possibility:00} $A$ is an all-zeros matrix.
  \item\label{possibility:11} $A$ is an all-ones matrix.
  \item\label{possibility:01} $A$ is 0 on the diagonal and 1 everywhere else.
  \item\label{possibility:10} $A$ is the identity matrix.
\end{enumerate}
The fact that $A$ is invertible means it has full rank, which rules out possibilities \ref{possibility:00} and \ref{possibility:11}.

\subsection{Ruling out possibility \ref{possibility:01}}

We now consider the requirement that $A$ must admit the decomposition $A = \tau_X^\T \tau_X$ with an invertible matrix $\tau_X \in \F_2^{n\times n}$.
To this end, we observe that if $A = \tau_X^\T \tau_X$, then
\begin{align}
  A_{ii}
  = \sum_j (\tau_X)_{ji}^2
  = \sum_j (\tau_X)_{ji}
  = e_i^\T \tau_X^\T \bm{1}_n,
\end{align}
where $\bm{1}_n \in \F_2^n$ is the all-ones vector.
It follows that
\begin{align}
  \text{($A_{ii} = 0$ for all $i\in\Z_n$)}
  \implies
  \tau_X^\T \bm{1}_n = \bm{0}_n,
\end{align}
where $\bm{0}_n \in \F_2^n$ is the all-zeros vector.
However, $\tau_X^\T \bm{1}_n = \bm{0}_n$ contradicts the requirement that $\tau_X$ (and hence $\tau_X^\T$) is invertible, so it cannot be the case that $A_{ii} = 0$, thereby ruling out possibility \ref{possibility:01}.

Altogether, we are only left with possibility \ref{possibility:10}: that $A = \tau_X^\T \tau_X = \1$, so $\tau_X^\T = \tau_X^{-1}$.
By \cref{lemma:css_clifford}, it is also the case that $\tau_X^\T = \tau_Z^{-1}$, so $\tau_X = \tau_Z$, as stated in \cref{thrm:swel_stabilization}.

\section{SWEL construction theorem}
\label{sec:swel_construction}

Here we prove \cref{thm:swel_construction} of the main text, restated here for reference:
\begin{tcolorbox}
  \SWELConstruction*
\end{tcolorbox}

We prove this theorem constructively.

Without loss of generality, we can permute qubits and perform Gaussian elimination to put $H_\star$ in the standard form~\cite[Section 4.1]{gottesman1997stabilizer}
\begin{align}
  H_\star =
  \begin{pmatrix}
    \ell & R & \1_s
  \end{pmatrix},
  \label{eq:H_lR1}
\end{align}
where $\ell$ has dimensions $s\times k$ and $R$ has dimensions $s\times s$.
Here $\rank(H_\star) = s$.
Performing Gaussian elimination on the second block of $H_\star$ yields an alternative standard-form parity check matrix,
\begin{align}
  \tilde H_\star
  =
  \begin{pmatrix}
    \tilde\ell & \1_s & \tilde R
  \end{pmatrix},
\end{align}
where
\begin{align}
  \tilde R = R^{-1}
  &&
  \text{and}
  &&
  \tilde\ell = R^{-1} \ell.
\end{align}
CSS-compatibility of $H_\star$ with itself implies that
\begin{align}
  H_\star H_\star^\T = \ell \ell^\T + R R^\T + \1_s = \bm{0}_{s\times s}.
  \label{eq:HHT}
\end{align}
A Gottesman-canonical symplectic basis of logical operators for $C$ is then $(L_X, L_Z)$, with
\begin{align}
  L_X =
  \begin{pmatrix}
    \1 & \tilde\ell^\T & \bm{0}
  \end{pmatrix},
  &&
  L_Z =
  \begin{pmatrix}
    \1 & \bm{0} & \ell^\T
  \end{pmatrix}.
  \label{eq:swel_basis}
\end{align}

\subsection{Requirements}

To map the trivial code $\TrivSWEL(k, s)$ (\cref{def:trivial_swel}) to $C$, we need to construct a SWEL-preserving gate $\tau$ that maps the parity check matrix
\begin{align}
  H_\star^{\text{Triv}} =
  \begin{pmatrix}
    \bm{0}_{s\times k} & \1_s & \1_s
  \end{pmatrix}
\end{align}
to $H_\star$, meaning $H_\star^{\text{Triv}} \tau_\star^\T = H_\star$.
Similarly, the inverse gate $\tau^{-1} = \tau^\T$ needs to map $H_\star \to H_\star^{\text{Triv}}$, meaning $H_\star \tau_\star = H_\star^{\text{Triv}}$.
Together, these conditions imply that
\begin{align}
  \tau_\star
  \begin{pmatrix}
    \bm{0} \\ \1 \\ \1
  \end{pmatrix}
  =
  \begin{pmatrix}
    \ell^\T \\ R^\T \\ \1
  \end{pmatrix}
  \label{eq:tau_cond_1}
\end{align}
and
\begin{align}
  \begin{pmatrix}
    \ell & R & \1
  \end{pmatrix}
  \tau_\star
  =
  \begin{pmatrix}
    \bm{0} & \1 & \1
  \end{pmatrix}.
  \label{eq:tau_cond_2}
\end{align}
For $\tau$ to be a SWEL-preserving gate, the matrix $\tau_\star$ additionally needs to be orthogonal, meaning
\begin{align}
  \tau_\star^\T \tau_\star = \tau_\star \tau_\star^\T = \1.
  \label{eq:tau_cond_3}
\end{align}

\subsection{Ansatz and parity check constraints}

We now build a matrix $\tau_\star$ that satisfies all necessary requirements.
To this end, we guess a solution of the form
\begin{align}
  \tau_\star =
  \begin{pmatrix}
    A      & J      & \bm{0} \\
    B      & K      & \bm{0} \\
    \bm{0} & \bm{0} & \1
  \end{pmatrix},
  \label{eq:ansatz}
\end{align}
and find suitable matrices $A,B,J,K$.
Note that the top row and left-most column of the block matrix in \cref{eq:ansatz} have size $k$, whereas all other rows and columns have size $s$.

\cref{eq:tau_cond_1} implies that
\begin{align}
  J = \ell^\T
  &&
  \text{and}
  &&
  K = R^\T.
\end{align}
In turn, \cref{eq:tau_cond_2} reduces to
\begin{align}
  \begin{pmatrix}
    \ell & R
  \end{pmatrix}
  \begin{pmatrix}
    A & \ell^\T \\ B & R^\T
  \end{pmatrix}
  =
  \begin{pmatrix}
    \bm{0} & \1
  \end{pmatrix},
\end{align}
where the right column ($\ell\ell^\T + R R^\T = \1$) is automatically satisfied by \cref{eq:HHT}, while the left column ($\ell A + R B = \bm{0}$) implies that
\begin{align}
  B = R^{-1} \ell A.
\end{align}
In total, enforcing that the ansatz in \cref{eq:ansatz} transforms stabilizers as necessary forces
\begin{align}
  \tau_\star =
  \begin{pmatrix}
    A             & \ell^\T & \bm{0} \\
    R^{-1} \ell A & R^\T    & \bm{0} \\
    \bm{0}        & \bm{0}  & \1
  \end{pmatrix},
\end{align}
for any $k\times k$ matrix $A$.
The choice of $A$ is then constrained only by the orthogonality of $\tau_\star$ (\cref{eq:tau_cond_3}).

\subsection{Orthogonality constraints}

To find a suitable choice $A$, we enforce $\tau_\star^\T \tau_\star = \1$, which implies
\begin{align}
  \begin{pmatrix}
    A^\T & B^\T \\ \ell & R
  \end{pmatrix}
  \begin{pmatrix}
    A & \ell^\T \\ B & R^\T
  \end{pmatrix}
  =
  \begin{pmatrix}
    \1 & \bm{0} \\ \bm{0} & \1
  \end{pmatrix},
  \label{eq:tTt}
\end{align}
and similarly enforce $\tau_\star \tau_\star^\T = \1$, which implies
\begin{align}
  \begin{pmatrix}
    A & \ell^\T \\ B & R^\T
  \end{pmatrix}
  \begin{pmatrix}
    A^\T & B^\T \\ \ell & R
  \end{pmatrix}
  =
  \begin{pmatrix}
    \1 & \bm{0} \\ \bm{0} & \1
  \end{pmatrix}.
  \label{eq:ttT}
\end{align}
The only nontrivial constraints in \cref{eq:tTt} and \cref{eq:ttT} are
\begin{align}
  A^\T A + B^\T B &= \1,
  \label{eq:ATA_BTB}
  \\
  A A^\T + \ell^\T \ell &= \1,
  \label{eq:AAT}
  \\
  A B^\T + \ell^\T R &= \bm{0},
  \label{eq:ABT}
  \\
  BB^\T + R^\T R &= \1.
  \label{eq:BBT}
\end{align}
In fact, we only need to worry about one of these constraints: \cref{eq:AAT} implies \cref{eq:ATA_BTB}, \cref{eq:ABT}, and \cref{eq:BBT}.
We show the recovery of \cref{eq:ATA_BTB}, \cref{eq:ABT}, and \cref{eq:BBT} from \cref{eq:AAT} before returning to the question of finding a solution to \cref{eq:AAT}.

\subsection{Recovering \cref{eq:BBT}}

Substituting $B = R^{-1} \ell A$ and $A A^\T = \1 + \ell^\T \ell$ from \cref{eq:AAT}, we can expand
\begin{align}
  B B^\T
  &= R^{-1} \ell A A^\T \ell^\T R^{-\T} \\
  &= R^{-1} \ell (\1 + \ell^\T \ell) \ell^\T R^{-\T} \\
  &= R^{-1} (\1 + \ell \ell^\T) \ell \ell^\T R^{-\T},
\end{align}
and substitute $\1 + \ell \ell^\T = R R^\T$ from \cref{eq:HHT} to get
\begin{align}
  B B^\T
  = R^{-1} R R^\T (\1 + R R^\T) R^{-\T}
  = \1 + R^\T R,
\end{align}
which recovers \cref{eq:BBT}.

\subsection{Recovering \cref{eq:ABT}}

Substituting $B = R^{-1} \ell A$ and $A A^\T = \1 + \ell^\T \ell$ from \cref{eq:AAT}, we can expand
\begin{align}
  A B^\T
  &= A A^\T \ell^\T R^{-\T} \\
  &= (\1 + \ell^\T \ell) \ell^\T R^{-\T} \\
  &= \ell^\T (\1 + \ell \ell^\T) R^{-\T},
\end{align}
and substitute $\1 + \ell \ell^\T = R R^\T$ from \cref{eq:HHT} to get
\begin{align}
  A B^\T
  = \ell^\T R R^\T R^{-\T}
  = \ell^\T R,
\end{align}
which recovers \cref{eq:ABT}.

\subsection{Recovering \cref{eq:ATA_BTB}}

Substituting $B = R^{-1} \ell A$, we can expand
\begin{align}
  B^\T B
  &= A^\T \ell^\T R^{-\T} R^{-1} \ell A \\
  &= A^\T \ell^\T (R R^\T)^{-1} \ell A,
\end{align}
and substitute $R R^\T = \1 + \ell \ell^\T$ from \cref{eq:HHT} to get
\begin{align}
  B^\T B = A^\T \ell^\T (\1 + \ell \ell^\T)^{-1} \ell A.
\end{align}
Applying the push-through identity~\cite[Eq.~(11) with $(P, Q) = (\ell^\T, \ell)$]{henderson1981deriving}
\begin{align}
  \ell^\T (\1 + \ell \ell^\T)^{-1} &= (\1 + \ell^\T \ell)^{-1} \ell^\T,
  \label{eq:push_through}
\end{align}
we can simplify
\begin{align}
  B^\T B = A^\T (\1 + \ell^\T \ell)^{-1} \ell^\T \ell A,
\end{align}
and substitute $\1 + \ell^\T \ell = A A^\T$ from \cref{eq:AAT} to find
\begin{align}
  B^\T B
  &= A^\T (A A^\T)^{-1} (\1 + A A^\T) A \\
  &= A^\T (A A^\T)^{-1} A (\1 + A^\T A).
  \label{eq:ATA_BTB_recovery}
\end{align}
We know that $R R^\T = \1 + \ell \ell^\T$ has full rank, which means that $\1 + \ell^\T \ell = A A^\T$ must have full rank (making the push-through identity in \cref{eq:push_through} possible), and in turn that $\rank(A A^\T) = \rank(A) = k$.
The matrix $A$ therefore has dimensions $k\times k$ and rank $k$, which implies that $A$ is invertible, so $(A A^\T)^{-1} = A^{-\T} A^{-1}$, and
\begin{align}
  B^\T B
  = A^\T A^{-\T} A^{-1} A (\1 + A^\T A)
  = \1 + A^\T A,
\end{align}
which recovers \cref{eq:ATA_BTB}.

\subsection{A SWEL solution to \cref{eq:AAT}}
\label{sec:swel_solution}

We now seek a matrix $A$ that satisfies \cref{eq:AAT}, or
\begin{align}
  A A^\T = \1 + \ell^\T \ell.
  \label{eq:cholesky}
\end{align}
This equation is essentially the Cholesky decomposition of the (symmetric, invertible) matrix $\1 + \ell^\T \ell$.
Generally speaking, if $M$ is a matrix over a finite field, then the Cholesky decomposition of $M$ exists when every leading principal minor of $M$ is nonsingular~\cite{vishwakarma2025cholesky} (which is analogous to the ordinary conditions for the existence of an $LDL^\T$ decomposition~\cite[Theorem 4.1.3]{golub2013matrix}).

We have thus far only used the fact that $C$ is a self-dual CSS code.
We now invoke the fact that $C$ is SWEL to prove the existence of a matrix $A$ that satisfies \cref{eq:cholesky}.

Let $(L_X, L_Z)$ be the symplectic basis of logical operators in \cref{eq:swel_basis}.
If $C$ is a SWEL code, then there exists a $k$-qubit CSS-preserving gate $\sigma \simeq (\sigma_X, \sigma_Z)$ (that is, a logical change of basis) for which $\sigma_X L_X$ and $\sigma_Z L_Z$ are equal modulo stabilizers; that is,
\begin{align}
  \sigma_X L_X = \sigma_Z L_Z + \rho_\sigma H_\star
\end{align}
for some $k\times s$ matrix $\rho_\sigma$.
Substituting $L_X$, $L_Z$, and $H_\star$ from \cref{eq:H_lR1} and \cref{eq:swel_basis},
\begin{align}
  \sigma_X
  \begin{pmatrix}
    \1 & \tilde\ell^\T & \bm{0}
  \end{pmatrix}
  = \sigma_Z
  \begin{pmatrix}
    \1 & \bm{0} & \ell^\T
  \end{pmatrix}
  + \rho_\sigma
  \begin{pmatrix}
    \ell & R & \1
  \end{pmatrix},
\end{align}
where the third column implies that $\rho_\sigma = \sigma_Z \ell^\T$, and in turn the first column implies that
\begin{align}
  \sigma_X = \sigma_Z + \sigma_Z \ell^\T \ell = \sigma_Z (\1 + \ell^\T \ell),
\end{align}
so
\begin{align}
  \1 + \ell^\T \ell
  = \sigma_Z^{-1} \sigma_X
  = \sigma_X^\T \sigma_X.
\end{align}
Choosing $A = \sigma_X^\T$ therefore satisfies \cref{eq:cholesky}, which means that SWEL codes are guaranteed to have a matrix $A$ that satisfies \cref{eq:cholesky}.
This matrix can be found using a recursive algorithm outlined in the proof of Theorem A(2) in Ref.~\cite[Section 2]{vishwakarma2025cholesky}.
Alternatively, one can find a symplectic basis $(L_X', L_Z')$ for $C$ with $L_X' = L_Z'$ using Algorithm 1 of Ref.~\cite{tansuwannont2025clifford} (first introduced in Ref.~\cite[Algorithm 1]{tansuwannont2024synchronizable}), construct a CSS-preserving gate $\sigma \simeq (\sigma_X, \sigma_Z)$ for which $\sigma_X L_X \sim L_X'$ (where $\sim$ denotes equality modulo stabilizers), and set $A = \sigma_X^\T$.

\subsection{Summary}

In total, if $C = \CSS(H_\star, H_\star)$ is a SWEL code, then without loss of generality (that is, modulo qubit permutations and Gaussian elimination)
\begin{align}
  H_\star =
  \begin{pmatrix}
    \ell & R & \1
  \end{pmatrix},
\end{align}
where $\ell$ has dimensions $s\times k$, $R$ has dimensions $s\times s$, and $R$ is invertible.
In this case, it is possible to find a $k\times k$ matrix $A$ for which $A A^\T = \1 + \ell^\T \ell$ (see \cref{sec:swel_solution}), out of which we can construct the $n\times n$ matrix
\begin{align}
  \tau_\star =
  \begin{pmatrix}
    A             & \ell^\T & \bm{0} \\
    R^{-1} \ell A & R^\T    & \bm{0} \\
    \bm{0}        & \bm{0}  & \1
  \end{pmatrix}
\end{align}
which satisfies
\begin{align}
  \tau_\star^\T = \tau_\star^{-1}
  &&
  \text{and}
  &&
  H_\star^{\text{Triv}} \tau_\star^\T = H_\star.
\end{align}
The SWEL-preserving gate $\tau \simeq (\tau_\star, \tau_\star)$ therefore satisfies
\begin{align}
  C = \tau(\TrivSWEL(k, s)),
\end{align}
which completes our proof of \cref{thm:swel_construction}.

\section{SWEL equivalence theorem}
\label{sec:swel_equivalence}

Here we prove \cref{thm:swel_equivalence} of the main text, restated here for reference:
\begin{tcolorbox}
  \SWELEquivalence*
\end{tcolorbox}

To prove \cref{thm:swel_equivalence}, we reduce it to a statement about decomposing elements of $\O(n,\F_2)$ into transvections of the form
\begin{align}
  t(Q) = \1_n + e_Q e_Q^\T,
  \label{eq:transvection}
\end{align}
where $Q\subset\Z_n$, $e_Q = \sum_{q\in Q} e_q \in \F_2^n$ and $e_q\in\F_2^n$ with $q\in\Z_n$ is a standard basis vector for $\F_2^n$.
The transvection $t(Q)$ induces a SWEL-preserving gate $T(Q) \simeq (t(Q), t(Q))$ that acts nontrivially on $\abs{Q}$ qubits.
Our reduction makes use of the following lemma:
\begin{tcolorbox}
  \begin{lemma}
    If $C$ and $C'$ are both $\params{n,k}$ SWEL codes, then $C' = \tau(C)$ for some SWEL-preserving gate $\tau$.
    \label{lemma:swel_equivalence}
  \end{lemma}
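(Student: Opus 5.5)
The plan mirrors the proof of \cref{lemma:css_equivalence}, with \cref{thm:swel_construction} standing in for \cref{thm:css_construction}. Let $C = \CSS(H_\star, H_\star)$ and $C' = \CSS(H_\star', H_\star')$ be $\params{n,k}$ SWEL codes. The first step is to pin down the ranks. Since both codes are $\params{n,k}$ stabilizer codes, the full parity check matrix $\mathrm{diag}(H_\star, H_\star)$ has rank $n-k$, so $\rank(H_\star) = (n-k)/2$. The same argument gives $\rank(H_\star') = (n-k)/2$. Call this common value $s$, so that $n = k + 2s$ and the trivial code $\TrivSWEL(k, s)$ of \cref{def:trivial_swel} is an $n$-qubit code.

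Next, apply \cref{thm:swel_construction} to each code. This gives SWEL-preserving gates $\tau_C \simeq (\tau_{C,\star}, \tau_{C,\star})$ and $\tau_{C'} \simeq (\tau_{C',\star}, \tau_{C',\star})$ with
\begin{align*}
  C = \tau_C(\TrivSWEL(k,s)),
  &&
  C' = \tau_{C'}(\TrivSWEL(k,s)).
\end{align*}
Both gates start from the same trivial code. This is exactly why the equality of ranks from the first step matters: the construction theorem is invoked with the same pair $(k, \rank(H_\star))$ for both codes.

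Then set $\tau = \tau_{C'} \tau_C^{-1}$, so that $\tau(C) = \tau_{C'}(\TrivSWEL(k,s)) = C'$. It remains to check that $\tau$ is SWEL-preserving. Composition and inversion respect the block-diagonal structure, so $\tau$ is CSS-preserving with $\tau_X = \tau_{C',\star}\tau_{C,\star}^{-1}$ and $\tau_Z = \tau_{C',\star}\tau_{C,\star}^{-1}$. These are equal, so $\tau_X = \tau_Z$ as \cref{def:swel_preserving} requires.

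One could also double-check the orthogonal-group picture. By \cref{lemma:swel_clifford}, both $\tau_{C,\star}$ and $\tau_{C',\star}$ lie in $\O(n,\F_2)$. Since $\O(n,\F_2)$ is closed under products and inverses, $\tau_\star = \tau_{C',\star}\tau_{C,\star}^\T$ is again orthogonal. This is consistent with the previous step.

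No step here is a real obstacle; all the heavy lifting lives in \cref{thm:swel_construction}, in particular the Cholesky-type factorization $AA^\T = \1 + \ell^\T\ell$. The one point needing a little care is the implicit qubit permutation and Gaussian elimination used to put $H_\star$ in standard form in that theorem. A qubit permutation is itself SWEL-preserving, since its matrix $\tau_\star$ is a permutation matrix and hence orthogonal. Gaussian elimination does not change the code at all. So the gates produced by \cref{thm:swel_construction} are genuinely SWEL-preserving maps from $\TrivSWEL(k,s)$ onto $C$ and $C'$ in their original qubit orderings, and the argument goes through.
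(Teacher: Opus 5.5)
Your proof is correct and matches the paper's: it also takes the gates $\tau_C$ and $\tau_{C'}$ supplied by \cref{thm:swel_construction} and sets $\tau = \tau_{C'}\tau_C^{-1}$. Two details you make explicit are left implicit in the paper. First, $\rank(H_\star) = \rank(H_\star') = (n-k)/2$, so both constructions start from the same $\TrivSWEL(k,s)$. Second, the qubit permutation used in the standard-form reduction is itself SWEL-preserving.
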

  \begin{proof}
    Let $\tau_C$ and $\tau_{C'}$ be SWEL-preserving gates that construct $C$ and $C'$ from the trivial SWEL code, in the sense of \cref{thm:swel_construction}.
    Then $\tau = \tau_{C'} \tau_C^{-1}$ is a SWEL-preserving gate for which $C' = \tau(C)$.
  \end{proof}
\end{tcolorbox}
If $\tau_\star$ can be decomposed as $\tau_\star = t(Q_N) \cdots t(Q_2) t(Q_1)$ with $\abs{Q_j}\in\set{2,4}$ for all $j$, then $\tau$ decomposes into elementary SWEL-preserving and SWAP gates as $\tau = T(Q_N) \cdots T(Q_2) T(Q_1)$.
The SWAP gates can then be commuted to the end by relabeling gate targets, and merged into a single permutation gate, altogether arriving at a decomposition of the form $\tau = \Pi T(Q_{N'}') \cdots T(Q_2') T(Q_1')$ for some permutation gate $\Pi$, integer $N'\le N$, and $\abs{Q_j'}=4$ for all $j\in\Z_{N'}$.
\cref{lemma:swel_equivalence} thereby reduces \cref{thm:swel_equivalence} to the following $\O(n,\F_2)$ decomposition lemma:
\begin{tcolorbox}
  \begin{lemma}
    If $U \in \O(n,\F_2)$, then
    \begin{align*}
      U = t(Q_1) t(Q_2) \cdots t(Q_N)
    \end{align*}
    for some finite sequence of subsets $Q_j\in\Z_n$ with $\abs{Q_j}\in\set{2,4}$ for all $j\in\Z_N$.
    \label{lemma:ortho_decomp}
  \end{lemma}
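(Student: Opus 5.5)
The plan is a Gaussian-elimination-style induction on $n$. We fix the first column of $U$ using left multiplications by transvections $t(Q)$ with $\abs{Q}\in\set{2,4}$, split off a $1\times 1$ block, and recurse on an orthogonal matrix of size $n-1$.

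\textbf{Basic facts.} Two facts about the generators are used throughout. First, for $\abs{Q}$ even we have $e_Q^\T e_Q = 0$. Hence $t(Q)^2 = \1_n$ and $t(Q)^\T t(Q) = \1_n$, so each $t(Q)$ is an orthogonal involution. Second, for $\abs{Q}=2$, $t(Q)$ is exactly the transposition of the two coordinates in $Q$. Because every $t(Q)$ is its own inverse, if $t(Q_N)\cdots t(Q_1)U = \1_n$ then $U = t(Q_1)\cdots t(Q_N)$, which is the form claimed in the lemma.

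\textbf{Reducing the first column.} Let $\bm u = U e_1$. Orthogonality gives $\bm u^\T\bm u = 1$, so $\norm{\bm u}$ is odd. The action $t(Q)\bm v = \bm v + e_Q(e_Q^\T \bm v)$ flips all four coordinates in $Q$ exactly when $Q$ contains an odd number of ones of $\bm v$. So if $\norm{\bm u}\ge 3$, choose $Q$ consisting of three coordinates where $\bm u$ is $1$ and one coordinate where $\bm u$ is $0$. Then $t(Q)$ lowers the weight by $2$, and repeating this drives the weight down to $1$. A final swap $t(\set{i,1})$ moves the surviving one to coordinate $1$.

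\textbf{The main obstacle.} The step I expect to be the real difficulty is guaranteeing that the zero coordinate used above exists, i.e.\ that the current column is never the all-ones vector $\bm 1_n$. If it were, every $4$-subset would meet an even number of ones and no weight-reducing move would be available. I would resolve this with an invariant. Each column of an orthogonal matrix has odd weight, so $\bm 1_n^\T U = \bm 1_n^\T$, i.e.\ $U^\T\bm 1_n=\bm 1_n$ and hence $U\bm 1_n = \bm 1_n$. Each $t(Q)$ with $\abs{Q}$ even also fixes $\bm 1_n$. Let $V$ be the product of the transvections applied so far. If the current column satisfied $VUe_1 = \bm 1_n = VU\bm 1_n$, then invertibility would force $e_1 = \bm 1_n$, which requires $n=1$. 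So for $n\ge 2$ the all-ones column never occurs. The same argument shows that when fewer than four coordinates are available, the odd-weight column already has weight $1$ (the weight-$3$, $n=3$ case would be all-ones), so swaps alone suffice. This also covers the base cases $n\le 3$.

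\textbf{Recursion.} After these steps $U' = VU$ is orthogonal with $U'e_1 = e_1$. Then $U'^\T e_1 = U'^{-1}e_1 = e_1$, so the first row is $e_1^\T$ as well. Hence $U' = 1\oplus U''$ with $U''\in\O(n-1,\F_2)$. Apply the induction hypothesis to $U''$ on coordinates $\set{2,\dots,n}$; transvections on these coordinates are also transvections $t(Q)$ on $\Z_n$ and leave coordinate $1$ untouched. Concatenating the two sequences and inverting via the involution property gives the decomposition in \cref{lemma:ortho_decomp}.
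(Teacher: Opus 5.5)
Your proposal is correct and follows essentially the same route as the paper. Both reduce one column at a time with weight-4 transvections on three ones and one zero (lowering the weight by two), finish with a swap, use orthogonality to split off a $1\times 1$ block, and recurse. The differences are cosmetic: you work on the first column rather than the last, you exclude the all-ones column via the invariant $U\bm 1_n = \bm 1_n$ rather than the paper's pairwise-column argument, and you handle the small cases $n\le 3$ directly (an odd-weight column that is not all-ones has weight one), which is a cleaner justification than the paper's appeal to the structure of $\O(4,\F_2)$.
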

\end{tcolorbox}
We prove this theorem by constructing a Gaussian-elimination-like algorithm that uses suitable transvections to reduce an arbitrary element $U\in\O(n,\F_2)$ to the identity matrix.
To this end, we expand
\begin{align}
  U =
  \begin{pmatrix}
    u_1 & u_2 & \cdots & u_n
  \end{pmatrix}
\end{align}
into columns $u_j\in\F_2^n$.
We will use the following fact about the columns of $U$:
\begin{tcolorbox}
  \begin{lemma}
    If $v$ is a column of some $U\in\O(n,\F_2)$ with $n>1$, then $\norm{v}$ is odd and $v\ne\bm{1}_n$.
    \label{lemma:reduction_lemma}
  \end{lemma}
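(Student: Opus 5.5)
Both claims follow directly from the orthogonality relation $U^\T U = \1_n$, which we read off entrywise.

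\emph{Odd weight.} Let $v = u_j$ be the $j$-th column of $U$. The $(j,j)$ entry of $U^\T U = \1_n$ gives $v^\T v = 1$. Over $\F_2$ we have $v_i^2 = v_i$, so
\begin{align}
  v^\T v = \sum_i v_i^2 = \sum_i v_i = \norm{v} \bmod 2 .
\end{align}
Hence $\norm{v}$ is odd. This is the same computation already used for $A_{ii}$ in the proof of \cref{thrm:swel_stabilization}.

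\emph{$v \ne \bm{1}_n$.} Suppose instead that $u_j = \bm{1}_n$. Since $n > 1$, there is another column $u_m$ with $m \ne j$. Orthogonality gives
\begin{align}
  0 = u_m^\T u_j = u_m^\T \bm{1}_n = \norm{u_m} \bmod 2 ,
\end{align}
so $\norm{u_m}$ is even. This contradicts the first part applied to $u_m$, which says every column has odd weight. Hence no column equals $\bm{1}_n$.

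Both steps are short. The only point needing care is the hypothesis $n > 1$: it is what guarantees a second column exists, and the claim fails for $n = 1$, where $U = (1)$ has the column $\bm{1}_1$. Everything else is the identity $x^2 = x$ in $\F_2$ together with the entries of $U^\T U = \1_n$.
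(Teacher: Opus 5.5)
Your proof is correct and follows the paper's argument: odd weight comes from the diagonal entry $u_j^\T u_j = 1$ together with $x^2 = x$ over $\F_2$, and $v \ne \bm{1}_n$ comes from orthogonality to a second column of odd weight, which exists because $n > 1$. The paper states the second step directly as $\bm{1}_n^\T u_1 = 1 \ne 0 = u_2^\T u_1$ rather than by contradiction, but that is the same computation.
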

  Here $\bm{1}_n\in\F_2^n$ is the vector of all ones.
  In other words, $v$ has an odd number of ones and at least one zero.
  \begin{proof}
    Let $u_1$ and $u_2$ be arbitrary but different columns of an arbitrary $U\in\O(n,\F_2)$.
    \begin{enumerate}
      \item $U^\T U = \1_n$ implies that $u_1^\T u_1 = 1$, so $\norm{u_1}$ must be odd.
      \item $\bm{1}_n^\T u_1 = u_1^\T u_1 = 1$, but $U^\T U = \1_n$ implies that $u_2^\T u_1 = 0$, so $u_2\ne\bm{1}_n$.
    \end{enumerate}
  \end{proof}
\end{tcolorbox}
Finally, we observe that
\begin{align}
  t(\Z_4)
  \begin{pmatrix}
    0 \\ 1 \\ 1 \\ 1
  \end{pmatrix}
  =
  \begin{pmatrix}
    1 \\ 0 \\ 0 \\ 0
  \end{pmatrix}.
  \label{eq:reduction_unit}
\end{align}
If $n \ge 4$, \cref{lemma:reduction_lemma} and \cref{eq:reduction_unit} let us use a zero and one in the last column $u_n$ of $U\in\O(n,\F_2)$ to catalyze the reduction of $u_n$ to a standard basis vector, removing a pair of ones with each transvection.
Each step in the reduction reduces the weight $\norm{u_n}$ by two until $\abs{u_n} = 1$.
We can then move the last one in $u_n$ to the $n$-th position with a SWAP gate.
This reduction is implemented by \cref{alg:reduce_column}.

\begin{algorithm*}[tb]
  \caption[]{%
    Reducing the last column of an orthogonal matrix $U_{\mathrm{in}}\in\O(n,\F_2)\subset\F_2^{n\times n}$.
    This algorithm comes with the following definitions and promises:
    \begin{enumerate}[label=(\alph*), parsep=0pt, itemsep=1pt]
      \item For any $M\in\F_2^{n\times n}$, $M[r,c]$ is the entry of $M$ at row $r$ and column $c$.
      \item For any $M\in\F_2^{n\times n}$, $M[:,n]\in\F_2^n$ is the last column of $M$.
      \item For any set $\mathcal{S}$, $\oper{pop}(\mathcal{S})$ removes and returns an arbitrary element of $\mathcal{S}$.
      \item For any $Q\subset\Z_n$, $t(Q) = \1_n + e_Q e_Q^\T$ and $e_Q = \sum_{q\in Q} e_q$, where $e_q\in\F_2^n$ with $q\in\Z_n$ is a standard basis vector for $\F_2^n$.
      \item The outputs $U_{\mathrm{out}}$ and $\mathcal{Q} = (Q_1, Q_2, \cdots, Q_N)$ satisfy
        \begin{enumerate}[label=(\roman*), itemsep=1pt, parsep=0pt]
          \item $U_{\mathrm{out}}[:,n] = e_n$,
          \item $\abs{Q}\in\set{2,4}$ for all $Q\in\mathcal{Q}$, and
          \item $U_{\mathrm{out}} = t(Q_N) \cdots t(Q_2) t(Q_1) U_{\mathrm{in}}$.
        \end{enumerate}
    \end{enumerate}
  }
  \label{alg:reduce_column}
  \DontPrintSemicolon
  \SetKwComment{Comment}{\# }{}
  \SetKwProg{Fn}{def}{\string:}{}
  \SetKwInOut{Input}{Input}
  \SetKwInOut{Output}{Output}
  \SetKwFor{For}{for}{:}{}
  \SetKwFor{While}{while}{:}{}
  \SetKwIF{If}{ElseIf}{Else}{if}{:}{elif}{else:}{}
  \BlankLine
  \Input{Matrix $U_{\mathrm{in}} \in \O(n,\F_2) \subset \F_2^{n\times n}$}
  \Output{%
    (1) Matrix $U_{\mathrm{out}} \in \O(n,\F_2) \subset \F_2^{n\times n}$ \\
    (2) Sequence $\mathcal{Q} = (Q_1, Q_2, \cdots)$ of subsets $Q_j\subset\Z_n$
  }
  \BlankLine
  \SetKwFunction{Reduce}{reduce\_last\_column}
  \Fn{\Reduce{$U_{\mathrm{in}}$}}{
    \texttt{set} $U_{\mathrm{out}} \leftarrow \oper{copy}(U_{\mathrm{in}})$\;
    \texttt{set} $\mathcal{Q} \leftarrow$ empty list\;
    \texttt{set} $q_1 \leftarrow$ position of the first zero in $U_{\mathrm{out}}[:,n]$\;
    \texttt{set} $\mathcal{I} \leftarrow \set{j\in\Z_n:U_{\mathrm{out}}[j,n] = 1}$
    \Comment{positions of all ones in the last column}
    \texttt{set} $q_2 \leftarrow \oper{pop}(\mathcal{I})$\;
    \While{$\abs{\mathcal{I}} > 0$}{
      \Comment{eliminate a pair of ones in the last column}
      \texttt{set} $q_3 \leftarrow \oper{pop}(\mathcal{I})$\;
      \texttt{set} $q_4 \leftarrow \oper{pop}(\mathcal{I})$\;
      \texttt{update} $U_{\mathrm{out}} \leftarrow t(\set{q_1, q_2, q_3, q_4}) U_{\mathrm{out}}$\;
      \texttt{append} $\set{q_1, q_2, q_3, q_4}$ \texttt{to} $\mathcal{Q}$\;
    }
    \Comment{move the remaining one in the last column to the last position}
    \If{$U_{\mathrm{out}}[q_1,n] = 1$}{
      \texttt{update} $U_{\mathrm{out}} \leftarrow t(\set{q_1, n}) U_{\mathrm{out}}$\;
      \texttt{append} $\set{q_1, n}$ \texttt{to} $\mathcal{Q}$\;
    }
    \Else{
      \texttt{update} $U_{\mathrm{out}} \leftarrow t(\set{q_2, n}) U_{\mathrm{out}}$\;
      \texttt{append} $\set{q_2, n}$ \texttt{to} $\mathcal{Q}$\;
    }
    \Return{$U_{\mathrm{out}},\mathcal{Q}$}
  }
\end{algorithm*}

After reducing the last column of $U$ with \cref{alg:reduce_column}, its last column $u_n = e_n$.
Moreover, the $n$-th entry of every column $j\ne n$ is zero, since now $e_n^\T u_j = u_n^\T u_j = 0$.
That is, the reduced matrix $U$ has the form
\begin{align}
  U =
  \begin{pmatrix}
    u_{1,1}   & u_{1,2}   & \cdots & u_{1,n-1}   & 0 \\
    u_{2,1}   & u_{2,2}   & \cdots & u_{2,n-1}   & 0 \\
    \vdots    & \vdots    & \ddots & \vdots      & 0 \\
    u_{n-1,1} & u_{n-1,2} & \cdots & u_{n-1,n-1} & 0 \\
    0         & 0         & \cdots & 0           & 1
  \end{pmatrix}.
\end{align}
At this point, we can consider two cases:
\begin{enumerate}
  \item If $n>4$, we can recursively consider the sub-matrix
    \begin{align}
      \tilde{U} =
      \begin{pmatrix}
        u_{1,1}   & u_{1,2}   & \cdots & u_{1,n-1}   \\
        u_{2,1}   & u_{2,2}   & \cdots & u_{2,n-1}   \\
        \vdots    & \vdots    & \ddots & \vdots      \\
        u_{n-1,1} & u_{n-1,2} & \cdots & u_{n-1,n-1} \\
      \end{pmatrix},
    \end{align}
    and again reduce the last column of $\tilde{U}$.
    Note that the transvections used to reduce the last column of $\tilde{U}$ do not, if extended to act on $\F_2^{n\times n}$, affect the last (already reduced) column of $U$, since $t(Q) e_n = e_n$ whenever $n\notin Q$.
  \item If $n=4$, then all elements of $\O(n,\F_2)$ are equal to either $t(\Z_4)$ or $\1_4$ modulo row and column permutations, so any reduced matrix $U$ for which $u_n = e_n$ must be equal to the identity matrix modulo permutations.
\end{enumerate}
Altogether, we can use \cref{alg:reduce_column} to recursively reduce any matrix $U\in\O(n,\F_2)\subset\F_2^{n\times n}$ to a permutation matrix using only transvections $t(Q)$ with weight $\abs{Q}\in\set{2,4}$, and then use weight-2 transvections to further reduce the permutation matrix to the identity matrix.
This procedure, implemented by \cref{alg:reduce_matrix}, constructively proves \cref{lemma:ortho_decomp}, which completes our proof of \cref{thm:swel_equivalence}.

\begin{algorithm*}[tb]
  \caption[]{%
    Decomposing the matrix $U\in\O(n,\F_2)\subset\F_2^{n\times n}$ into elementary SWEL-preserving gates.
    This algorithm comes with the following definitions and promises:
    \begin{enumerate}[label=(\alph*), nosep]
      \item For any $M\in\F_2^{m\times m}$, $\oper{size}(M) = m$.
      \item $\oper{reduce\_last\_column}$ is defined in \cref{alg:reduce_column}.
      \item For sequences $\mathcal{A}$ and $\mathcal{B}$, ``\texttt{extend} $\mathcal{A}$ \texttt{by} $\mathcal{B}$'' appends the elements of $\mathcal{B}$ to $\mathcal{A}$, preserving their order.
      \item For any $M\in\F_2^{m\times m}$, $\oper{submat}(M)$ is the sub-matrix of the first $m-1$ rows and columns of $M$.
      \item For any $Q\subset\Z_n$, $t(Q) = \1_n + e_Q e_Q^\T$ and $e_Q = \sum_{q\in Q} e_q$, where $e_q\in\F_2^n$ with $q\in\Z_n$ is a standard basis vector for $\F_2^n$.
      \item The output $\mathcal{Q} = (Q_1, Q_2, \cdots, Q_N)$ satisfies
        \begin{enumerate}[label=(\roman*), nosep]
          \item $\abs{Q}\in\set{2,4}$ for all $Q\in\mathcal{Q}$, and
          \item $U = t(Q_1) t(Q_2) \cdots t(Q_N)$.
        \end{enumerate}
    \end{enumerate}
  }
  \label{alg:reduce_matrix}
  \DontPrintSemicolon
  \SetKwComment{Comment}{\# }{}
  \SetKwProg{Fn}{def}{\string:}{}
  \SetKwInOut{Input}{Input}
  \SetKwInOut{Output}{Output}
  \SetKwFor{For}{for}{:}{}
  \SetKwFor{While}{while}{:}{}
  \SetKwIF{If}{ElseIf}{Else}{if}{:}{elif}{else:}{}
  \BlankLine
  \Input{Matrix $U \in \O(n,\F_2) \subset \F_2^{n\times n}$ with $n\ge4$}
  \Output{Sequence $\mathcal{Q} = (Q_1, Q_2, \cdots)$ of subsets $Q_j\subset\Z_n$}
  \BlankLine
  \SetKwFunction{Decompose}{decompose\_orthogonal\_matrix}
  \Fn{\Decompose{$U$}}{
    \texttt{set} $\mathcal{Q} \leftarrow$ empty list\;
    \texttt{set} $V \leftarrow \oper{copy}(U)$\;
    \While{$\oper{size}(V) > 4$}{
      \texttt{set} $\tilde{V}, \tilde{\mathcal{Q}} \leftarrow \oper{reduce\_last\_column}(V)$\;
      \texttt{update} $V \leftarrow \oper{submat}(\tilde{V})$\;
      \texttt{extend} $\mathcal{Q}$ \texttt{by} $\tilde{\mathcal{Q}}$\;
    }
    \Comment{identify SWAPs necessary to convert V to the identity matrix}
    \If{$V[3,3] = 0$}{
      \texttt{set} $q \leftarrow$ position of the first one in $V[:,3]$\;
      \texttt{update} $V \leftarrow t(\set{q,3}) V$\;
      \texttt{append} $\set{q, 3}$ \texttt{to} $\mathcal{Q}$\;
    }
    \If{$V[2,2] = 0$}{
      \texttt{append} $\set{1, 2}$ \texttt{to} $\mathcal{Q}$\;
    }
    \Return{$\mathcal{Q}$}
  }
\end{algorithm*}

\onecolumn
\clearpage
\section{Best codes found}
\label{sec:best_codes}

Here we provide weight-optimized stabilizer generators for the codes in \cref{tab:codes}.
For SWEL codes, we provide only $X$-type generators; $Z$-type stabilizer generators have identical support.
These codes can be loaded into a useful format in Python using the \texttt{qldpc} package~\cite{perlin2023qldpc} on the Python Package Index, as follows:
\begin{tcolorbox}[boxsep=-3pt]
\begin{lstlisting}[language=python]
import numpy as np
import qldpc
from qldpc.objects import Pauli

# Construct a CSS code from stabilizers represented by strings.
pauli_strings = """
XXXX___
_XX_XX_
XX__X_X
ZZZZ___
_ZZ_ZZ_
ZZ__Z_Z
""".strip().splitlines()
code = qldpc.codes.QuditCode.from_strings(pauli_strings).to_css()
assert code.get_code_params() == (7, 1, 3)

# You can also load a code from qecdb.org by its ID.
code = qldpc.codes.QuditCode.from_qecdb_id("6705224a4315521360862353").to_css()

# SWEL codes can be defined from their X or Z stabilizers alone.
# If a code is SWEL, calling .set_swel_logical_ops() finds a self-dual logical operator basis.
strings_x = """
XXXX___
_XX_XX_
XX__X_X
""".strip().splitlines()
strings_z = [string.replace("X", "Z") for string in strings_x]
code = qldpc.codes.QuditCode.from_strings(strings_x + strings_z).to_css()
code.set_swel_logical_ops()
logicals_x = code.get_logical_ops(Pauli.X)
logicals_z = code.get_logical_ops(Pauli.Z)
assert np.array_equal(logicals_x, logicals_z)
\end{lstlisting}
\end{tcolorbox}

\subsection{CSS codes}
\begin{lstlisting}
# CSS (20, 4, 4)
__XX_X_______XX___X_
_X_________X__X_XXX_
X_______X___X_X___XX
___X_X_X_XXX_____XX_
____X_X__XX___X___X_
_____XXX_X_X_X_XX___
X__XXX_X_X_____X___X
____X_XXX______XXX_X
___ZZ_Z______ZZ___Z_
_Z______Z___Z_Z__ZZ_
Z__________Z__Z_Z_ZZ
__Z__Z___ZZ___Z___Z_
_Z_____ZZZZZ___Z___Z
Z_ZZ_____Z___ZZZZ___
_ZZZ__ZZ_Z_____Z_Z__
_Z___Z___Z_____ZZZZZ

# CSS (20, 6, 4)
___XX_XXX_X_________
_XX_XX____XX_XX_____
___X___X_X_X_X__XX__
X_XX_X_X_X__X_______
X_X___X_X____X_X_XX_
__X__X_X___X_X_X__XX
XX_XX_X__X_______XX_
__Z_Z_Z__Z_Z_Z__Z_Z_
__ZZ_Z_ZZ_Z__Z___Z__
ZZ_Z__Z_______Z__ZZZ
_____Z_ZZZ_ZZ____Z_Z
_Z______ZZZZZ_ZZ____
Z__Z_Z___ZZZ__Z__Z__
ZZZ__Z___Z____ZZZ_Z_

# CSS (22, 4, 5)
XX_X_XXX___X__________
XX___X____X__XX__X____
X____X_XX______XX_X___
_X__XXX_____X___XX____
__X__X_XX_X_X____X____
_X_X________XX___XXX__
XXX_X____X_______X_X__
___X_X_X_XX_____X___X_
______X_XX___X_______X
_Z_______ZZZ_Z___Z_Z__
_Z__ZZZZ_____ZZ_______
__Z__Z_ZZZ____Z_Z_____
____Z___________Z_ZZZ_
_ZZ_ZZ____Z____Z_Z____
___Z_ZZZZ_Z_______Z___
_Z___Z__Z___ZZ___Z__Z_
Z____ZZZ_Z_____Z____Z_
_Z_ZZ____Z__Z____Z___Z

# CSS (22, 6, 4)
X________X__XX__X_X_X_
__X__X_XX_XX__XX______
_____XXX_X_X____XXX___
__XXX_XX___X__X__X____
X___X_X___XXX_X___X___
_X__X_XXX_X___X____X__
XXX___XX_X____X_____X_
_X_X_XX_X__X__X______X
Z_____ZZ__Z___ZZZZ____
__ZZZ_ZZ___Z__Z__Z____
_Z_Z____ZZ_Z_Z__Z_Z___
ZZ___Z_Z_Z__ZZ___Z____
_Z___Z__Z___ZZ____Z_ZZ
_Z_ZZZ__ZZ__Z______Z__
__ZZZZ__Z__Z_Z______Z_
ZZZ___ZZ_ZZZ_________Z

# CSS (24, 4, 5)
X_X_X____X___XXX________
_______X_______X__X__XXX
___X_X_XX____X__X__X____
__XX_X___________X__X__X
_X_X____X__XXX_______X__
XX___________X___X_X__XX
____XX__X_X___X_____X___
______X_X___XX___XX___X_
__XX__XX_X___X____X_____
_X_XXXX_XX_____________X
ZZ_______Z___Z_ZZ_____Z_
______Z__ZZZZ_Z_________
Z_Z__ZZ_________ZZ__Z___
__Z____Z___ZZ__Z_Z_Z____
__ZZZ___ZZ_____Z__Z_____
_Z_ZZZ_Z_______Z___Z____
____Z___Z__ZZ__ZZ____Z__
ZZZ__Z_Z__Z__________Z__
Z_ZZ__ZZ___Z__________Z_
_Z_Z____Z_ZZ______Z____Z

# CSS (24, 6, 4)
XX__X___X__X_____X__X___
_X_XX_X__X_X__XX________
___XXX_____XX___X_X_X___
_X___XX__XX______X_____X
XX___X_X____X_XX_____X__
______X____X_X___X_XX_XX
_XXX____X__X______XX___X
____XX___X_____X___XXX_X
_X____XX___XX___X_____XX
ZZ______Z_Z_ZZ_Z____Z___
_Z__Z______Z___Z__Z_Z_ZZ
_____Z_ZZ_ZZ_ZZ______Z__
_Z__ZZZ____Z_Z__ZZ______
____ZZZ_Z____Z_____Z_ZZ_
Z__Z____Z_Z___Z_Z__Z___Z
Z__Z___Z_Z____ZZ____Z__Z
__Z_Z__Z_____ZZ_Z_Z_Z___
__Z_ZZ______Z_Z_ZZ_Z_Z__

# CSS (25, 5, 5)
_XX________X___X___X___XX
_X___X___X______XXX______
_X__X_X_X_X__X__X________
_X____XX__X____X_X______X
XX___X___XX_XX_____X_____
______XXX_X_X_______X_X__
___X___X_XX_X________X__X
_X_X__XX______X_X__X__X__
___XXXX___XXX__________X_
__X________XXXX_X__X_X___
_ZZZ__Z__Z____Z_____Z____
__Z__Z_Z__Z____ZZZZ______
____Z__ZZZ_Z_____Z_Z_____
_Z___Z__Z__Z__Z__ZZ_Z____
___Z___Z____Z_ZZ___Z_Z___
__Z_ZZ___Z_ZZZ______Z____
Z_____ZZ___Z____Z_ZZ_Z___
Z_Z_____Z_____Z_Z_ZZ__Z__
__ZZ_Z_Z_ZZ__Z_________Z_
_ZZZ_Z____Z_Z__Z________Z

# CSS (26, 4, 6)
X_________X___X__XX_____X_
X_X___XX__XXX___X_________
XX_X__X_X_X____X_X________
__X____XXX________XX______
_X___XX_XX___X__X__X______
X__XXXXX___X________X_____
X_XXX_X______X_X_____X____
__X_____X_XXXX_X______X___
X___XX_X_XX__X_________X__
___XX___XX_X_X___X______X_
XX_______X__XX__XX_______X
ZZZ___Z_Z___ZZZ___________
Z___Z___Z_______Z_Z___Z___
__ZZ___ZZ____Z_Z_ZZ_______
__Z_Z_Z___Z_Z__Z_Z_Z______
__Z__Z_Z_Z___Z_____Z______
______ZZZ_ZZ__ZZ____Z_____
ZZ_Z_Z_____Z__ZZ_____Z____
_ZZZ_Z_Z___ZZ_________Z___
______Z___ZZ___ZZZ____ZZ__
Z___ZZZ_ZZ_____Z________Z_
_____ZZ______ZZ_ZZ____Z__Z

# CSS (26, 6, 5)
X_________XX________X_XX_X
_X_____X_X____XX____X__X__
XX__X_X_____X_X_X_____X___
______X_XX_X_X___XX_____X_
_X__X__XX_____X__X___X___X
X_______X_____XXX__X__X_X_
XX_X_____X___X___X____X__X
X__XX__X____X____X__X___X_
X_X_XX________X______X_XX_
_____X__X__XX____X_X___XX_
__Z_Z__ZZ________Z__Z_Z___
ZZZ___ZZ_____Z________Z___
______ZZ_Z______ZZZZ______
_____Z___ZZ____Z__ZZ_Z___Z
________ZZZZZ_Z__Z___Z____
____ZZZ__Z___Z_Z_____Z__Z_
Z_ZZ_ZZ_Z_Z__________Z____
_______Z_Z_Z__ZZ__Z___Z_Z_
_Z________ZZ_Z___Z____ZZZ_
_ZZ__Z__Z__Z___Z___Z__Z___

# CSS (30, 4, 6)
_XX_______X________X__XX_____X
_____X_X____X____X___X___X__X_
________X_X___X__XXX_________X
__XX_______X_X__X____X______X_
____XX____X____XXXX___________
_____X____________X__X_XX__X_X
X_____XX__X_X_X______X________
___X____________X_XXX_X____X__
_____X__X____XX________X__XXX_
____X_X______X___X__X__X_X___X
_X_______X_XX_XX__X_X_________
X_X_____X___X_XX________X___X_
___X____XX_______X_XX___XX____
____Z__Z__Z______ZZ_Z________Z
Z______Z_____Z________ZZ___ZZ_
__Z_____Z_____ZZ_Z___Z_______Z
_________ZZ____Z___Z_Z_____ZZ_
______Z_ZZ__Z______Z__Z__ZZ___
___________Z_Z___ZZ___ZZ_Z____
Z__Z____Z__________Z_Z_Z_Z____
Z_____ZZ_____ZZ_ZZ__Z_________
_____ZZ_Z________Z___Z____Z_Z_
___ZZ____________Z_Z__Z_Z__ZZ_
_ZZ_Z________Z__Z___Z____Z__Z_
_ZZZ____Z___Z_Z_____Z____Z____
____Z___Z____Z__Z______ZZ__Z_Z

# CSS (30, 6, 6)
X__X______X___XX___XX_____X___
_X___X_______X______XX_XX____X
_X__X__X___XX___X__X____X_____
XX_X_____X____X_X______X____X_
X_X__XX__X__________X__X__X___
__X_X_______X____X_X_X____XX__
___XX____X___X____XXX____X____
_XX_______X_X_____X_XXX_______
X_X________X___XX_______XX___X
___X_X_XX_____XX_X___________X
______X______X______XXX___XXX_
X_____X_X________XXX_____X___X
____Z___Z______Z_Z_Z_ZZ______Z
ZZ_ZZ____________ZZ____Z_Z____
ZZZ___Z_Z__Z______________Z__Z
___Z___ZZ__Z___Z__Z___Z_____Z_
____Z__Z_ZZ___Z______Z_Z____Z_
______Z___Z___ZZZ___Z___ZZ____
__________ZZZ_____ZZ_Z__Z__Z__
Z__Z__Z________Z____Z_ZZ____Z_
__Z____ZZZ_ZZ___Z________Z____
__Z_ZZ________ZZZ_Z__________Z
Z__Z______ZZ____Z___Z_______ZZ
Z___Z________Z_Z_______ZZ__Z_Z

# CSS (32, 6, 6)
__X___X__X__X__X________XX_____X
_XX____X_____X__X_X_____X____X__
_______X_X______XX___X__XX______
______________X____XX__XX__XX_X_
_____XX____X_XX_X_________X___X_
_X_X__X______X_______X_____XXX__
___X____X______________XX___XXXX
XX______X_XX____X______X______X_
X______X__________X_XX___X___XX_
_____X_XXX_X______X__X______X___
____X__X_X_____X__XX___X_X______
X_X_X____X_X____X__X__X_________
X_______X_____X___XX__XX__X_____
Z__Z____________ZZ________Z__Z__
__Z____Z_Z____Z_____Z_____Z_____
________Z__Z_ZZ____Z___Z_____ZZ_
__Z__________ZZ__Z_ZZ____Z_Z____
_____Z____Z_Z___Z_Z___Z__Z______
__Z___Z___Z_ZZ_____Z___Z_______Z
ZZ__Z__Z_Z__Z______Z_______Z____
Z_______Z____Z_Z___Z_Z__Z_Z_____
ZZ____Z_Z__Z_____Z__Z___Z_______
Z_____Z__Z______Z_____Z_ZZ__Z___
____ZZ_________Z______Z___ZZZ__Z
__Z__Z____ZZ________Z______Z_Z_Z
Z__Z____Z_Z________ZZ__Z____Z___

# CSS (32, 8, 6)
XX___X_X___X_X___X____________X_
XX_________X______X___X_X___XX__
___________X__X_XX_X_XX___X_____
__X_XX_______X__XX________XX____
X____X__X___X_______XX__XX______
X____XX_X_X______X_X___X__XX____
X_____X___X_XX__X_____X_____X_XX
_X_______X_X_____X_X_____X___X_X
_X__X_X_X_X____X_____X______XXX_
X__X__X__X___X_X____XXX_______X_
XX_X_X___X______XX____XX_X______
X____X______XX_X___XX__X___X___X
_________Z__Z___ZZ_____ZZ___Z_Z_
Z__Z__Z_____ZZ_____Z______Z__Z__
Z_____ZZ__Z_Z___Z___ZZ_____ZZ___
___ZZ_______Z_____Z_Z__Z___ZZ___
_Z___Z______________Z_Z_ZZZ_Z___
___Z___Z___ZZ__Z_________ZZZ_Z_Z
____ZZ__Z__Z_Z_________Z__Z_Z_ZZ
__Z______Z________ZZ_Z___Z_Z_Z__
__Z_______ZZ_ZZZ__ZZ______ZZ____
_Z_Z____Z__Z__Z___ZZ_Z______Z__Z
___Z__Z____Z__Z_ZZZ_____ZZ___Z__
_Z_Z__________ZZ_Z___ZZ_Z__ZZ___

# CSS (36, 6, 6)
__X___X_X_______X__X_X____X________X
_____XX__X___________X__XX__X___X___
X_____X______X_____X___X____XX__X__X
_____X____X_______X___X__X_X__XX__X_
X_______X_X_X_X___X___X_________X___
_X__X______________X_X__X__X___X___X
__X___XX___X__X______X__X______X____
_X_X__________X__X___X_____XXX_X____
_____X_X___X___X___X____X_X___X__X__
XX_____X_X______X_X____XX_X_________
X_X____X__XX__X_____X__X_X__________
_X_X__X___X__XX____X__X__________X__
________X_______XXX_X__X_XX_X_______
________X_X_X_____XXX___X____XX_____
__X______X_X______X___X___X____X__XX
__________ZZ____Z_ZZ_________Z_Z__Z_
______Z______Z__ZZ______Z__Z__Z_____
___Z__Z_Z__Z__________Z__Z___Z___Z__
__Z_Z________ZZ______Z_____Z___ZZ___
___Z_Z__Z_Z____ZZ______Z____Z_______
____Z__ZZZZ_Z___Z_____Z_Z___________
Z____Z___Z_____Z____Z_______ZZ__Z_Z_
___Z______Z__Z__Z_____Z__ZZ_Z_Z_____
ZZ_____Z_ZZ______Z_____ZZ_________Z_
_Z__Z______Z__ZZ_______Z_Z______Z_Z_
__Z__Z__Z___ZZZ_______Z_____Z____Z__
______ZZ_Z____Z_______Z________Z_Z_Z
____ZZ_Z___Z___Z________Z__Z__ZZ____
_Z____ZZZ_Z________Z____Z_Z____Z____
___Z_Z_________Z___Z__Z______Z__ZZ_Z

# CSS (36, 8, 6)
XXX_XX_____X________X____X_X________
_X___X_XX__X__XX__X________________X
_X__X____X_______X________XXX__X__X_
_X_XXXX_______________X___X___X_X___
_X____X_XXX________X________X___X__X
X__X_X____X____X__X_____XX___X__X___
XX__X________________XX_X___XX_XX___
__X__XXX_X__X__X_____X____________X_
_____X___X_XXX__XXX_______X________X
__XXX_______X______X____X_X__X__XX__
____X___X__X__________XX__X_X_XX_X__
X_X_____________XX__XX__XX_______X__
_X___X_______X_____X_X__XX_____XX_X_
________XXX_X__X______XX_X___X_X____
_ZZ__Z_____ZZ__ZZ___________Z_______
Z___Z________________Z_Z____Z__ZZ_Z_
_______Z__Z_ZZZ_______ZZ________Z___
____Z_Z____Z__Z__Z_Z_Z______________
Z________Z_ZZ__Z_____Z____Z__Z__Z___
___Z__Z__Z_Z____Z___Z_______ZZ_____Z
__Z______ZZ_Z________Z_ZZ__Z_____Z__
_Z__Z____Z_Z________Z__Z_________ZZ_
_Z______ZZ___Z__Z__Z_Z____ZZ________
ZZ____Z_ZZ__Z___________Z______Z__Z_
Z___Z_____Z____ZZ_ZZ__________Z___Z_
Z______Z____ZZ___Z_______Z__Z____Z_Z
_Z____Z__Z_____Z_ZZ__Z_____Z_______Z
________Z_ZZ_____Z__Z___ZZ_Z_Z______

# CSS (38, 8, 6)
_X_________X__XXX_____X_____X___XX____
_____X_X________X__X__X____X_X_____XX_
X___X___X_X__XX____X_________X______X_
__X_____X________X___X____X_XX_____XX_
_X______XX__X_X______X_______X__X_X__X
__X______XXX__X___X_X__X______X___X___
__X___X_______X___X_____X_X__XX___XX__
___X_X_____X__XX___X___X__X___X______X
X_____X______X______XXX_____X__X__X__X
X__________________X__X_X_____XXXXXX__
___X_XXXX__X_X____X_X______________X__
X_X_XX_____X_____________XXX__X_X_____
_X___X__XX___X_XX__X____X__________X__
X___X___X___X______XX_X_X_X______X____
X_XX_X______X_X_____________X_X_XX____
Z_____Z_________Z__ZZ_Z__Z_Z__Z_______
Z_______Z______Z__Z__Z_Z_____Z__Z___Z_
___Z___Z_____Z___Z__Z___ZZ____Z_____Z_
________Z_Z_Z____Z___Z_____Z__Z____Z_Z
_Z____Z___ZZZ___ZZ________Z_Z_______Z_
___Z_______Z_Z__Z_Z__Z_______ZZ___Z__Z
_ZZ_______ZZZ__Z__Z___Z________Z____Z_
______Z_Z_Z____ZZ______ZZ__ZZ___Z_____
___Z_____Z_______Z____ZZ_____Z_Z_Z_ZZ_
_Z__Z__ZZ_Z________Z____Z___Z_ZZ______
_Z_ZZ___Z___Z__Z____Z______Z______ZZ__
___Z_Z__Z____ZZ_Z_____ZZ_____Z__Z_____
__Z__ZZ___ZZ_Z____________ZZ__Z__Z____
_____Z__Z___Z______Z____Z___ZZ_ZZ___Z_
Z_Z__Z___Z_Z__ZZZ___________Z____Z____

# CSS (38, 10, 6)
_______X____XXXX______X___XX____X___X_
_____XX_____XX__XX__X________X_X__X___
__X_____X______XXXX______X_X__XX______
X__X__X_______X__X____X__X__XXX_______
__X_X__________X___X__X___X__X_XXX____
____X__X___X_____X__X_____XX__XX__X___
XX______X______X__X_X______X____XXX___
___X___X_____X___XX_X_________X_X___XX
____XX_______XX___XX____XX__X_______X_
X__X_X__X_XX_X__X_X__________X________
______XX_X_________XXXX__X________X__X
___X_________X_____X_X___X_X_X__XX_X__
X______X_X______X_X____X____XX_X___X__
________X__X________X____XX_X_XX___XX_
________Z_ZZZ_Z_____Z____Z___ZZZ______
____Z_____________Z_Z_______ZZ_ZZ_Z_Z_
_Z___ZZ_Z_Z_______Z___Z_________Z__Z__
_Z__Z_______Z_ZZ_Z______________ZZ____
Z___Z_ZZ___Z____Z_Z_______Z___________
____Z_ZZZ____ZZZ________Z__________Z__
____ZZ__Z__Z___Z_____ZZZ_____Z________
___________Z_Z_ZZ______Z____ZZ____Z__Z
ZZ_____Z_____Z________Z__Z__Z__Z____ZZ
ZZZ___Z____Z_____Z____ZZZ___________Z_
___ZZ__Z__Z__Z___Z_Z________ZZ_Z______
Z_Z___Z__Z_________Z_Z____ZZZZ________
Z_Z______Z____Z_Z_Z____Z_Z___Z______Z_
______Z_Z__Z_Z_____Z____Z____Z_Z_Z__Z_

# CSS (38, 12, 6)
_________XXX_X_X_X___XX_____X_______X_
X_____XX_____X_____X_X_____X__X__X___X
____X_XX__________X__XXX_____X___X__X_
________X_X____XX_____XX____XX__X__X__
___X________XXX__________XX_X__X_XX___
__X_X_______X___XX___X__X______X__X_X_
__XX___X_X_____X__X___XXX____________X
______XXXX_XXX_____X_________X_______X
X____X___________XX___X___X___XX_X___X
_____XX_X__X_____XX_X__X_X_X__________
_X_X______X__X__X____XX__X_X____X__X_X
__X____X_X_______X________XXX_X___X_X_
_X_X_____X_______X_XX__XXX__X___X_X___
_______________Z_ZZZ__ZZ_Z___Z___Z__Z_
____________Z__ZZ__ZZ__Z______ZZZ___Z_
_________Z___ZZ______ZZ_Z_Z__Z___Z___Z
Z______Z___Z___Z_____Z_ZZ___Z___ZZ____
________Z__Z____Z____ZZZ__ZZ______Z_Z_
Z__Z______Z_ZZZZZ_________Z_Z_________
_Z__Z___Z_____Z___ZZZZZ__Z____________
__Z_Z_______Z___ZZ___ZZ____ZZZ________
_____Z_Z______Z__________ZZ___ZZ_ZZ__Z
__Z____Z__Z__ZZ_Z_ZZZ________________Z
_Z____Z_Z____Z_ZZ__Z___Z___Z______Z___
____Z__Z_Z____ZZZ____ZZ________ZZ_____
_Z____Z__ZZ____ZZZ________Z____Z_ZZZ__

# CSS (40, 10, 6)
XX__X__________X__XX_X_______XX___X_____
________X___X_XX_XX_____X____X___XX_____
__X_X____X___XXX___X_X____X___________X_
_X___X_X_X____X___X__X___XX_X___________
_X________X___X_X___________XXXXXX______
X_XX__X_X________X__X_X________XX_______
_X_X____X__XXX__X___X_____X__X________X_
____X_____X__X___X____X________XX_XXX___
__X__________X__X__XX____X______X___XX_X
_X_X________X_X_____X___X_X_X_X_____X__X
__X_____XX_X__X___X__X_____X___XX_______
________XX___X___XXX_____X____XXXX______
___XXXXXXX__X________X__________X_X_____
_X_____X__X_X____X_____X_X_X_____XX__X__
_X___X________XX_X______XX____X_X___XX__
_____________Z__ZZ___Z___Z____Z_ZZ__ZZ__
__Z__ZZ________ZZZ_Z_____Z_____Z______Z_
_Z____Z___Z___Z__________ZZ____Z__Z_Z___
_Z_____ZZ_Z_Z___________Z____Z__Z__ZZ___
_Z_________Z__Z_____Z___Z__Z____ZZZ___Z_
ZZ___________Z_Z_ZZ________Z_____Z_ZZ___
___Z__Z_Z___Z___ZZ__Z________ZZZ________
Z_Z____Z____________Z__Z__Z___Z_Z__Z___Z
______Z_Z__Z_____Z_____Z_Z_ZZ__Z_______Z
_____Z__Z_____Z______Z__Z___ZZ__Z__Z___Z
__ZZ_Z_______ZZ_____Z_ZZ_________Z____ZZ
____Z____Z______Z______ZZZ_Z_Z______ZZ__
Z____Z__Z__________Z__ZZ__Z____Z_Z_ZZ___
__Z_Z_______ZZ______Z____ZZZ__Z__Z______
___Z_Z_____ZZ_____Z_____Z__Z_Z__Z___Z___

# CSS (40, 12, 6)
_X__X____X_X____X_____X_X_X_X___X_______
X_____X_X____X___XX___X______X__X___X___
_X__________X_X___XX_XX_________X____X_X
X____XX_X_____X________X____X__X____XX__
________X_X__X______X_X__X__X_____X_XX__
____X____X_X______XXX__________X__X__XX_
_____X_____X___XX____XX______X_X__X_X___
__XX______XX________X________XXX__X____X
X____XX___X_X____X_X____X________X_____X
XXX__X_____________XX_____X____XXX______
_X_X__X__X_______X_____XX___X________XX_
X_____X_____X__XX______X___X____XX___X__
X____X_X_X_______X____X___X_________XX_X
_XX___XXX_____X______X__XX_X_______X___X
ZZ_____Z_Z_Z_______Z___Z__Z__Z________Z_
__Z_Z___Z___Z__Z_Z_Z_Z______Z__________Z
Z_______Z__Z__Z____Z_Z_ZZZ_____________Z
_Z_Z__________Z________ZZ______Z__Z__ZZZ
ZZ___Z______ZZ_________Z______Z_Z___Z__Z
ZZ______Z___ZZZ_Z_____Z____ZZ___________
___ZZ___________ZZ____ZZ__Z___Z__Z___Z__
_Z__Z_Z_________Z_Z__ZZ____Z_____Z__Z___
__________Z__ZZZ_Z_____Z__ZZ_Z__Z_______
_____Z_Z_Z___Z______Z___Z___Z__ZZ____Z__
__ZZ______Z_____Z__ZZZ__Z________Z_Z____
Z_Z__Z_______________Z_Z___Z___Z___ZZZ__
Z____Z___Z_Z_____ZZ___Z___Z______ZZ_____
_________ZZ___Z____ZZZZZ___Z_Z____ZZ____

# CSS (42, 8, 6)
_XX_X____X___________XXX__X_________X_X___
X_X__________________X_XX__X__X______X___X
_______X__X_X_X_____X____X__X_____X__X__X_
_X_____X_X_X____X_X_X_X_____X_____________
X____X__XX____X_X_______X____X__________XX
_X_X_X____X_XX___X___XX___X_______________
____________X__X__X__XXX__XX______X__X____
_______XX_____X___XX___X_____X______X_XX__
_____XX_X______X____XX_X__________X_X___X_
X______________X_X_____X_XX____X_XX___X___
___XX_X____X__X_______XX_______XX___X_____
_____X_________X_X____X__X_X____XX______XX
___X____X__XX__________X_XX_____X___X__X__
_XX__XX___X_____X__X_________X_____X____X_
_XX____XX___X___X_X________X_____X______X_
__________X__XX____X_X________X_X_X____X_X
_____X_______X__XX___X_XX_______X__X_____X
______ZZ_Z__Z_Z_____Z_Z_________Z__Z______
_________ZZZ______Z_Z_Z_____Z__________ZZ_
Z_Z____Z_____Z_ZZ_Z___Z________Z__Z_______
___ZZZ_____ZZ_Z_Z____Z_________________Z_Z
Z_________Z_________Z__Z_ZZ_ZZ__Z_________
____Z_____Z_____ZZ_____Z____Z______Z_Z_ZZ_
________Z___Z___Z__Z__ZZ_________Z___Z_Z__
_____ZZ_____Z_ZZ__Z__Z____Z_Z________Z____
_______Z__Z_Z__Z___Z_ZZZ__________Z___Z___
__ZZ______Z___ZZ___Z_________________ZZZZ_
Z__________Z___Z_ZZ__Z__Z__Z____Z_____Z___
ZZ_Z___________Z____Z________ZZ__ZZ_Z_____
_Z_Z_Z_ZZ_______Z_____Z_Z__________Z_Z____
__ZZ__Z_Z__________Z______ZZ_Z___Z_____Z__
_________Z____Z____Z__ZZ__Z__Z_Z__Z______Z
Z_Z__ZZ_Z____________Z_Z_____Z__Z___Z_____
Z__Z__Z___Z_____Z_Z___________Z___ZZ___Z__

# CSS (42, 10, 6)
______XX____X_______X________XX_XX_____X_X
___X_X_______X_____X_XX_X___X____X___X____
X___X_X_XX___X_____X_____________X____X__X
__X___X_X__X____X_X__X_______XXX___X______
__X______X_____XX_X___X_X__X_X________X___
X______X____X__X____X_______X______XX_X__X
__X__X_X___X_X_X___X__X____X______X_X_____
_XXX______X____XX___XX____X__X___X________
X____X__________X_XX______X__X_____X_XXX__
___X_X_X__XX___X_XX___X__X_______________X
__X______X_XXXX___X____X____X____XX_______
____X___X_X__XX_____XX___X____X_________XX
XXXX________X_X_____X__X_______X_____X__X_
X____X______X______X_X__XXX___X_X_X_______
_X___XX__X__X___X__X___X____X______X____X_
___X_____X__________X___X_XXX___X___XX__X_
Z_Z_________Z____________Z__ZZ____Z__Z_Z_Z
_____Z__Z__Z_Z_Z__Z_____Z___ZZ_________Z__
_Z___Z____Z____Z_______Z__Z_Z_________ZZ_Z
_Z___ZZ__Z_Z___Z____Z_Z______ZZ___________
ZZZZ__________Z_____Z_Z__Z_________Z_____Z
Z_Z____Z___Z_ZZ______Z__Z_____Z______Z____
__ZZZ________Z___Z_____ZZ_________ZZZZ____
_Z______ZZZZZ__ZZ_Z_____________Z_________
Z_Z__ZZ__Z_____ZZZZ________Z_____Z________
________Z__Z____ZZ___ZZ_____Z_ZZ_____Z___Z
Z__Z_____Z__Z____Z___Z_____Z____Z__ZZ___Z_
___Z_Z__Z__Z___Z___Z______Z___ZZ_Z____Z___
_____ZZ__ZZ_Z_Z_Z__Z_______Z______Z___Z___
__ZZ__Z__Z___Z__ZZ___Z______Z___Z_____Z___
_ZZ_Z___Z___Z_____Z______Z______Z_ZZ____Z_
___Z_____ZZ_______Z_Z_Z___Z_____Z__ZZ_Z___

# CSS (42, 12, 6)
XX___X__X__X_________X_XX____X____X_______
X_X________X__X_X_X_________X_X_______X_XX
__X_X_X___________XX___X_____X__X___X____X
X_______X_X__X__XXX__________XXX________X_
_____X______X_X__XX__X_X_X__X___XX________
___X_____X__X___X_X____XX__X____X_____X__X
____X___X_X___X_X_X_______XXX_______X___X_
X___X____X___X_X_______X____X_X___X___X_X_
__X_______X___X_____X______X_XX____X__XXX_
___X___X___XX__X______X______X______XXX__X
___XX_X__________X__X_X_XX____X___XX______
_______XX___________X____X____XX_XX_XX__XX
____XX____XX______X_______X____XX_X____XX_
_X______X___X_____X____X_XXX________XX___X
X___XXXX____X_X__X_________X____X_______X_
_____Z_______Z____ZZ__Z___Z__ZZ_ZZ________
___Z_______Z__________ZZ_ZZ___ZZ____Z__Z__
_____Z____Z_______Z_ZZ__ZZ_Z______Z_Z_Z___
Z____Z__Z__Z___________________ZZ______ZZZ
__ZZ___________ZZ_ZZ_ZZ___________ZZ_____Z
Z_____Z__Z__Z_ZZ___Z______Z____Z__Z____Z__
_____Z_______ZZZZ__ZZ___Z_Z_________Z_____
_Z_____Z_Z__Z___Z_Z_ZZ_ZZ______________Z__
__Z_Z_Z_____ZZ_Z_______Z__________ZZ___ZZ_
___Z_____ZZ_Z__Z__ZZ_____Z______Z______Z_Z
_Z__ZZ______ZZ___Z___ZZ_Z___________ZZ____
ZZZ__Z_Z_ZZ_____ZZ______Z________________Z
Z_ZZ____Z________Z__Z__Z_____Z_____ZZZ____
_______Z_ZZ___Z____Z_Z__ZZ__Z__Z____Z_____
Z__Z_Z_______Z__Z____Z_Z____ZZ____Z__ZZ___

# CSS (44, 8, 7)
_____XX____X_________X_____X______X_X_____X_
_X__X_X_____X_____X________XX__________X__X_
____________XX_X___X____________X_XX_X_X____
X_XX_X___________________X__XX_______XXX____
__________X____X__X____X_X_________X_XX_X___
___X____X____________XX_X___X___X_X_________
X__X__X_____XXX__X____________X__________X_X
___X__X_XX_____X___X_X_____X_____________X_X
___XX__X___XX____XX___X______XX_____________
_X_________X_________X__X____X__XX_XX______X
__XX_____XXX____X__X__________XX__________X_
__X___XX_XX________X_____X__________XX___X__
X_______X_______XX__XX____X_X_X__X__________
_X_____X____X__X___X__XXXX_________X________
______X___XXX_XX__X__X_____________X___X____
_X_______X_X__X____X___X___X__X_X_______X___
__X____X____XX___X____XX__X_X________X______
____X__XX____X_XX_________XX______X_______X_
___Z_______Z_________Z_____ZZ_____Z__ZZ__Z__
_Z__Z___Z______ZZZ____Z_______Z____Z________
Z__Z________ZZ_______Z___Z____________Z__ZZZ
__Z_ZZ_____ZZ_____Z________Z_______ZZ______Z
Z_____Z_Z_______Z_______Z_Z________Z_Z____Z_
__ZZ_____Z__Z__Z__________ZZZ_________Z___Z_
____Z_________Z_______Z___Z__ZZ___ZZ__Z___Z_
____Z_____Z_____Z_______Z__Z__ZZ__ZZ_____Z__
_____ZZ__Z_____ZZZ____Z____Z______Z___Z_____
Z_______ZZ___Z_____Z_Z___________Z_ZZZ______
________ZZ__Z__Z______ZZ__Z_________Z_____ZZ
Z__Z_____Z_______Z__Z_Z__Z____ZZ______Z_____
__________Z____ZZZZ____Z______________ZZZZ__
___Z______ZZZ_Z_ZZ_______Z_Z____Z___________
Z___Z___Z____Z_Z__Z_Z_Z______Z___Z__________
__Z_______Z_______Z_____ZZ___Z_Z__Z__Z____Z_
______Z__Z____Z_____Z_______Z__ZZ___ZZ__Z___
_______Z_______ZZ_Z_ZZ______Z__________Z_ZZZ

# CSS (48, 6, 8)
____X_X___________X__X_________X_X_______X___X__
_X_XXX_X___X_________________XX_X_________X_____
_____X________________X_______X_XX__XX___X_X_X__
____________X__X_X______X________X_X__XX_____X_X
__XX_X___X__X___________X_X__X_____X______X_____
X___X__X___XX___XX___________________XX____X____
____X____X_XX___X__X___X________X_________X____X
_____________________XXXX_X___X____X__X_XX______
XXXX_X______XX____X____________X_X______________
XX___X_____X____X______XX________X________X_X___
______X_X__X__X______X__X________X___XX_____X___
X___XX_XX__________________X_X___X_________X___X
X____X_________XXXX________X__X______X___X______
X___X______X________X__X____X____X__X__X_______X
_X___X___X________X_XX__X___X________XX_________
___X____XXX__X__X_______X___________X___X_X_____
X____XX___X___________X__X_XX_______XX__________
_______XX__XX________X____X____________XX_X___X_
_________X_X______X______XX________XXX_______X_X
_____XX____X__X__X__________X__X___X________X_X_
_____________XX_X_X______XXX______X___X_X_______
ZZ___Z_______Z___ZZ____ZZ________Z_____________Z
___ZZ___Z__________Z_____Z__Z__Z___Z__Z_______Z_
___Z_________Z___Z_____Z____Z___Z_ZZ_Z______Z___
____Z___ZZZ_ZZZ_____________Z___Z____________Z__
_______________________Z_____ZZ_Z_Z_____ZZZ__Z_Z
_____Z____Z__Z______Z__Z_ZZZ____Z_____________Z_
__Z_ZZ__Z________Z______ZZZ__________________Z_Z
Z___Z_Z__Z__Z__Z__________________ZZ__Z___Z_____
_____Z____ZZZ_________Z___________Z__Z_ZZ__Z____
_____ZZ_____Z_Z________________ZZZ____ZZ_Z______
__Z___Z_____ZZ______Z__Z____Z____Z______Z__Z____
_____Z_Z____Z_______Z___Z___Z_Z___________Z_Z_Z_
___________Z___________ZZZ_Z_Z______ZZ_Z____Z___
_________Z_Z______Z_Z____Z_______Z___Z____Z_Z__Z
Z_______Z__Z_Z_Z___________Z_ZZ_Z__Z____________
___Z__Z_________Z__Z_______Z_Z___Z_____Z___Z__Z_
Z___ZZ_Z___Z_________ZZ_________Z___Z_____Z_____
_Z___Z_______Z____ZZ______Z_Z___________Z____Z_Z
__Z__Z______Z____Z_Z_ZZ______Z_Z____________Z___
_ZZ_ZZ_ZZ_______ZZ_________Z_____Z______________
Z_____ZZ___Z________Z_ZZ_______Z_____Z______Z___

# CSS (50, 6, 8)
___X___X_X_________X_X_____XX______________XX___X_
___XX__X__X_____________X_XX__________________X__X
___X_X____X_______X___________XX______X________XXX
___XX____________X_X_XX______X________X______X__X_
_X_X____X______X___X___XX_________________X_X_X___
___XX__XX__X_X_________________X______X_X______X__
XX_X_X___X___________X___X________X________X___X__
X____X___________X____X_X_________X_______XX__XX__
_____XX______X_______X__XX____X______X___X_______X
__X_X_X__X_____X_X_______________XXX______X_______
___X___X___X__X_X______X____X______X_____X_____X__
X_______X_____X___X_______X__X_________X___XX__X__
X__X___XX_________XX_X______________X_X_______X___
_______X__XX________________XX_____XXX_____X_X__X_
________X___X_______XX___X__X___XX______X__X___X__
X__X_XX____________XX_______X_X________X_____X___X
_____X_________X__XX_________________X_XX___X___XX
____X_X_X___________X_____X_X_X____X_X____XX______
_X________________X___X___X_____XX_X____X__X__XX__
____X____X_____X___XXX________XX_______X________XX
X_________X_____X_XX____________XXX___X_____X__X__
_____X__X____X____XX__X__X_____XX_____X_______X___
____Z____ZZ_Z_________Z______Z______Z_Z____Z______
__Z___Z___Z___Z____Z___ZZ_Z___________________Z_Z_
Z______Z__Z________Z_________Z________Z_Z___Z__ZZ_
________Z__Z_Z_________Z__Z__Z__Z_____ZZ_________Z
ZZ__Z___Z_______Z___Z_Z_Z________________ZZ_______
ZZ____________________ZZ___ZZZ___Z________Z___Z___
__Z_____Z____Z____ZZ________Z_Z____ZZ________Z____
___Z__________________Z__Z__ZZ_Z_____Z_Z_____ZZ___
___Z_____Z___Z__Z_Z___Z_Z_______ZZ_____Z__________
___Z_Z_____________Z_ZZ_____Z___________Z__ZZ_Z___
__ZZ__Z____Z_______Z_____Z_______Z_Z_____Z_______Z
Z_____Z________Z_____Z_ZZ___Z_______________Z___ZZ
______Z____ZZZZ_Z____________Z__Z__Z_________Z____
____Z_____Z____Z__Z__Z_Z_Z__Z__________ZZ_________
__Z_Z_____________Z__Z___ZZ__________Z___Z_Z___Z__
Z__Z_______Z_Z____Z________Z___Z_____________ZZ__Z
__Z__Z_______Z____Z_______Z____Z__Z_ZZ___________Z
__Z__Z___Z_______ZZZ__Z______Z_______Z____Z_______
_Z______Z____Z___Z______Z__________ZZ_Z_______ZZ__
Z____Z_______________Z______Z__Z_Z_____Z__Z_Z__ZZ_
__Z_Z________Z_Z______Z_________ZZ__ZZ________Z___
_Z___ZZ__Z__Z___________ZZZ__Z__________________Z_
\end{lstlisting}

\subsection{SWEL codes}
\lstinputlisting{figures/annealed_codes_swel.txt}

\clearpage
\section{Automorphisms of the $\params{20,6,4}$ SWEL code}
\label{sec:automorphisms}

Here we provide automorphisms of the $\params{20,6,4}$ SWEL code that we identified with a non-exhaustive AI-assisted search for matrix automorphisms~\cite{sayginel2025faulttolerant}, and verify that these automorphisms generate 2304 SWAP-transversal logical Clifford gates (modulo Pauli gates).
This Python code runs with \texttt{qldpc==0.3.2}.
\begin{tcolorbox}[boxsep=-3pt]
\begin{lstlisting}[language=python]
import itertools
import numpy as np
import qldpc
import stim
import sympy.combinatorics

stabilizers = """
X__X_______X_XX___X_
_X______X_X__X__X_X_
_XXXXXX___________XX
__XX______XX___XX___
_X_X__XX____XX_X___X
X_____X__XX___X_XX_X
X___X_X_____XX_XXX__
Z__Z_______Z_ZZ___Z_
_Z______Z_Z__Z__Z_Z_
_ZZZZZZ___________ZZ
__ZZ______ZZ___ZZ___
_Z_Z__ZZ____ZZ_Z___Z
Z_____Z__ZZ___Z_ZZ_Z
Z___Z_Z_____ZZ_ZZZ__
""".strip().splitlines()
code = qldpc.codes.QuditCode.from_strings(stabilizers).to_css()
assert code.is_swel and code.get_code_params() == (20, 6, 4)

generators = [
    stim.Circuit("H 0 1 2 3 4 5 6 7 8 9 10 11 12 13 14 15 16 17 18 19"),
    stim.Circuit("S 0 1 2 3 4 5 7 10 12 13 14 15 16 18 \n S_DAG 6 8 9 11 17 19"),
    stim.Circuit("SWAP 4 19 5 6 7 17 9 12"),
    stim.Circuit("SWAP 1 13 1 8 1 18 2 3 2 15 2 11 6 9 6 19 6 17 7 12"),
    stim.Circuit("SWAP 0 2 5 7 6 17 10 13 14 15 16 18 \n I 19"),
]

def get_permutation(tableau: stim.Tableau) -> sympy.combinatorics.Permutation:
    """Convert a Clifford tableau into an (exponentially large) Sympy Permutation."""
    # Convert the tableau into a 2k x 2k symplectic matrix, grouping blocks in such a
    # way that the embedding is a homomorphism with respect to matrix multiplication:
    # M(a.then(b)) == M(a) @ M(b).
    x2x, x2z, z2x, z2z, _, _ = tableau.to_numpy()
    matrix = np.block([[x2x, x2z], [z2x, z2z]]).astype(np.uint8)

    # Collect all length-2k bitstrings into a matrix of column vectors.  Each
    # column represets one Pauli string; find how the tableau permutes them.
    bitstring_iterator = itertools.product([0, 1], repeat=len(matrix))
    old_bitstrings = np.array(list(bitstring_iterator), dtype=np.uint8).T
    new_bitstrings = (matrix @ old_bitstrings) % 2

    # Convert each bitstring into an integer (big-endian, matching itertools.product).
    bit_to_int = 1 << np.arange(len(matrix), dtype=np.int64)[::-1]
    permutation = bit_to_int @ new_bitstrings  # the permutation in one-line notation
    return sympy.combinatorics.Permutation(permutation.tolist())

tableaus = [qldpc.circuits.get_logical_tableau(code, gen) for gen in generators]
permutations = [get_permutation(tab) for tab in tableaus]
group = sympy.combinatorics.PermutationGroup(permutations)
print(group.order())
\end{lstlisting}
\end{tcolorbox}

\clearpage
\section*{Disclaimer}

This paper was prepared for informational purposes with contributions from the Global Technology Applied Research center of JPMorgan Chase \& Co. This paper is not a product of the Research Department of JPMorgan Chase \& Co. or its affiliates. Neither JPMorgan Chase \& Co. nor any of its affiliates makes any explicit or implied representation or warranty and none of them accept any liability in connection with this paper, including, without limitation, with respect to the completeness, accuracy, or reliability of the information contained herein and the potential legal, compliance, tax, or accounting effects thereof. This document is not intended as investment research or investment advice, or as a recommendation, offer, or solicitation for the purchase or sale of any security, financial instrument, financial product or service, or to be used in any way for evaluating the merits of participating in any transaction.

\end{document}